\title{Polar codes for the $m$-user multiple access channels}
\author{
        Rajai Nasser
}
\date{\today}
\newtheorem{mydef}{Definition}
\newtheorem{mycor}{Corollary}
\newtheorem{myprop}{Proposition}
\newtheorem{mythe}{Theorem}
\newtheorem{mylem}{Lemma}
\newtheorem{mynot}{Notation}
\newtheorem{myrem}{Remark}
\newtheorem{myconj}{Conjecture}
\begin{document}


\begin{titlepage}
\vspace{4cm}
\begin{center}
\LARGE \textbf{Polar codes for the $m$-user multiple access channels\\}
\vspace{4cm}
\LARGE{School of Computer and Communication Sciences \\Ecole Polytechnique F\'{e}d\'{e}rale de Lausanne\\}
\vspace{4.5cm}
\LARGE \textbf{Semester Project \\ Spring 2010-2011}\\
\end{center}
\vspace{6cm}
\begin{tabbing}
By: \quad \quad \quad \quad \quad \= Rajai Nasser\\
\\
Supervised by:  \> Prof. Emre Telatar \\
\> Eren \c{S}a\c{s}o\u{g}lu\\
\\
Submitted on: \> \;\;10.06.2011
\end{tabbing}
\end{titlepage}

\section*{Abstract}

Polar codes are constructed for $m$-user multiple access channels (MAC) whose input alphabet size is a prime number. The block error probability under successive cancelation decoding is $o(2^{-N^{1/2-\epsilon}})$, where $N$ is the block length. Although the sum capacity is achieved by this coding scheme, some points in the symmetric capacity region may not be achieved. In the case where the channel is a combination of linear channels, we provide a necessary and sufficient condition characterizing the channels whose symmetric capacity region is preserved upon the polarization process. We also provide a sufficient condition for having a total loss in the dominant face.

\section{Introduction}

Polar coding, invented by Ar{\i}kan \cite{Arikan}, is the first low complexity coding technique that achieves the symmetric capacity of binary-input memoryless channels. The idea is to convert a set of identical copies of a given single user binary-input channel, into a set of almost extremal channels, i.e. either almost perfect channels, or almost useless channels. This phenomenon is called \emph{polarization}.\\

Ar{\i}kan's technique was generalized in \cite{SasogluTelAri} for channels with arbitrary input alphabet size. The probability of error of successive cancelation decoding of polar codes was proved to be equal to $o(2^{-N^{1/2-\epsilon}})$ \cite{ArikanTelatar}.\\

In the case of multiple access channels, we find two main results in the literature: (i) E. \c{S}a\c{s}o\u{g}lu et al. constructed polar codes for the two-user MAC \cite{SasogluTelYeh}, (ii) E. Abbe and E. Telatar constructed polar codes for the $m$-user MAC with binary input \cite{AbbeTelatar}. In this project, we combine the ideas of \cite{SasogluTelYeh} and \cite{AbbeTelatar} to construct polar codes for the $m$-user MAC whose input alphabet size is a prime number (or a power of a prime number).\\

In our construction, as well as in both constructions in \cite{SasogluTelYeh} and \cite{AbbeTelatar}, the sum capacity is preserved upon the polarization process but some points in the symmetric capacity region are not always achieved by the polar coding scheme. A part of the symmetric capacity region is lost by polar coding. In this project, we study this loss in the special case where the channel is a combination of linear channels (this class of channels will be introduced in section 6).\\

In section 2, we introduce the preliminaries for this project. We describe the polarization process in section 3. The rate of polarization is studied in section 4. Polar codes for the $m$-user MAC are constructed in section 5. The problem of loss in the capacity region is studied in section 6. Finally, we conclude this project in section 7.

\section{Preliminaries}

\begin{mydef}
A discrete $m$-user multiple access channel (MAC) is an ($m+2$)-tuple $P=(\mathcal{X}_1,\;\mathcal{X}_2,\;...,\;\mathcal{X}_m,\;\mathcal{Y},\;f_P)$ where $\mathcal{X}_1,\;...,\;\mathcal{X}_m$ are finite sets that are called the ``input alphabets'' of $P$, $\mathcal{Y}$ is a finite set that is called the ``output alphabet'' of $P$, and $f_P:\mathcal{X}_1\times\mathcal{X}_2\times...\times\mathcal{X}_m \times \mathcal{Y} \rightarrow [0,1]$  is a function satisfying $\forall(x_1,x_2,...,x_m)\in\mathcal{X}_1\times\mathcal{X}_2\times...\times\mathcal{X}_m,\;\displaystyle\sum_{y\in\mathcal{Y}} f_P(x_1,x_2,...,x_m,y)=1$.
\end{mydef}

\begin{mynot}
We write $P: \mathcal{X}_1\times\mathcal{X}_2\times...\times\mathcal{X}_m \rightarrow \mathcal{Y}$ to denote that $P$ has $m$ users, $\mathcal{X}_1,\;\mathcal{X}_2,\;...,\;\mathcal{X}_m$ as input alphabets, and $\mathcal{Y}$ as output alphabet. We denote $f_P(x_1,x_2,...,x_m,y)$ by $P(y|x_1,x_2,...,x_m)$ which is interpreted as the conditional probability of getting $y$ at the output, given that we have $(x_1,x_2,...,x_m)$ at the input.
\end{mynot}

\begin{mydef}
A code $\mathcal{C}$ of block length $N$ and rate vector $(R_1,R_2,...,R_m)$ is an $(m+1)$-tuple $\mathcal{C}=(f_1,f_2,...,f_m,g)$, where $f_k: \mathcal{W}_k=\{1,2,...,\lceil\alpha^{NR_k}\rceil\}\rightarrow\mathcal{X}_k^N$ is the encoding function of the $k^{th}$ user and $g:\mathcal{Y}^n\rightarrow\mathcal{W}_1\times\mathcal{W}_2\times...\times\mathcal{W}_m$ is the decoding function. We denote $f_k(w)=\big(f_k(w)_1,...,f_k(w)_N\big)$, where $f_k(w)_n$ is the $n^{th}$ component of $f_k(w)$. The average probability of error of the code $\mathcal{C}$ is given by:
\begin{align*}
P_e(\mathcal{C})&=\sum_{(w_1,...,w_m)\in\mathcal{W}_1\times...\times\mathcal{W}_m}
\frac{P_e(w_1,...,w_m)}{|\mathcal{W}_1|\times...\times|\mathcal{W}_m|}
\\
P_e(w_1,...,w_m)&=\sum_{\substack{(y_1,...,y_N)\in\mathcal{Y}^N\\g(y_1,...,y_N)\neq(w_1,...,w_m)}}\prod_{n=1}^N P\big(y_n|f_1(w_1)_n,...,f_m(w_m)_n\big)
\end{align*}
where $\alpha$ is a pre-determined real number according to which the rate of information is measured. If $\alpha=2$, the rates are expressed in bits, and if $\alpha=e$, the rates are expressed in nats.
\end{mydef}

\begin{mydef}
A rate vector $R=(R_1,...,R_m)$ is said to be achievable if there exists a sequence of codes $\mathcal{C}_N$ of rate vector $R$ and of block length $N$ such that $P_e(\mathcal{C}_N)$ tends to zeros as $N$ tends to infinity. The capacity region of the MAC $P$ is the closure of the set of all achievable rate vectors.
\end{mydef}

\begin{mythe}
(Theorem 15.3.6 \cite{Cover}) The capacity region of a MAC $P$ is given by the closure of the convex hull of the rate vectors $(R_1,...,R_m)$ satisfying
$$R(S) \leq I[S](P)\;\;for\;all\;\;S\subseteq\{1,...,m\}$$
for some probability distribution $P(y|x_1,...,x_m)p_1(x_1)...p_m(x_m)$ on $\mathcal{X}_1\times...\times\mathcal{X}_m\times\mathcal{Y}$.
where $R(S):=\displaystyle\sum_{k=1}^{l_S} R_k$, $X(S):=(X_{s_1},...,X_{s_{l_S}})$ for $S=\{s_1,...,s_{l_S}\}$ and $I[S](P):=I(X(S);YX(S^c))$. All the mutual informations are calculated using logarithm base $\alpha$.
\end{mythe}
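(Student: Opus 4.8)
The plan is to follow the classical two-part argument for the MAC: an achievability proof by random coding with joint-typicality decoding, and a converse based on Fano's inequality together with a single-letterization step that introduces a time-sharing variable.

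For achievability, fix a product distribution $p_1(x_1)\cdots p_m(x_m)$ and a rate vector $(R_1,\dots,R_m)$ with $R(S) < I[S](P)$ strictly for every nonempty $S\subseteq\{1,\dots,m\}$. For each user $k$, generate $\lceil\alpha^{NR_k}\rceil$ codewords independently, each drawn i.i.d.\ from $p_k$; the decoder looks for the unique message tuple whose codewords are jointly typical with the received $y^N$. Decompose the error event into: (a) the transmitted codewords are not jointly typical with $y^N$, which vanishes by the AEP; and (b) for some nonempty $S$, the codewords are the true ones off $S$ but some competing choice on the coordinates in $S$ is jointly typical with $y^N$. Because the inputs are independent, for fixed $S$ a competing tuple agreeing on $S^c$ is jointly typical with probability at most $\alpha^{-N(I(X(S);YX(S^c))-\varepsilon)}$, and there are fewer than $\alpha^{NR(S)}$ such tuples, so a union bound drives term (b) to zero as soon as $R(S) < I[S](P)$ for all $S$. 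Averaging over random codebooks yields a single code with vanishing $P_e$; taking the closure of the union over all such rate vectors gives the polytope of non-strict inequalities for this fixed input distribution, and time-sharing between codes, followed by a final closure, realizes the convex hull, which is the claimed region.

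For the converse, let $(R_1,\dots,R_m)$ be achievable via codes $\mathcal{C}_N$ with $P_e(\mathcal{C}_N)\to 0$, and let $W_1,\dots,W_m$ be independent uniform messages. Fix $S$. By Fano's inequality, $H(W(S)\mid Y^N,W(S^c)) \le H(W(S)\mid Y^N) \le N\varepsilon_N$ with $\varepsilon_N\to 0$. Using independence of the messages, $NR(S) = H(W(S)) = I(W(S);Y^N\mid W(S^c)) + H(W(S)\mid Y^N,W(S^c)) \le I(W(S);Y^N\mid W(S^c)) + N\varepsilon_N$. Since the channel is memoryless and each $X_k^N$ is a deterministic function of $W_k$, the chain rule and the fact that conditioning reduces entropy give $I(W(S);Y^N\mid W(S^c)) \le \sum_{n=1}^N I(X(S)_n;Y_n\mid X(S^c)_n)$. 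Introducing a time-sharing variable $Q$ uniform on $\{1,\dots,N\}$ and independent of the codebooks, the right side equals $N\,I(X(S);YX(S^c)\mid Q)$, where $X(S),X(S^c),Y$ now denote the $Q$-th coordinates and where, conditionally on $Q$, the inputs $X_1,\dots,X_m$ remain independent. Hence $R(S) \le I(X(S);YX(S^c)\mid Q) + \varepsilon_N$ for all $S$; letting $N\to\infty$ and then identifying the set of vectors admitting such an auxiliary $Q$ with the closed convex hull of the per-distribution polytopes completes the converse.

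In my view the main obstacle is the bookkeeping in the achievability error analysis: one must enumerate the error events by the subset $S$ of mis-decoded users and verify that the governing joint-typicality exponent is exactly $I[S](P)=I(X(S);YX(S^c))$, which hinges on the product structure of the input distribution (so that $I(X(S);YX(S^c)) = I(X(S);Y\mid X(S^c))$). On the converse side the delicate point is the single-letterization and the passage from the auxiliary-variable description to the convex-hull description, where one checks that conditioning on $Q$ preserves input independence and that convexifying and closing the family of per-distribution polytopes recovers their convex hull. As this is Theorem~15.3.6 of \cite{Cover}, the routine verifications may ultimately be deferred to that reference.
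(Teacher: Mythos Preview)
The paper does not prove this statement at all: it is quoted verbatim as Theorem~15.3.6 of \cite{Cover} and used as background without argument. So there is no ``paper's own proof'' to compare against; your sketch is essentially the standard Cover--Thomas proof (random coding with joint typicality for achievability, Fano plus single-letterization with a time-sharing variable for the converse), and at the level of an outline it is correct. Since the paper explicitly defers to \cite{Cover}, your final sentence already matches what the paper does.
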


\begin{mydef}
$I(P):=I[\{1,...,m\}](P)$ is called the \emph{sum capacity} of $P$, it's equal to the maximum value of $R_1+...+R_m$ when $(R_1,...,R_m)$ is achievable. The set of points of the capacity region satisfying $R_1+...+R_m=I[\{1,...,m\}](P)$ is called \emph{the dominant face}.
\end{mydef}

In this project, we are interested in MAC where $\mathcal{X}_k=\mathbb{F}_q$ and we take $\alpha=q$. We will focus our attention on the case where $q$ is a prime number since we can easily generalize for the more general case where $q$ is a power of a prime number. More particularly, we are interested in the \emph{symmetric capacity region} which is defined by:
$$\mathcal{J}(P)=\{(R_1,...,R_m):\:0\leq R(S)\leq I(X(S);YX(S^c))\;\forall S\subseteq\{1,...,m\}\}$$
where the mutual informations are calculated for independently, uniformly distributed $X_1$,...,$X_m$.

\section{Polarization process}

\begin{mydef}
Let $P:\mathbb{F}_q^m\rightarrow\mathcal{Y}$ be a discrete $m$-user MAC. We define the two channels $P^-:\mathbb{F}_q^m \rightarrow \mathcal{Y}^2$ and $P^+:\mathbb{F}_q^m \rightarrow \mathcal{Y}^2\times\mathbb{F}_q^m$ as:
$$P^-(y_1,y_2|u_1^1,...,u_m^1)=\sum_{(u_1^2,...,u_m^2)\in\mathbb{F}_q^m}
\frac{1}{q^m}P(y_1|u_1^1+u_1^2,...,u_m^1+u_m^2)P(y_2|u_{1}^2,...,u_m^2)$$
$$P^+(y_1,y_2,u_1^1,...,u_m^1|u_1^2,...,u_m^2)=\frac{1}{q^m}P(y_1|u_1^1+u_1^2,...,u_m^1+u_m^2)P(y_2|u_{1}^2,...,u_m^2)$$
\end{mydef}

$P^-$ and $P^+$ can be constructed from two independent copies of $P$ as follows: The $k^{th}$ user chooses independently and uniformly two symbols $U_k^1$ and $U_k^2$ in $\mathbb{F}_q$, he calculates $X^1_k=U_k^1+U_k^2$ and $X^2_k=U_k^2$, and he finally sends $X^1_k$ through the first copy of $P$ and $X^2_k$ through the second copy of $P$. $P^-$ and $P^+$ are the channels $U_1^1...U_m^1\rightarrow Y_1Y_2$ and $U_1^2...U_m^2\rightarrow Y_1Y_2U_1^1...U_m^1$ respectively, where $Y_1$ and $Y_2$ are the respective outputs of the first and second copy of $P$.\\

Note that the transformation $(U_1^1,...,U_m^1,U_1^2,...,U_m^2)\rightarrow(X_1^1,...,X_m^1,X_1^2,...,X_m^2)$ is bijective and therefore it induces uniform and independent distributions for $X_1^1,...,X_m^1,X_1^2,...,X_m^2$ which are the inputs of the $P$ channels.

\begin{mydef}
Let $\{B_n\}_{n\geq1}$ be i.i.d. uniform random variables on $\{-,+\}$. We define the $MAC$-valued process $\{P_n\}_{n\geq0}$ by:
\begin{align*}
P_0 &:= P\\
P_{n} &:=P_{n-1}^{B_n}\;\forall n\geq1
\end{align*}
\end{mydef}

\begin{myprop}
The process $\{I[S](P_n)\}_{n\geq0}$ is a bounded super-martingale for all $S\subset\{1,...,m\}$. Moreover, it's a bounded martingale if $S=\{1,...,m\}$.
\end{myprop}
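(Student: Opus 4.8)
The plan is to reduce the claim to a single one-step inequality and then establish that inequality by a chain-rule computation on two independent copies of the channel. First, note that $\{I[S](P_n)\}_{n\ge 0}$ is adapted to the filtration $\mathcal{F}_n=\sigma(B_1,\dots,B_n)$ and is bounded: for independent uniform inputs $I[S](P_n)$ is a mutual information $I(X(S);\cdot)$ with $X(S)$ uniform on $\mathbb{F}_q^{|S|}$, so $0\le I[S](P_n)\le H(X(S))=|S|\le m$ (logarithms to the base $q$). Since $B_n$ is uniform on $\{-,+\}$ and independent of $\mathcal{F}_{n-1}$ while $P_n=P_{n-1}^{B_n}$, we get $\mathbb{E}\big[I[S](P_n)\,\big|\,\mathcal{F}_{n-1}\big]=\tfrac{1}{2}I[S](P_{n-1}^-)+\tfrac{1}{2}I[S](P_{n-1}^+)$. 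Hence it suffices to show, for every $m$-user MAC $Q:\mathbb{F}_q^m\to\mathcal{Y}$, that
$$I[S](Q^-)+I[S](Q^+)\le 2\,I[S](Q),$$
with equality when $S=\{1,\dots,m\}$.

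For the one-step inequality I would realize $Q^-$ and $Q^+$ on two independent copies of $Q$ exactly as in the construction following the definition of $P^-$ and $P^+$: let $U^1,U^2\in\mathbb{F}_q^m$ be independent and uniform, put $X^1=U^1+U^2$, $X^2=U^2$, feed them to the two copies, and call the outputs $Y_1,Y_2$. The structural facts to exploit are: (i) the $2m$ scalars $U_k^\ell$ are mutually independent; (ii) $Y_j$ is a stochastic function of $X^j$ alone and the two copies are independent; (iii) the bijection $(U^1,U^2)\mapsto(X^1,X^2)$ is block diagonal for the partition $S\sqcup S^c$, restricting to bijections $(U^1(S),U^2(S))\leftrightarrow(X^1(S),X^2(S))$ and $(U^1(S^c),U^2(S^c))\leftrightarrow(X^1(S^c),X^2(S^c))$. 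Using (i)--(ii), a short chain-rule computation gives the conservation across the two copies
$$I\big(X^1(S)X^2(S)\,;\,Y_1Y_2\,\big|\,X^1(S^c)X^2(S^c)\big)=2\,I[S](Q).$$
By (iii) the left side equals $I\big(U^1(S)U^2(S)\,;\,Y_1Y_2\,\big|\,U^1(S^c)U^2(S^c)\big)$, which the chain rule rewrites as
$$I\big(U^1(S)\,;\,Y_1Y_2\,\big|\,U^1(S^c)U^2(S^c)\big)+I\big(U^2(S)\,;\,Y_1Y_2U^1\,\big|\,U^2(S^c)\big).$$
The second summand is exactly $I[S](Q^+)$, the extra conditioning on $U^1$ being harmless because $U^1\perp U^2$; the first summand is at least $I\big(U^1(S)\,;\,Y_1Y_2\,\big|\,U^1(S^c)\big)=I[S](Q^-)$, since dropping the additional observation $U^2(S^c)$ can only decrease mutual information. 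Combining yields the inequality. When $S=\{1,\dots,m\}$ we have $S^c=\varnothing$, the conditioning variables $U^1(S^c),U^2(S^c)$ vanish, the one inequality above becomes the equality $I(U^1;Y_1Y_2)=I[S](Q^-)$, and the chain collapses to the familiar conservation law $I(Q^-)+I(Q^+)=I(U^1;Y_1Y_2)+I(U^2;Y_1Y_2U^1)=I(U^1U^2;Y_1Y_2)=2\,I(Q)$.

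The only delicate point I anticipate is the conditional-independence bookkeeping in the two chain-rule computations: showing the cross terms vanish uses precisely facts (i) and (ii), and one must handle them carefully rather than wave hands. It is worth noting that the slack in the super-martingale inequality for a proper subset $S$ is exactly $I\big(U^1(S)\,;\,U^2(S^c)\,\big|\,U^1(S^c)Y_1Y_2\big)\ge 0$, which is generically positive and is the origin of the capacity-region loss analyzed in the later sections. Everything else — boundedness, adaptedness, and the reduction to the one-step inequality — is routine.
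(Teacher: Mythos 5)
Your proposal is correct and follows essentially the same chain of manipulations as the paper: realize $Q^-$ and $Q^+$ on two independent copies via the $(U^1,U^2)\mapsto(U^1+U^2,U^2)$ map, use the $S/S^c$-block structure of that bijection to pass from $X$ to $U$, apply the chain rule, and drop the observation $U^2(S^c)$ to lower-bound the first term by $I[S](Q^-)$. The only cosmetic difference is that you phrase everything as conditional mutual information $I(\cdot\,;\cdot\,|\cdot)$ whereas the paper works with $I[S](P)=I(X(S);Y\,X(S^c))$, placing the conditioning variables in the second argument; these are equal by independence of $X(S)$ and $X(S^c)$, so the two presentations are interchangeable, and your identification of the slack as $I(U^1(S);U^2(S^c)\mid U^1(S^c)Y_1Y_2)$ is a correct and useful observation not made explicit in the paper.
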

\begin{proof}
\begin{align*}
2I[S](P)&=I[S](P)+I[S](P)=I(X^1(S);Y_1X^1(S^c))+I(X^2(S);Y_2X^2(S^c))\\
&= I(X^1(S)X^2(S);Y_1Y_2X^1(S^c)X^2(S^c))=I(U^1(S)U^2(S);Y_1Y_2U^1(S^c)U^2(S^c))\\
&=I(U^1(S);Y_1Y_2U^1(S^c)U^2(S^c))+I(U^2(S);Y_1Y_2U^1(S^c)U^2(S^c)U^1(S))\\
&\geq I(U^1(S);Y_1Y_2U^1(S^c)) + I(U^2(S);Y_1Y_2U_1^1...U_m^1U^2(S^c))\\
&= I[S](P^-) + I[S](P^+).
\end{align*}
Thus, $E\big(I[S](P_{n+1})\big|P_{n}\big)=\frac{1}{2}I[S](P_{n}^-)+\frac{1}{2}I[S](P_{n}^+)\leq I[S](P_n)$, and $I[S](P_n)\leq |S|$ for all $S\subset\{1,...,m\}$, which proves that $\{I[S](P_n)\}_{n\geq0}$ is a bounded super-martingale. If $S=\{1,...,m\}$, the inequality becomes equality, and $\{I[S](P_n)\}_{n\geq0}$ is a bounded martingale.
\end{proof}

Since $\frac{1}{2}(I[S](P^-)+I[S](P^+))\leq I[S](P)$ $\forall S\subset \{1,...,m\}$, then $\frac{1}{2}\mathcal{J}(P-)+\frac{1}{2}\mathcal{J}(P+)\subset\mathcal{J}(P)$, but this subset relation can be strict if one of the inequalities is strict for a certain $S\subset\{1,...,m\}$. Nevertheless, for $S=\{1,...,m\}$, we have $\frac{1}{2}(I(P^-)+I(P^+))= I(P)$, so at least one point of the dominant face of $\mathcal{J}(P)$ is present in $\frac{1}{2}\mathcal{J}(P-)+\frac{1}{2}\mathcal{J}(P+)$ since the capacity region is a polymatroid. Therefore, the sum capacity is preserved.\\

From the bounded super-martingale convergence theorem, we deduce that the sequences $\{I[S](P_n)\}_{n\geq0}$ converge almost surely for all $S\subset\{1,...,m\}$.\\

The main result of this section is that, almost surely, $P_n$ becomes an almost deterministic linear channel:

\begin{mythe}
Let $P$ be an $m$-user MAC. Then for every $\epsilon>0$, we have:
\begin{align*}
\lim_{l\to\infty} \frac{1}{2^l}\Big|\big\{ & s\in\{-,+\}^l: \exists A_{s}\in\mathbb{F}_q^{m\times r_s}, \emph{\textrm{rank}}(A_{s})=r_s,\\
& |I(A_{s}^T\vec{U}_{s};Y_{s})-I(P^s)|<\epsilon, |r_s-I(P^s)|<\epsilon \big\}\Big| = 1
\end{align*}
$\mathbb{F}_q^{m\times r_s}$ denotes the set of $m\times r_s$ matrices with coefficients in $\mathbb{F}_q$, $U_s$ (uniform random vector in $\mathbb{F}_q^m$) is the input to the channel $P^s$, and $Y_s$ is the output of it.
\end{mythe}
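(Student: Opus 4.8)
The plan is to recognize the transform $P\mapsto\{P^-,P^+\}$ as Ar{\i}kan's single-user polar transform over the finite abelian group $(\mathbb{F}_q^m,+)$ with uniform input, and to run the usual two-ingredient scheme. \emph{Ingredient 1 (the sum mutual information stops moving).} By Proposition~1 with $S=\{1,\dots,m\}$, $\{I(P_n)\}_{n\ge0}$ is a bounded martingale, so it converges a.s.\ and $I(P_{n+1})-I(P_n)\to0$ a.s. Since $X^2_k=U^2_k$ we have $I(P^+)=I(\vec U^2;Y_1Y_2\vec U^1)\ge I(\vec U^2;Y_2)=I(P)$, and the martingale identity $I(P^-)+I(P^+)=2I(P)$ then forces $I(P^-)\le I(P)\le I(P^+)$; writing $\Delta(P):=I(P^+)-I(P)=I(P)-I(P^-)\ge0$ we get $|I(P_{n+1})-I(P_n)|=\Delta(P_n)$ regardless of $B_{n+1}$, hence $\Delta(P_n)\to0$ almost surely.

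\emph{Ingredient 2 (conditional polarization -- the main step).} I will prove: \emph{there is a nondecreasing $\epsilon(\cdot)$ with $\epsilon(\delta)\to0$ as $\delta\to0^+$ such that any MAC $Q:\mathbb{F}_q^m\to\mathcal Z$ with $\Delta(Q)<\delta$ admits a full-rank $A\in\mathbb{F}_q^{m\times r}$ with $|I(A^T\vec U;Z)-I(Q)|<\epsilon(\delta)$ and $|r-I(Q)|<\epsilon(\delta)$.} The point is that $I(Q)$, $\Delta(Q)$ (using $\Delta(Q)=I(\vec U^2;Y_1\vec U^1\mid Y_2)$, whose value is built from two independent copies of the posterior of $\vec U$ given the output, shifted and multiplied), and, for each of the finitely many subspaces $V\le\mathbb{F}_q^m$, the quantity $I(\vec U\bmod V;Z)$, are all continuous functionals of the law of the posterior distribution of $\vec U$ given $Z$ -- a point in the (weakly compact) space of probability measures on the probability simplex on $\mathbb{F}_q^m$, so there is no need to bound $|\mathcal Z|$. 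Consequently the set $\mathcal G_0$ of channels possessing a full-rank $A$ with $I(A^T\vec U;Z)=I(Q)=r$ is a finite union of closed sets, hence closed, and equals $\bigcap_{\epsilon>0}\mathcal G_\epsilon$, where $\mathcal G_\epsilon$ is the set in the lemma. The crux is the equality $\{\Delta(Q)=0\}=\mathcal G_0$: ``$\supseteq$'' is a direct computation on coset channels, while for ``$\subseteq$'' one unwinds $I(\vec U^2;Y_1\vec U^1\mid Y_2)=0$ -- i.e.\ $\vec U^2$ conditionally independent of $(\vec U^1,Y_1)$ given $Y_2$ -- and uses the symmetry and uniformity of $\vec U^1,\vec U^2$ together with the fact that, $q$ being prime, every subgroup of $\mathbb{F}_q^m$ is an $\mathbb{F}_q$-subspace, to deduce that $Z$ determines a coset $\vec X+V$ of some subspace $V$ and nothing finer. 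Given this equality, compactness finishes the lemma: on the compact complement of $\mathcal G_\epsilon$ the continuous nonnegative function $\Delta$ attains a positive minimum $c(\epsilon)$, so $\Delta(Q)<c(\epsilon)$ forces $Q\in\mathcal G_\epsilon$, and $\epsilon(\delta):=\inf\{\epsilon:c(\epsilon)>\delta\}$ does the job.

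\emph{Assembling and translating to a fraction.} Ingredients~1 and~2 give: almost surely, for every $j\ge1$ there is $N_j$ with $\Delta(P_n)<1/j$ and hence $P_n\in\mathcal G_{\epsilon(1/j)}$ for all $n\ge N_j$; intersecting over $j$, almost surely $P_n$ eventually lies in $\mathcal G_\epsilon$ for every $\epsilon>0$, which is exactly the event in the theorem (with $P^s=P_n$ for $s=B_1\cdots B_n$). Finally, for fixed depth $l$ the word $B_1\cdots B_l$ is uniform on $\{-,+\}^l$, so
\[
\frac{1}{2^l}\bigl|\{s\in\{-,+\}^l:\ Q=P^s\text{ satisfies the stated conditions}\}\bigr|=\Pr[P_l\in\mathcal G_\epsilon],
\]
which tends to $1$ as $l\to\infty$ by bounded convergence; this is the asserted limit.

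\emph{The main obstacle.} Ingredient~1, the compactness packaging, and the translation to a fraction are routine; the real work is the equality $\{\Delta(Q)=0\}=\mathcal G_0$ of Ingredient~2 -- proving that local stability of the mutual information forces the exact coset structure and, in particular, that no partially noisy or ``mixed-coset'' fixed points survive. This is precisely where $q$ prime is used (the statement fails for composite $q$), and it is the step that does not follow from soft arguments; it is the group-valued analogue of the fixed-point analysis underlying \cite{SasogluTelAri}.
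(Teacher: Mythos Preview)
Your approach is correct and takes a genuinely different route from the paper. The paper does \emph{not} argue via compactness and a fixed-point characterization; instead it introduces the auxiliary single-user channels $P[A|B]$, proves degradation relations between $P^{\pm}[A|B]$ and $P[A|B]^{\pm}$ (Lemma~3), and uses these to show that each of the finitely many processes $\{I(P_n[A|B])\}_n$ converges a.s.\ to an integer by reducing to the scalar case $P[\vec\alpha|B]$ and invoking the single-user dichotomy of Lemma~1 (Proposition~2). Only after \emph{all} $I(P_n[A|B])$ are simultaneously $\epsilon$-close to integers does the paper build the matrix $A_{P_n}$, via an induction on $m$ (Lemma~5) resting on a two-variable entropy lemma imported from \c{S}a\c{s}o\u{g}lu--Telatar--Yeh (Lemma~4); Proposition~3 packages this into the structure statement. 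Your route is the intrinsic group-polarization argument over $(\mathbb F_q^m,+)$: you track only $\Delta(P_n)\to0$ and extract the structure directly from the identity $\{\Delta=0\}=\mathcal G_0$ plus Blackwell-measure compactness, bypassing Lemmas~1--5 and Propositions~2--3 entirely. What you gain is conceptual economy and the stronger statement that smallness of $\Delta$ alone (rather than near-integrality of the whole $P[A|B]$ family) already forces the coset structure; what you give up is an explicit $\epsilon_m$ and, more practically, the $P[\vec\alpha|\Phi]$ machinery that the paper immediately reuses in Section~4 to control Bhattacharyya parameters. Incidentally, the step you flag as ``the main obstacle'' is shorter than you suggest: the conditional-independence condition $I(\vec U^2;Y_1\vec U^1\mid Y_2)=0$ forces $P(y|\cdot)$ to be invariant under every difference set $S_z-S_z$ of a posterior support, whence each $S_z$ is a single coset of the subgroup these differences generate and each posterior is uniform on it; primality of $q$ then makes that subgroup an $\mathbb F_q$-subspace.
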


To prove this theorem, we need several lemmas and definitions:

\begin{mylem}(\cite{Sasoglu})
For any $\epsilon>0$, there exists $\delta>0$ such that for any single user channel $P$ we have:
$$I(P^+)-I(P)<\delta \Rightarrow I(P)\notin (\epsilon,1-\epsilon)$$
\end{mylem}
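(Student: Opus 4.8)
The plan is to establish the contrapositive: for every $\epsilon\in(0,1/2)$ there is $\delta>0$ such that $I(W)\in[\epsilon,1-\epsilon]$ implies $I(W^+)-I(W)\ge\delta$, for every single-user channel $W:\mathbb{F}_q\to\mathcal{Y}$ (the claim is vacuous for $\epsilon\ge1/2$; here $q$ is the prime of this paper and all informations are for uniform inputs in base $q$, so $I(W)\in[0,1]$). I would describe $W$ through the law $\mu_W$ of its posterior: with $X$ uniform and $Y$ the output of $W$, let $\mu_W$ be the distribution, on the simplex $\Delta$ of probability vectors on $\mathbb{F}_q$, of the random vector $P_{X\mid Y}$. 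Then $I(W)=1-\int_\Delta H_q\,d\mu_W$ with $H_q$ the base-$q$ entropy; and since the two inputs stay independent given the two outputs, the posterior of $U^1=X^1-X^2$ under $W^-$ is $p*\tilde p'$ with $p,p'$ independent copies of $P_{X\mid Y}$ ($*$ convolution on $\mathbb{Z}_q$, $\tilde p(x)=p(-x)$), which yields the identity $I(W)-I(W^-)=\iint_{\Delta\times\Delta}g(p,p')\,d\mu_W(p)\,d\mu_W(p')=:T(\mu_W)$, where $g(p,p')=H_q(p*\tilde p')-\tfrac12H_q(p)-\tfrac12H_q(p')$. By the conservation $I(W^-)+I(W^+)=2I(W)$ (the $m=1$ case of Proposition 1) we have $I(W^+)-I(W)=T(\mu_W)$, so it is enough to bound $T$ away from $0$ when $I(W)\in[\epsilon,1-\epsilon]$.

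The heart is the following rigidity fact. Convolving with an independent variable does not decrease entropy, so $H_q(p*\tilde p')\ge\max\{H_q(p),H_q(p')\}\ge\tfrac12(H_q(p)+H_q(p'))$ and hence $g\ge0$ pointwise; moreover $g(p,p')=0$ precisely when $p,p'$ are both uniform or both point masses. Indeed, equality in the last bound forces $H_q(p*\tilde p')=H_q(p)=H_q(p')$, and $H_q(A+B)=H_q(A)$ for independent $A\sim p$, $B\sim\tilde p'$ forces $A+B\perp B$, i.e. $p$ invariant under translation by $b-b'$ for all $b,b'$ in the support of $p'$. Here primality of $q$ enters: a single nonzero translation already generates $\mathbb{Z}_q$, so $p'$ is a point mass or $p$ is uniform; symmetrically $H_q(A+B)=H_q(B)$ gives $p$ a point mass or $p'$ uniform; combining the two disjunctions leaves exactly ``both uniform'' or ``both point masses''. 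Now if $T(\mu)=0$ then $g=0$ $\mu\otimes\mu$-almost surely, so, writing $u=\mu(\{\text{uniform}\})$ and $d=\mu(\{\text{point masses}\})$, we get $u^2+d^2=1$; combined with $u+d\le1$ and $t^2\le t$ on $[0,1]$ this forces $u,d\in\{0,1\}$ and $u+d=1$, i.e. either $\mu=\delta_{\text{uniform}}$ (so $I(W)=1-1=0$) or $\mu$ is carried by point masses (so $I(W)=1-0=1$). In all cases $I(W)\in\{0,1\}$.

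The uniform $\delta$ now follows by a compactness argument. If the lemma failed for some $\epsilon\in(0,1/2)$, there would be channels $W_n$ with $I(W_n)\in(\epsilon,1-\epsilon)$ and $I(W_n^+)-I(W_n)\to0$. The set of Borel probability measures on the compact metric simplex $\Delta$ is weak-$*$ compact, so $\mu_{W_n}\to\mu^*$ along a subsequence. Since $H_q$ on $\Delta$ and $g$ on $\Delta\times\Delta$ are continuous and bounded, and $\mu\mapsto\mu\otimes\mu$ is weak-$*$ continuous, the functionals $\mu\mapsto I(\mu)=1-\int H_q\,d\mu$ and $\mu\mapsto T(\mu)$ are weak-$*$ continuous. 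Passing to the limit, $I(\mu^*)\in[\epsilon,1-\epsilon]$ and $T(\mu^*)=0$, whence by the rigidity fact $I(\mu^*)\in\{0,1\}$, contradicting $\epsilon\in(0,1/2)$. This proves the lemma.

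The main obstacle, apart from the (routine but bookkeeping-heavy) checks that $I(W^\pm)$ depend on $W$ only through $\mu_W$ and that the convolution identity for $T$ holds exactly as written, is the equality analysis of $g$: it is here that primality of $q$ is genuinely needed, which is also why such a clean characterization is available only in the prime (and, by the usual reduction, prime-power) regime. One could in principle replace the compactness step by a quantitative stability estimate for $g$ and obtain an explicit $\delta(\epsilon)$, but this is not needed for the statement.
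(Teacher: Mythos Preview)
The paper does not prove this lemma; it is simply quoted from \cite{Sasoglu} and used as a black box, so there is no in-paper argument to compare against. That said, your proof is correct and is essentially the standard one from the cited source: represent a channel by the law $\mu_W$ of its posterior on the simplex, express $I(W^+)-I(W)$ as the nonnegative functional $T(\mu_W)=\iint g\,d\mu_W\otimes\mu_W$, characterise the zero set of $g$ via the entropy-of-sum equality (this is exactly where primality of $q$ is used), and conclude by weak-$*$ compactness of probability measures on $\Delta$ together with continuity of $H_q$ and $g$. The only points worth flagging are minor: the barycenter constraint $\int p\,d\mu_W(p)=$ uniform is not needed for your argument, but it is what guarantees every limit $\mu^*$ is still realisable by a channel (harmless here, since you only use $I(\mu^*)$ and $T(\mu^*)$); and the weak-$*$ continuity of $\mu\mapsto\mu\otimes\mu$ is the one step a careful reader might want spelled out, though it is routine.
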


\begin{mydef}
Let $P:\mathbb{F}_q^m\rightarrow \mathcal{Y}$ be an $m$-user MAC, let $A\in \mathbb{F}_q^{m\times n_1}$, $B\in \mathbb{F}_q^{m\times n_2}$ be two matrices such that $[A\;B]\in\mathbb{F}_q^{m\times (n_1+n_2)}$ has rank $n_1 + n_2$ (If $n_2=0$, we put $B=\Phi$). We define the $n_1$-user channel $P[A|B]:\mathbb{F}_q^{n_1}\rightarrow \mathcal{Y}\times\mathbb{F}_q^{n_2}$ as follows:
$$P[A|B](y,\vec{v}|\vec{u})=\frac{1}{q^{m-n_1}}\sum_{\substack{\vec{x}\in\mathbb{F}_q^m\\ A^T\vec{x}=\vec{u},B^T\vec{x}=\vec{v}}}P(y|\vec{x})$$
If $n_2=0$, there is no additional $\vec{v}$ at the output, and the $B^T\vec{x}=\vec{v}$ constraint under the sum is removed.
\end{mydef}

\begin{myrem}
The channel $P[A|B]$ can be constructed from $P$ as follows: Let $\vec{x}\in \mathbb{F}_q^m$ be the input to the channel $P$, $\vec{u}=A^T\vec{x}$ is determined by the $n_1$ users of $P[A|B]$. Since $n_1$ may be less than $m$, then $\vec{x}$ cannot always be determined from $\vec{u}$. $\vec{u}$ only determines a part of $\vec{x}$, the other part is determined by the channel $P[A|B]$ uniformly and independently from $\vec{u}$:\\

Let $\tilde{A}$ be an $m\times(m-n_1)$ matrix such that $[A\;\tilde{A}]$ is invertible, i.e. the columns of $\tilde{A}$ together with those of $A$ form a basis for $\mathbb{F}_q^m$. Since $[A\;\tilde{A}]$ is invertible, we could determine $\vec{x}$ if we knew $[A\;\tilde{A}]^T\vec{x}$. But $[A\;\tilde{A}]^T\vec{x}=[(A^T\vec{x})^T\;(\tilde{A}^T\vec{x})^T]^T=[\vec{u}^T\;\vec{\tilde{u}}^T]^T$. We only have $\vec{u}$ and we need $\vec{\tilde{u}}$ to fully determine $\vec{x}$. The channel $P[A|B]$ generates $\vec{\tilde{u}}$ uniformly and independently from $\vec{u}$. The output of the channel $P[A|B]$ is $y$ (the output of $P$) together with $B^T\vec{x}$. \textbf{In other words, in $P[A|B]$, we try to determine $A^T\vec{x}$, when $\vec{y}$ (the output of $P$) and $B^T\vec{x}$ are given}.\\

In summary, $P[A|B]$ corresponds to the channel $\vec{U}=A^T\vec{X}\rightarrow (Y,B^T\vec{X})$, where $\vec{U}$ is a uniform random vector in $\mathbb{F}_q^{n_1}$, $\vec{X}=([A\;\tilde{A}]^T)^{-1}[\vec{U}^T\;\vec{\tilde{U}}^T]^T$ and $Y$ is the output of the channel $P$ when $\vec{X}$ is the input. $\tilde{A}$ is any $m\times(m-n_1)$ matrix such that $[A\;\tilde{A}]$ is invertible, and $\vec{\tilde{U}}$ is a uniform random vector in $\mathbb{F}_q^{m-n_1}$, independent from $\vec{U}$. We have $I(P[A|B])=I(\vec{U};Y,B^T\vec{X})=I(A^T\vec{X};Y,B^T\vec{X})$.\\

Let $A=[A'\;A'']$ and $B$ be two matrices having $m$ rows such that $[A\;B]$ is full rank. If $X$ and $Y$ are as above, then we have:
\begin{align*}
I(P[A|B])&=I(A^T\vec{X};Y,B^T\vec{X})=I(A'^T\vec{X},A''^T\vec{X};Y,B^T\vec{X})\\
&=I(A'^T\vec{X};Y,B^T\vec{X})+I(A''^T\vec{X};Y,B^T\vec{X},A'^T\vec{X})\\
&=I(A'^T\vec{X};Y,B^T\vec{X})+I(A''^T\vec{X};Y,[B\;A']^T\vec{X})\\
&=I(P[A'|B])+I\big(P\big[A''\big|[B\;A']\big]\big)
\end{align*}
\end{myrem}

\begin{mylem}
Let $A\in \mathbb{F}_q^{m\times n_1}$, $B\in \mathbb{F}_q^{m\times n_2}$, $A'\in \mathbb{F}_q^{n_1\times n_1'}$, $B'\in \mathbb{F}_q^{n_1\times n_2'}$ be four matrices such that $[A\;B]\in\mathbb{F}_q^{m\times (n_1+n_2)}$ has rank $n_1 + n_2$, and $[A'\;B']\in\mathbb{F}_q^{n_1\times (n_1'+n_2')}$ has rank $n_1' + n_2'$. Then $P[A|B][A'|B']$ is equivalent to $P\big[AA'\big|[B\;AB']\big]$
\end{mylem}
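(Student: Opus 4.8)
The plan is to prove the identity directly from the defining summation of the $P[A|B]$ construction; the operational description in the Remark then serves as an independent sanity check. First I would set $Q:=P[A|B]$, which is an $n_1$-user channel $\mathbb{F}_q^{n_1}\to\mathcal{Y}\times\mathbb{F}_q^{n_2}$, and expand $Q[A'|B']$ by substituting the formula for $Q$:
\begin{align*}
Q[A'|B']\big(y,\vec{v},\vec{v}'\,\big|\,\vec{u}'\big)
&=\frac{1}{q^{\,n_1-n_1'}}\sum_{\substack{\vec{w}\in\mathbb{F}_q^{n_1}\\ (A')^T\vec{w}=\vec{u}',\;(B')^T\vec{w}=\vec{v}'}} Q\big(y,\vec{v}\,\big|\,\vec{w}\big)\\
&=\frac{1}{q^{\,m-n_1'}}\sum_{\substack{\vec{w}\in\mathbb{F}_q^{n_1}\\ (A')^T\vec{w}=\vec{u}',\;(B')^T\vec{w}=\vec{v}'}}\;\sum_{\substack{\vec{x}\in\mathbb{F}_q^{m}\\ A^T\vec{x}=\vec{w},\;B^T\vec{x}=\vec{v}}} P(y|\vec{x}),
\end{align*}
where I have already merged the two normalizing constants using $q^{-(n_1-n_1')}q^{-(m-n_1)}=q^{-(m-n_1')}$. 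The key step is to collapse the double sum: in the inner sum $\vec{w}$ is forced to equal $A^T\vec{x}$, so the two sums amount to a single sum over $\vec{x}\in\mathbb{F}_q^m$ subject to $B^T\vec{x}=\vec{v}$, $(A')^TA^T\vec{x}=\vec{u}'$ and $(B')^TA^T\vec{x}=\vec{v}'$, i.e. subject to $(AA')^T\vec{x}=\vec{u}'$ and $[B\;AB']^T\vec{x}=(\vec{v},\vec{v}')$. Reading off the definition, the right-hand side is exactly $P\big[AA'\,\big|\,[B\;AB']\big]\big(y,\vec{v},\vec{v}'\,\big|\,\vec{u}'\big)$, which gives the claimed equivalence once we absorb the harmless reindexing of the output coordinate $(\vec{v},\vec{v}')\in\mathbb{F}_q^{n_2}\times\mathbb{F}_q^{n_2'}$ as an element of $\mathbb{F}_q^{n_2+n_2'}$.

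The one genuine point requiring care — and the main obstacle, modest as it is — is to check that $P\big[AA'\,\big|\,[B\;AB']\big]$ is a legitimate instance of the construction, i.e. that $[AA'\;B\;AB']\in\mathbb{F}_q^{m\times(n_1'+n_2+n_2')}$ has full column rank $n_1'+n_2+n_2'$; this is also what makes the constraints above consistent and the normalization $q^{m-n_1'}$ correct. Here I would first note that the hypotheses force the dimension inequalities $n_1\geq n_1'+n_2'$ and $m\geq n_1+n_2$ (hence $m\geq n_1'+n_2+n_2'$). Then $A$, regarded as a linear map $\mathbb{F}_q^{n_1}\to\mathbb{F}_q^m$, is injective and $[A'\;B']:\mathbb{F}_q^{n_1'+n_2'}\to\mathbb{F}_q^{n_1}$ is injective, so $A[A'\;B']=[AA'\;AB']$ has full column rank $n_1'+n_2'$ and its column space lies inside the column space of $A$; by the full-rank hypothesis on $[A\;B]$, the column space of $A$ meets that of $B$ trivially. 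Therefore $[AA'\;AB'\;B]$, and after a column permutation $[AA'\;B\;AB']$, has full column rank $n_1'+n_2+n_2'$, as needed.

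Finally, as a cross-check I would re-derive the same identity from the operational picture of the Remark: the composition first turns the uniform input $\vec{X}\in\mathbb{F}_q^m$ of $P$ into $\vec{U}=A^T\vec{X}$ (filling in the complementary coordinates uniformly) while revealing $B^T\vec{X}$, and then turns $\vec{U}$ into $\vec{U}'=(A')^T\vec{U}=(AA')^T\vec{X}$ (again completing coordinates uniformly) while revealing $(B')^T\vec{U}=(AB')^T\vec{X}$. Because $A$ has full column rank, the two successive uniform completions compose into a single uniform choice of $\vec{X}$, and the side information revealed is precisely $[B\;AB']^T\vec{X}$ — exactly the channel $P\big[AA'\,\big|\,[B\;AB']\big]$. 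This confirms there is no hidden distributional subtlety, and the whole argument reduces to the sum manipulation and the rank bookkeeping above.
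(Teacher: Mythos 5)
Your proof is correct and follows essentially the same route as the paper: expand $P[A|B][A'|B']$ via the defining sum, merge the two normalizing constants into $q^{-(m-n_1')}$, and collapse the double sum by noting that the constraint $A^T\vec{x}=\vec{w}$ eliminates the outer summation variable, leaving exactly the defining sum of $P\big[AA'\,\big|\,[B\;AB']\big]$. Your extra verification that $[AA'\;B\;AB']$ has full column rank $n_1'+n_2+n_2'$ is a worthwhile addition that the paper's proof uses implicitly but never checks.
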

\begin{proof}
$P[A|B][A'|B'](y,\vec{b},\vec{b'}|\vec{a'})$
\begin{align*}
&=\frac{1}{q^{n_1-n_1'}}\sum_{\substack{\vec{a}\in\mathbb{F}_q^{n_1}\\ A'^T\vec{a}=\vec{a}'\\B'^T\vec{a}=\vec{b}'}}P[A|B](y,\vec{b}|\vec{a})
=\frac{1}{q^{n_1-n_1'}}\sum_{\substack{\vec{a}\in\mathbb{F}_q^{n_1}\\ A'^T\vec{a}=\vec{a}'\\B'^T\vec{a}=\vec{b}'}}
\frac{1}{q^{m-n_1}}\sum_{\substack{\vec{u}\in\mathbb{F}_q^m\\ A^T\vec{u}=\vec{a}\\B^T\vec{u}=\vec{b}}}P(y|\vec{u})\\
&=\frac{1}{q^{m-n_1'}}
\sum_{\substack{\vec{u}\in\mathbb{F}_q^m\\ A'^TA^T\vec{u}=\vec{a}'\\B'^TA^T\vec{u}=\vec{b}',B^T\vec{u}=\vec{b}}}P(y|\vec{u})
=\frac{1}{q^{m-n_1'}}
\sum_{\substack{\vec{u}\in\mathbb{F}_q^m\\ (AA')^T\vec{u}=\vec{a}'\\ [B\; AB']^T\vec{u}=[\vec{b}^T\;\vec{b}'^T]^T}}P(y|\vec{u})\\
&=P\big[AA'\big|[B\;AB']\big]\big(y,[\vec{b}^T\;\vec{b}'^T]^T\big|\vec{a}'\big)
\end{align*}
\end{proof}

\begin{mylem}
$P[A|B]^+$ is degraded with respect to $P^+[A|B]$. $P^-[A|B]$ is degraded with respect to $P[A|B]^-$, and if $B=\Phi$, they are equivalent.
\end{mylem}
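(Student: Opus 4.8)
\emph{Proof plan.} The plan is to express all four channels as $[\,\cdot\,|\,\cdot\,]$-operations applied to the single ``product'' MAC $P\times P$, defined as the $2m$-user channel $(P\times P)(y_1,y_2|\vec x^1,\vec x^2)=P(y_1|\vec x^1)P(y_2|\vec x^2)$ with input vector $(\vec x^1,\vec x^2)\in\mathbb F_q^{2m}$, and then to compare the matrices that come out. For a matrix $X$ write $M_X:=\left[\begin{smallmatrix}X\\-X\end{smallmatrix}\right]$, $N_X:=\left[\begin{smallmatrix}0\\X\end{smallmatrix}\right]$, $D_X:=\left[\begin{smallmatrix}X&0\\0&X\end{smallmatrix}\right]$. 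Two elementary observations turn the whole statement into bookkeeping. First, the polarization transforms are themselves $[\,\cdot\,|\,\cdot\,]$-operations: substituting $\vec x^1=\vec u^1+\vec u^2$, $\vec x^2=\vec u^2$ in the definitions of $P^-$ and $P^+$ gives
\begin{equation*}
P^-\;=\;(P\times P)\,[\,M_{I_m}\,|\,\Phi\,],\qquad P^+\;=\;(P\times P)\,[\,N_{I_m}\,|\,M_{I_m}\,].
\end{equation*}
Second, the $[\,\cdot\,|\,\cdot\,]$-operation commutes with products, i.e.\ $P[A|B]\times P[A|B]$ is equivalent to $(P\times P)\,[\,D_A\,|\,D_B\,]$; this is immediate by comparing the two physical descriptions of the Remark above. (All matrices appearing below are full rank, so the $[\,\cdot\,|\,\cdot\,]$ notation is legitimate; this is part of the verification.)

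Feeding these two facts into the preceding lemma (the composition rule, which in the notation $R[A_1|B_1][A_2|B_2]\equiv R[\,A_1A_2\,|\,[B_1\;A_1B_2]\,]$ I apply with $R=P\times P$), and using $D_AM_{I_{n_1}}=M_A$, $M_{I_m}A=M_A$, $D_AN_{I_{n_1}}=N_A$, $N_{I_m}A=N_A$, $M_{I_m}B=M_B$, $N_{I_m}B=N_B$, I obtain
\begin{align*}
P[A|B]^- &\equiv (P\times P)\,[\,M_A\,|\,D_B\,], & P^-[A|B] &\equiv (P\times P)\,[\,M_A\,|\,M_B\,],\\
P[A|B]^+ &\equiv (P\times P)\,[\,N_A\,|\,[\,D_B\;M_A\,]\,], & P^+[A|B] &\equiv (P\times P)\,[\,N_A\,|\,[\,M_{I_m}\;N_B\,]\,].
\end{align*}
Within each of the two pairs the first matrix (the ``$A$-part'') is the same, and the column space of the second matrix (the ``$B$-part'') on the left is contained in that on the right: for the $-$ pair because $M_B=D_B\,M_{I_{n_2}}$, and for the $+$ pair because each column of $[\,D_B\;M_A\,]$, namely $\left[\begin{smallmatrix}b_i\\0\end{smallmatrix}\right],\left[\begin{smallmatrix}0\\b_i\end{smallmatrix}\right],\left[\begin{smallmatrix}a_j\\-a_j\end{smallmatrix}\right]$, is a linear combination of the columns $\left[\begin{smallmatrix}e_k\\-e_k\end{smallmatrix}\right],\left[\begin{smallmatrix}0\\b_i\end{smallmatrix}\right]$ of $[\,M_{I_m}\;N_B\,]$.

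The proof then closes with a one-line monotonicity fact: if $[\,\mathcal A\;\mathcal B_1\,]$ and $[\,\mathcal A\;\mathcal B_2\,]$ are full rank and $\mathrm{col}(\mathcal B_2)\subseteq\mathrm{col}(\mathcal B_1)$, say $\mathcal B_2=\mathcal B_1K$, then $R[\mathcal A|\mathcal B_1]$ and $R[\mathcal A|\mathcal B_2]$ have the same input alphabet and present $R$ with the same random input $\vec X$ (uniform over $\{\vec x:\mathcal A^T\vec x=\vec u\}$ for a given input $\vec u$), while their outputs are $(Y,\mathcal B_1^T\vec X)$ and $(Y,\mathcal B_2^T\vec X)=(Y,K^T\mathcal B_1^T\vec X)$ respectively; hence the deterministic map $(y,\vec v)\mapsto(y,K^T\vec v)$ exhibits $R[\mathcal A|\mathcal B_2]$ as degraded with respect to $R[\mathcal A|\mathcal B_1]$, with equivalence if in addition $\mathrm{col}(\mathcal B_1)=\mathrm{col}(\mathcal B_2)$. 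Applying this to the two displays above yields that $P^-[A|B]$ is degraded with respect to $P[A|B]^-$ and that $P[A|B]^+$ is degraded with respect to $P^+[A|B]$; and when $B=\Phi$ both $B$-parts in the $-$ pair are the empty matrix, which gives the asserted equivalence.

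I expect the only real work to be the bookkeeping: fixing the order of the $2m$ coordinates of $P\times P$ consistently, checking that every matrix produced by the composition lemma is full rank (so all intermediate $[\,\cdot\,|\,\cdot\,]$ channels are well defined), and verifying the column-space inclusion in the $+$ case. There is no genuine conceptual obstacle. As an alternative one can argue directly with the physical models of the Remark: realize the four channels explicitly, observe that within each pair the vector $(\vec X^1,\vec X^2)$ fed to the two copies of $P$ has the same conditional law given the channel input, and note that the two channels of a pair differ only in how much \emph{linear} side information about $(\vec X^1,\vec X^2)$ is revealed at the output — the ``$+$'' channel reveals the full difference $\vec X^1-\vec X^2$ of the two $P$-inputs while the other reveals only $A^T(\vec X^1-\vec X^2)$, and analogously for ``$-$''.
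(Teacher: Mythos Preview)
Your argument is correct and takes a genuinely different route from the paper. The paper proceeds by brute force: it expands the transition probabilities of each of the four channels as explicit sums over $\mathbb{F}_q^m$, performs a change of variables $\vec u_1\mapsto\vec u_1-\vec u_2$ inside the sums, and then exhibits the degradation directly as a marginalization over the extra output coordinates (summing out $\vec u_1$ in the $+$ case and $\vec b_2$ in the $-$ case). No auxiliary channel $P\times P$ is introduced and Lemma~2 is not invoked.

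Your approach trades those computations for structure: once $P^\pm$ and $P[A|B]$ are both recognized as instances of the $[\,\cdot\,|\,\cdot\,]$ construction applied to $P\times P$, the composition lemma reduces everything to a comparison of column spaces, and the degradation falls out of a clean one-line monotonicity principle (larger $B$-side information $\Rightarrow$ better channel). This is more modular and explains \emph{why} the degradations hold rather than merely verifying them; it also makes the $B=\Phi$ equivalence immediate. The cost is a bit of overhead (defining $P\times P$, checking the full-rank hypotheses of Lemma~2 at each step, and establishing the monotonicity fact), whereas the paper's calculation, though opaque, is entirely self-contained and needs nothing beyond the channel definitions.
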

\begin{proof}
\begin{align*}
P^+[A|B](y_1,y_2,\vec{u}_1,\vec{b}|\vec{a})&=\frac{1}{q^{m-n_1}}\sum_{\substack{\vec{u}_2\in\mathbb{F}_q^m\\A^T\vec{u}_2=\vec{a},B^T\vec{u}_2=
\vec{b}}}P^+(y_1,y_2,\vec{u}_1|\vec{u}_2)\\
&=\frac{1}{q^{2m-n_1}}\sum_{\substack{\vec{u}_2\in\mathbb{F}_q^m\\A^T\vec{u}_2=\vec{a},B^T\vec{u}_2=\vec{b}}}P(y_1|\vec{u}_1+\vec{u}_2)P(y_2|\vec{u}_2)
\end{align*}

\begin{align*}
P[A|B]^+(y_1,\vec{b}_1,y_2,\vec{b}_2,\vec{a}_1|\vec{a}_2)&=\frac{1}{q^{n_1}}P[A|B](y_1,\vec{b}_1|\vec{a}_1+\vec{a}_2)P[A|B](y_2,\vec{b}_2|\vec{a}_2)\\
&=\frac{1}{q^{2m-n_1}}\sum_{\substack{\vec{u_1}\in\mathbb{F}_q^m\\ A^T\vec{u}_1=\vec{a}_1+\vec{a}_2\\B^T\vec{u}_1=\vec{b}_1}}
\sum_{\substack{\vec{u}_2\in\mathbb{F}_q^m\\ A^T\vec{u}_2=\vec{a}_2\\B^T\vec{u}_2=\vec{b}_2}}P(y_1|\vec{u}_1)P(y_2|\vec{u}_2)\\
&=\frac{1}{q^{2m-n_1}}\sum_{\substack{\vec{u_1}\in\mathbb{F}_q^m\\ A^T\vec{u}_1=\vec{a}_1\\B^T\vec{u}_1=\vec{b}_1-\vec{b}_2}}
\sum_{\substack{\vec{u}_2\in\mathbb{F}_q^m\\ A^T\vec{u}_2=\vec{a}_2\\B^T\vec{u}_2=\vec{b}_2}}P(y_1|\vec{u}_1+\vec{u}_2)P(y_2|\vec{u}_2)\\
&= \sum_{\substack{\vec{u_1}\in\mathbb{F}_q^m\\ A^T\vec{u}_1=\vec{a}_1\\B^T\vec{u}_1=\vec{b}_1-\vec{b}_2}} P^+[A|B](y_1,y_2,\vec{u}_1,\vec{b}_2|\vec{a}_2)
\end{align*}

Which proves that $P[A|B]^+$ is degraded with respect to $P^+[A|B]$.

\begin{align*}
P[A|B]^-(y_1,\vec{b}_1,y_2,\vec{b}_2|\vec{a}_1)&=\frac{1}{q^{n_1}}\sum_{\vec{a}_2\in\mathbb{F}_q^{n_1}}P[A|B](y_1,\vec{b}_1|\vec{a}_1+\vec{a}_2)
P[A|B](y_2,\vec{b}_2|\vec{a}_2)\\
&=\frac{1}{q^{2m-n_1}}\sum_{\vec{a}_2\in\mathbb{F}_q^{n_1}}\sum_{\substack{\vec{u_1}\in\mathbb{F}_q^m\\ A^T\vec{u}_1=\vec{a}_1+\vec{a}_2\\B^T\vec{u}_1=\vec{b}_1}}
\sum_{\substack{\vec{u}_2\in\mathbb{F}_q^m\\ A^T\vec{u}_2=\vec{a}_2\\B^T\vec{u}_2=\vec{b}_2}}P(y_1|\vec{u}_1)P(y_2|\vec{u}_2)\\
&=\frac{1}{q^{2m-n_1}}\sum_{\vec{a}_2\in\mathbb{F}_q^{n_1}}\sum_{\substack{\vec{u_1}\in\mathbb{F}_q^m\\ A^T\vec{u}_1=\vec{a}_1\\B^T\vec{u}_1=\vec{b}_1-\vec{b}_2}}
\sum_{\substack{\vec{u}_2\in\mathbb{F}_q^m\\ A^T\vec{u}_2=\vec{a}_2\\B^T\vec{u}_2=\vec{b}_2}}P(y_1|\vec{u}_1+\vec{u}_2)P(y_2|\vec{u}_2)\\
&= \frac{1}{q^{2m-n_1}}\sum_{\substack{\vec{u_1}\in\mathbb{F}_q^m\\ A^T\vec{u}_1=\vec{a}_1\\B^T\vec{u}_1=\vec{b}_1-\vec{b}_2}}
\sum_{\substack{\vec{u}_2\in\mathbb{F}_q^m\\B^T\vec{u}_2=\vec{b}_2}}P(y_1|\vec{u}_1+\vec{u}_2)P(y_2|\vec{u}_2)
\end{align*}

\begin{align*}
P^-[A|B](y_1,y_2,\vec{b}|\vec{a})&=\frac{1}{q^{m-n_1}}\sum_{\substack{\vec{u}_1\in\mathbb{F}_q^m\\A^T\vec{u}_1=\vec{a},B^T\vec{u}_1=
\vec{b}}}P^-(y_1,y_2|\vec{u}_1)\\
&=\frac{1}{q^{2m-n_1}}\sum_{\substack{\vec{u}_1\in\mathbb{F}_q^m\\A^T\vec{u}_1=\vec{a},B^T\vec{u}_1=\vec{b}}}\sum_{\vec{u}_2\in\mathbb{F}_q^m}
P(y_1|\vec{u}_1+\vec{u}_2)P(y_2|\vec{u}_2)\\
&=\sum_{\vec{b}_2\in\mathbb{F}_q^{n_2}}P[A|B]^-(y_1,\vec{b}+\vec{b}_2,y_2,\vec{b}_2|\vec{a})
\end{align*}

Which proves that $P^-[A|B]$ is degraded with respect to $P[A|B]^-$. Note that if $B=\Phi$, all the $B^T\vec{u}$ constraints are eliminated, and there is no more $\vec{b}$ vectors. In this case we have $P[A|\Phi]^-(y_1,y_2|\vec{a}_1)=P^-[A|\Phi](y_1,y_2|\vec{a}_1)$. Therefore, $P[A|\Phi]^-$ and $P^-[A|\Phi]$ are equivalent.
\end{proof}

\begin{myprop}
The process $I(P_n[A|B])_{n\geq}$ converges almost surely for any two full rank matrices $A$ and $B$ ($B$ can be $\Phi$) whose columns are linearly independent. Moreover, the limit takes its value in the set of integers.
\end{myprop}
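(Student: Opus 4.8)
The plan is to prove the almost sure convergence first and then upgrade it to convergence to an integer; the engine throughout is Lemma 3 (describing how $P\mapsto P[A|B]$ interacts with the $(\cdot)^{\pm}$ transforms), and for the integrality part we will additionally use Lemma 1. I would begin with the special case $B=\Phi$. There $P_n[A|\Phi]$ is an $n_1$-user MAC, where $n_1$ is the number of columns of $A$, and Lemma 3 gives $(P_n^-)[A|\Phi]\equiv (P_n[A|\Phi])^-$ together with the fact that $(P_n[A|\Phi])^+$ is degraded with respect to $(P_n^+)[A|\Phi]$. Applying Proposition 1 with $S$ the full user set to the MAC $P_n[A|\Phi]$ (so that it is a martingale) yields $\frac12 I\big((P_n[A|\Phi])^-\big)+\frac12 I\big((P_n[A|\Phi])^+\big)=I(P_n[A|\Phi])$, and combining these facts,
\[
E\big(I(P_{n+1}[A|\Phi])\big|P_n\big)=\tfrac12 I\big((P_n[A|\Phi])^-\big)+\tfrac12 I\big((P_n^+)[A|\Phi]\big)\ \ge\ I(P_n[A|\Phi]).
\]
Thus $\{I(P_n[A|\Phi])\}_n$ is a submartingale; since $0\le I(P_n[A|\Phi])=I(A^T\vec{X};Y)\le H(A^T\vec{X})=n_1$ it is bounded, hence it converges almost surely.

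For a general $B$, I would reduce to the previous case. Since the columns of $[A\;B]$ are linearly independent, $A^T\vec{X}$ and $B^T\vec{X}$ are independent, so by the Remark preceding Lemma 2 and the chain rule,
\[
I(P_n[A|B])=I\big(A^T\vec{X};Y\big|B^T\vec{X}\big)=I\big([A\;B]^T\vec{X};Y\big)-I\big(B^T\vec{X};Y\big)=I\big(P_n[[A\;B]|\Phi]\big)-I\big(P_n[B|\Phi]\big).
\]
Both terms on the right converge almost surely by the previous paragraph, hence $I(P_n[A|B])$ converges almost surely.

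It remains to identify the limit as an integer. First take a single column $a$, so $Q_n:=P_n[a|B]$ is a single-user channel with $I(Q_n)\in[0,1]$, converging almost surely to some $I_\infty$ by the above. Whenever $B_{n+1}=+$ we have $Q_{n+1}=(P_n^+)[a|B]$, which by Lemma 3 is an upgrade of $(P_n[a|B])^+=Q_n^+$; hence $I(Q_{n+1})\ge I(Q_n^+)\ge I(Q_n)$, the last inequality being a standard property of the single-user polar transform (it follows from Proposition 1 together with $I(Q^-)\le I(Q)$). Since $\{B_n\}$ is i.i.d.\ uniform, almost surely $B_{n+1}=+$ for infinitely many $n$, and along this (random) subsequence $I(Q_{n+1})-I(Q_n)\to 0$ because the whole sequence converges; therefore $I(Q_n^+)-I(Q_n)\to 0$ along it. Fixing $\epsilon>0$ and taking the corresponding $\delta$ from Lemma 1, we get $I(Q_n)\notin(\epsilon,1-\epsilon)$ for all large $n$ in the subsequence, so $I_\infty\notin(\epsilon,1-\epsilon)$; as $\epsilon$ is arbitrary, $I_\infty\in\{0,1\}$. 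For a general $A$ with $B=\Phi$, writing $A=[a_1\;\cdots\;a_{n_1}]$ and using the telescoping identity $I(P_n[A|\Phi])=\sum_{i=1}^{n_1}I\big(P_n\big[a_i\big|[a_1\cdots a_{i-1}]\big]\big)$ from the Remark preceding Lemma 2 shows that $I(P_n[A|\Phi])$ converges almost surely to a sum of $0$'s and $1$'s, i.e.\ to an integer; and by the identity of the second paragraph, $I(P_n[A|B])$ then converges almost surely to a difference of two integers, i.e.\ to an integer.

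The routine part is the submartingale argument for $B=\Phi$. The two points requiring care are: noticing that the clean martingale/submartingale structure is available only when $B=\Phi$, so the general case must be routed through $I(P[A|B])=I(P[[A\;B]|\Phi])-I(P[B|\Phi])$; and the integrality argument, where one must observe that a ``$+$'' step applied to $P_n$ produces a channel dominating $(P_n[a|B])^+$, so that almost sure convergence of $I(Q_n)$ forces $I(Q_n^+)-I(Q_n)\to 0$ along the almost surely infinite set of ``$+$'' times --- precisely the hypothesis of Lemma 1. Carefully translating the degradation statements of Lemma 3 into the stated inequalities between mutual informations, and checking that the subsequential limits still equal $I_\infty$, is the main technical bookkeeping, and I expect it to be the chief obstacle.
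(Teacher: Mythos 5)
Your proof is correct and follows essentially the same route as the paper's: the bounded‐submartingale argument via Lemma~3 for $B=\Phi$, the reduction $I(P_n[A|B])=I(P_n[[A\;B]|\Phi])-I(P_n[B|\Phi])$ for general $B$, the column‐by‐column analysis using Lemma~1, and the telescoping sum to pass to arbitrary $A$. The one cosmetic difference is in showing $I(P_n[\vec\alpha|B]^+)-I(P_n[\vec\alpha|B])\to 0$: the paper passes through the conditional expectation $E\big(|I(P_{n+1}[\vec\alpha|B])-I(P_n[\vec\alpha|B])|\,\big|\,P_n\big)$ and lower‐bounds it, whereas you restrict directly to the almost surely infinite subsequence of ``$+$'' steps; both are valid ways to reach the hypothesis of Lemma~1 along a subsequence and yield the same conclusion.
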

\begin{proof}
Let's start with the case $B=\Phi$, we have: $P_n^-[A|\Phi]$ is equivalent to $P_n[A|\Phi]^-$ so $I(P_n^-[A|\Phi])=I(P_n[A|\Phi]^-)$, and $P_n[A|\Phi]^+$ is degraded with respect to $P_n^+[A|\Phi]$ so $I(P_n^+[A|\Phi])\geq I(P_n[A|\Phi]^+)$.\\

Let $\vec{U}'_1$, $\vec{U}'_2$ be two independent uniform random vectors in $\mathbb{F}_q^{n_1}$ which will be the input to $P_n[A|\Phi]^-$ and $P_n[A|\Phi]^+$ respectively. $\vec{U}_1=\vec{U}'_1+\vec{U}'_2$ and $\vec{U}_2=\vec{U}'_2$ will be the inputs to two independent copies of $P_n$ respectively. Let $\vec{\tilde{U}}_1$ and $\vec{\tilde{U}}_2$ be two independent uniform random vectors in $\mathbb{F}_q^{m-n_1}$ which are independent from $\vec{U}_1$ and $\vec{U}_2$, let $\vec{X}_i=([A\;\tilde{A}]^T)^{-1}[\vec{U}_i^T \;\vec{\tilde{U}}_i^T]^T$ for $i\in\{1,2\}$ (see \emph{remark 1}), and let $Y_1$ (resp. $Y_2$) be the output of the channel $P_n$ when $\vec{X}_1$ (resp. $\vec{X}_2$) is the input. $\tilde{A}$ is any $m\times(m-n_1)$ matrix such that $[A\;\tilde{A}]$ is invertible. We have: $I(P_n[A|\Phi]^-)=I(\vec{U}'_1;Y_1Y_2)$, $I(P_n[A|\Phi]^+)=I(\vec{U}'_2;Y_1Y_2\vec{U}'_1)$, $\vec{U}_k=A^T\vec{X}_k$ and $I(A^T\vec{X}_k;Y_k)=I(P_n[A|\Phi])$ for $k\in\{1,2\}$ (see \emph{remark 1}). Thus:

\begin{align*}
I(P_n^-[A|\Phi])+I(P_n^+[A|\Phi])&\geq I(P_n[A|\Phi]^-)+I(P_n[A|\Phi]^+)= I(\vec{U}'_1;Y_1Y_2) + I(\vec{U}'_2;Y_1Y_2\vec{U}'_1)\\
&=I(\vec{U}'_1\vec{U}'_2;Y_1Y_2)=I(\vec{U}_1\vec{U}_2;Y_1Y_2)=I(A^T\vec{X}_1,A^T\vec{X}_2;Y_1Y_2)\\
&=I(A^T\vec{X}_1;Y_1) + I(A^T\vec{X}_2;Y_2)=2I(P_n[A|\Phi])
\end{align*}

Therefore $E(I(P_{n+1}[A|\Phi])|P_n)=\frac{1}{2}I(P_n^-[A|\Phi])+\frac{1}{2}I(P_n^+[A|\Phi])\geq I(P_n[A|\Phi]$. Moreover we have $I(P_n[A|\Phi])\leq n_1$, so the process $I(P_n[A|\Phi])_{n\geq0}$ is a bounded sub-martingale. By the bounded sub-martingale convergence theorem we deduce the almost sure convergence of $I(P_n[A|\Phi])_{n\geq0}$ for any full rank matrix $A$.\\

From \emph{remark 1} we have $I(P_n[A|B])=I(P_n[[A\;B]|\Phi])-I(P_n[B|\Phi])$, so $I(P_n[A|B])_{n\geq0}$ also converges almost surely for any two matrices $A$ and $B$ such that $[A\;B]$ is full rank.\\

Now suppose that $A=\vec{\alpha}$ is a column vector (i.e. $n_1=1$), the almost sure convergence of $I(P_n[\vec{\alpha}|B])_{n\geq0}$ implies the almost sure convergence of $\big|I(P_{n+1}[\vec{\alpha}|B])-I(P_n[\vec{\alpha}|B])\big|$ to zero, and so $E\Big(\big|I(P_{n+1}[\vec{\alpha}|B])-I(P_n[\vec{\alpha}|B])\big|\;\Big|P_n\Big)$ converges almost surely to zero, since $I(P_n[\vec{\alpha}|B])$ is bounded. We have also:
\begin{align*}
E\Big(\big|I(P_{n+1}[\vec{\alpha}|B])-I(P_n[\vec{\alpha}|B])\big|\;\Big|P_n\Big)
&\geq\frac{1}{2}\big(I(P_n^+[\vec{\alpha}|B])-I(P_n[\vec{\alpha}|B])\big)\\
&\geq\frac{1}{2}\big(I(P_n[\vec{\alpha}|B]^+)-I(P_n[\vec{\alpha}|B])\big)
\end{align*}
The first inequality comes from the expression of the expectation and the second one comes from the fact that $I(P_n[\vec{\alpha}|B]^+)$ is degraded with respect to $I(P_n^+[\vec{\alpha}|B])$ (\emph{lemma 3}), we conclude that $I(P_n[\vec{\alpha}|B]^+)-I(P_n[\vec{\alpha}|B])$ converges almost surely to zero.\\

Let $\{P_n\}_{n\geq0}$ be a realization in which $I(P_n[\vec{\alpha}|B])_{n\geq0}$ converges to a certain value $l\in [0,1]$ (we have $0\leq I(P_n[\vec{\alpha}|B])\leq 1$). Due to the convergence of $I(P_n^+[\vec{\alpha}|B])-I(P_n[\vec{\alpha}|B])$ to zero and of $I(P_n[\vec{\alpha}|B])$ to $l$, for every $\epsilon>0$, there exists $n_0>0$ such that for any $n>n_0$ we have $|I(P_n^+[\vec{\alpha}|B])-I(P_n[\vec{\alpha}|B])|<\delta$ where $\delta$ is as in \emph{lemma 1}, and $|I(P_n[\vec{\alpha}|B])-l|<\epsilon$. We conclude from \emph{lemma 1} that $I(P_n[\vec{\alpha}|B])\notin(\epsilon,1-\epsilon)$ and since $|I(P_n[\vec{\alpha}|B])-l|<\epsilon$ then $l\notin(2\epsilon,1-2\epsilon)$, and this is true for any $\epsilon>0$. We conclude that $l\in\{0,1\}$.\\

Now let $A=[\alpha_1\;...\;\alpha_{n_1}]$ be any full rank matrix, then by \emph{remark 1} we have:
\begin{align*}
I(P_n[A|B])=\sum_{k=1}^{n_1}I\big(P_n\big[\alpha_k|[B\;\alpha_1\;...\;\alpha_{k-1}]\big]\big)
\end{align*}

Since each of $I(P_n[\alpha_k|[B\;\alpha_1\;...\;\alpha_{k-1}]])$ converges almost surely to a value in $\{0,1\}$, then $I(P_n[A|B])$ converges almost surely to an integer.
\end{proof}

\begin{mycor}
The limit of the process $I[S](P_n)$ is almost surely an integer for all $S\subset\{1,...,m\}$.
\end{mycor}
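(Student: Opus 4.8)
The plan is to realize $I[S](P_n)$ as the symmetric capacity $I(P_n[A|B])$ of a suitable derived channel and then quote \emph{Proposition 2} directly. Concretely, write $S=\{s_1,\dots,s_{l_S}\}$ and $S^c=\{t_1,\dots,t_{m-l_S}\}$, let $e_1,\dots,e_m$ be the standard basis of $\mathbb{F}_q^m$, and set $A:=[e_{s_1}\;\cdots\;e_{s_{l_S}}]\in\mathbb{F}_q^{m\times l_S}$ and $B:=[e_{t_1}\;\cdots\;e_{t_{m-l_S}}]\in\mathbb{F}_q^{m\times(m-l_S)}$. The extreme cases are easy: when $S=\{1,\dots,m\}$ we take $B=\Phi$, and when $S=\emptyset$ the statement is trivial since $I[\emptyset](P_n)\equiv 0$. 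In all other cases $[A\;B]$ is a column permutation of the identity matrix, hence full rank $m=l_S+(m-l_S)$ with linearly independent columns, and $A$, $B$ are individually full rank, so the hypotheses of \emph{Proposition 2} are satisfied by this fixed pair $(A,B)$.

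Next I would verify the identification $I(P_n[A|B])=I[S](P_n)$. By \emph{Remark 1}, if $\vec{X}=(X_1,\dots,X_m)$ is uniform on $\mathbb{F}_q^m$ and $Y$ is the output of $P_n$ on input $\vec{X}$, then $I(P_n[A|B])=I(A^T\vec{X};Y,B^T\vec{X})$, and this quantity does not depend on the auxiliary completion $\tilde A$ used in the construction. With our choice, $A^T\vec{X}=(X_{s_1},\dots,X_{s_{l_S}})=X(S)$ and $B^T\vec{X}=X(S^c)$, and the coordinates of $\vec{X}$ are i.i.d. uniform — exactly the distribution used to define $I[S]$ in the symmetric setting. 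Hence $I(P_n[A|B])=I(X(S);Y\,X(S^c))=I[S](P_n)$.

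Finally I would conclude: \emph{Proposition 2} asserts that $I(P_n[A|B])_{n\geq0}$ converges almost surely, with limit in $\mathbb{Z}$; combined with the identification above, $I[S](P_n)_{n\geq0}$ converges almost surely to an integer. Since there are only finitely many subsets $S\subseteq\{1,\dots,m\}$, the finite intersection of these almost-sure events is again almost sure, so almost surely $I[S](P_n)$ converges to an integer simultaneously for every $S$.

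I do not expect a real obstacle here. The only points requiring care are (i) that the channel $P_n[A|B]$ is fed precisely the uniform product distribution underlying the symmetric capacity region — this is built into the description of $P[A|B]$ in \emph{Remark 1} — and (ii) that $I(P_n[A|B])$ is independent of the completion $\tilde A$, which is likewise recorded in \emph{Remark 1}. Everything else is bookkeeping about permutation matrices and a finite union bound.
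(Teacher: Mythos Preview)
Your proposal is correct and follows essentially the same approach as the paper: the paper takes $A_S=[e_k:\,k\in S]$, $B_S=[e_k:\,k\in S^c]$, observes $I[S](P_n)=I(P_n[A_S|B_S])$, and invokes Proposition~2. Your version simply spells out more of the bookkeeping (full rank of $[A\;B]$, the edge cases, and the finite intersection over all $S$), none of which the paper bothers to write down.
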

\begin{proof}
If we take $A_S=[e_k,\;k\in S]$ and $B_S=[e_k,\;k\in S^c]$, where $\{e_k\}$ is the canonical basis of $\mathbb{F}_q^m$, then $I[S](P_n)=I(P_n[A_S|B_S])$. The assertion about the limit comes from the previous proposition.
\end{proof}

\begin{mylem}(lemma 33\cite{SasogluTelYeh})
Let $X,W$ be two independent and uniformly distributed random variables in $\mathbb{F}_q$, and let $Y$ be an arbitrary random variable. For every $\epsilon'>0$, there exists $\delta>0$ such that:
\begin{itemize}
  \item $I(X,Y)<\delta$, $I(W;Y)<\delta$, $H(X|YW)<\delta$, $H(W|YX)<\delta$ and
  \item $H(\beta X+\gamma W|Y)\notin(\delta,1-\delta)$ for all $\beta,\gamma\in\mathbb{F}_q$,
\end{itemize}
implies $$I(\beta' X+ \gamma' W;Y)>1-\epsilon'$$ for some $\beta',\gamma'\in\mathbb{F}_q$
\end{mylem}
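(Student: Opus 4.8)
\noindent\emph{Sketch of a proof (proposal).} The plan is to extract from the second hypothesis a clean dichotomy over the $q+1$ ``directions'' of $\mathbb{F}_q^2$ and to show the ``all useless'' branch is impossible. Since $X,W$ are independent and uniform on $\mathbb{F}_q$, every $\beta X+\gamma W$ with $(\beta,\gamma)\neq(0,0)$ is uniform, so $I(\beta X+\gamma W;Y)=1-H(\beta X+\gamma W\mid Y)$; hence it is enough to exhibit a nontrivial $(\beta',\gamma')$ with $H(\beta'X+\gamma'W\mid Y)<\delta$ and then take $\delta<\epsilon'$. Entropies of $\beta X+\gamma W$ depend on $(\beta,\gamma)$ only up to a nonzero scalar, so there are $q+1$ relevant cases, indexed by the one-dimensional subspaces $d$ of $\mathbb{F}_q^2$; let $\ell_d$ be a linear functional with kernel $d$ and $Z_d=\ell_d(X,W)$. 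The second hypothesis forces $H(Z_d\mid Y)\in[0,\delta]\cup[1-\delta,1]$ for each $d$. If for some $d$ the first alternative holds, its coefficients answer the lemma; so everything reduces to proving that, for $\delta$ small enough depending only on $q$, the four ``soft'' hypotheses are incompatible with $H(Z_d\mid Y)\ge 1-\delta$ holding for \emph{all} $q+1$ directions.

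I would prove that incompatibility by compactness, which cleanly digests the soft hypotheses. Observe that all the relevant quantities depend on $(X,W,Y)$ only through the law $\nu$ of the conditional distribution $\mu:=\mathcal{L}\big((X,W)\mid Y\big)$, a random element of the finite-dimensional simplex $\Delta(\mathbb{F}_q^2)$: one has $I(X;Y)=1-\mathbb{E}_\nu[H(\mu_X)]$, $I(W;Y)=1-\mathbb{E}_\nu[H(\mu_W)]$ (with $\mu_X,\mu_W$ the marginals of $\mu$), $H(X\mid YW)=H(XW\mid Y)-H(W\mid Y)=\mathbb{E}_\nu[H(\mu)-H(\mu_W)]$ and symmetrically $H(W\mid YX)=\mathbb{E}_\nu[H(\mu)-H(\mu_X)]$, and $H(Z_d\mid Y)=\mathbb{E}_\nu[H(\ell_d\#\mu)]$. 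Each integrand is a bounded continuous function of $\mu$ (entropy is continuous on a finite simplex; $\mu\mapsto\mu_X,\mu_W,\ell_d\#\mu$ are linear), so all these quantities are bounded functionals of $\nu$ that are continuous for weak convergence. Now suppose the claimed incompatibility fails: then there are $\delta_n\downarrow 0$ and variables $(X_n,W_n,Y_n)$ satisfying the four soft hypotheses with $\delta=\delta_n$ and with $H(Z_d\mid Y_n)\ge 1-\delta_n$ for every $d$. The space $\Delta(\Delta(\mathbb{F}_q^2))$ is weakly compact, so a subsequence of $\nu_n:=\mathcal{L}(\mu_n)$ converges weakly to some $\nu_\infty$. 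Passing to the limit: $\mathbb{E}_{\nu_\infty}[H(\mu_X)]=\mathbb{E}_{\nu_\infty}[H(\mu_W)]=1$, so $\nu_\infty$-a.s.\ $\mu$ has uniform marginals; $\mathbb{E}_{\nu_\infty}[H(\mu)-H(\mu_W)]=\mathbb{E}_{\nu_\infty}[H(\mu)-H(\mu_X)]=0$, so $\nu_\infty$-a.s.\ each of $X,W$ is a deterministic function of the other under $\mu$; combining, $\nu_\infty$-a.s.\ $\mu$ is the uniform law on the graph $\{(x,g(x)):x\in\mathbb{F}_q\}$ of a bijection $g$ of $\mathbb{F}_q$. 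Likewise $\mathbb{E}_{\nu_\infty}[H(\ell_d\#\mu)]\ge 1$, hence $=1$ (it is always $\le 1$), for each of the $q+1$ directions; since $H(\ell_d\#\mu)\le 1$ pointwise, this gives $H(\ell_d\#\mu)=1$ $\nu_\infty$-a.s., and intersecting over the finitely many $d$, $\nu_\infty$-a.s.\ $H(\ell_d\#\mu)=1$ for \emph{all} $q+1$ directions at once.

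That is impossible: if $\mu$ is uniform on a set $S$ of $q\ge 2$ distinct points, pick two distinct points of $S$; the line through them has direction equal to one of the $q+1$ one-dimensional subspaces, say $d_0$, and the functional $\ell_{d_0}$ is constant on that pair, so $\ell_{d_0}\#\mu$ is a uniform mass on $q$ points compressed onto at most $q-1$ values and hence has entropy at most $1-\tfrac{2}{q}\log_q 2<1$ --- contradicting $H(\ell_{d_0}\#\mu)=1$. This proves the incompatibility; choosing $\delta<\min\{\epsilon',\delta_1(q)\}$ with $\delta_1(q)$ the threshold just obtained, the second hypothesis then forces some direction into the first alternative $H(Z_d\mid Y)\le\delta$, and its coefficients $(\beta',\gamma')$ give $I(\beta'X+\gamma'W;Y)=1-H(Z_d\mid Y)\ge 1-\delta>1-\epsilon'$. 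The point that needs care is the reduction: since $Y$ is arbitrary one must first notice that every quantity in the statement factors through $\nu$ --- the identity $H(X\mid YW)=H(XW\mid Y)-H(W\mid Y)$ is what makes $H(X\mid YW)$ a functional of $\nu$ --- after which the compactness and the weak continuity on the finite alphabet $\mathbb{F}_q^2$ are routine. A fully quantitative alternative --- using Fano and Pinsker to turn the four $\delta$-hypotheses into an $o_\delta(1)$ total-variation approximation of $\mu$ by a uniform-on-a-graph law for all but an $o_\delta(1)$ fraction of $y$, then invoking uniform continuity of entropy on $\mathbb{F}_q$ --- gives explicit constants at the cost of more computation, and there the bookkeeping of those estimates is the main chore.
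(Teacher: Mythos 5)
The paper does not actually prove this lemma; it is quoted verbatim as Lemma~33 of \cite{SasogluTelYeh}, so there is no in-paper argument to compare against. Your proposal is therefore necessarily a different route, and as far as I can tell it is a correct and self-contained one. The reduction is exactly right: for $(\beta,\gamma)\neq(0,0)$ the combination $\beta X+\gamma W$ is uniform, so $I(\beta X+\gamma W;Y)=1-H(\beta X+\gamma W\mid Y)$, and by the second hypothesis each of the $q+1$ projective directions is either ``good'' ($H\le\delta$) or ``bad'' ($H\ge1-\delta$); one only needs to rule out the all-bad case. The observation that every quantity in the statement is a bounded weakly continuous functional of $\nu=\mathcal{L}\big(\mathcal{L}((X,W)\mid Y)\big)$ (via the chain-rule identities $H(X\mid YW)=\mathbb{E}_\nu[H(\mu)-H(\mu_W)]$, etc.) is the crucial step that lets you apply Prokhorov compactness on $\Delta(\Delta(\mathbb{F}_q^2))$ and pass the soft hypotheses to the limit. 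The limit analysis correctly forces $\mu$ to be uniform on the graph of a bijection of $\mathbb{F}_q$ $\nu_\infty$-almost surely, and the final combinatorial contradiction is sound: any two distinct points of the graph determine a direction $d_0$ whose projection collapses them, giving $H(\ell_{d_0}\#\mu)\le1-\tfrac{2}{q}\log_q 2<1$, contradicting $H(\ell_d\#\mu)=1$ for all $q+1$ directions. The logic of extracting a threshold $\delta_1(q)$ from the contradiction argument and then taking $\delta<\min\{\epsilon',\delta_1(q)\}$ is also in order.

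The one thing worth keeping in mind when comparing to the cited source is that the compactness route is non-constructive: it produces no explicit $\delta(\epsilon',q)$. The original Lemma~33 in \cite{SasogluTelYeh} (and the analogous extremality lemmas in the $q$-ary polar-coding literature) is designed to be plugged into the same kind of martingale convergence argument the present paper uses in Proposition~3 and Lemma~5, where only the existence of such a $\delta$ is needed, so the non-constructive version suffices here. Your closing remark already acknowledges the quantitative alternative via Fano and Pinsker, which would give explicit constants at the cost of bookkeeping. In short: correct proof, genuinely different (cleaner but non-effective) route; no gap.
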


\begin{mylem}
For every $m>0$, there exists $\epsilon_m>0$ such that for any $\epsilon<\epsilon_m$, if $P$ is an $m$-user MAC satisfying $d(I(P[A|B]),\mathbb{Z})<\epsilon$ for all matrices $A$ and $B$ ($B$ can be $\Phi$), then $I(P)>1-\epsilon$ implies the existence of a non-zero vector $\vec{\alpha}\in\mathbb{F}_q^m$ satisfying $I(P[\vec{\alpha}|\Phi])>1-\epsilon$.
\end{mylem}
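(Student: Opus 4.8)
The plan is to induct on $m$, using Lemma 6 (the $q$-ary two-variable combinatorial lemma from \cite{SasogluTelYeh}) as the engine that extracts a single "good" direction from a collection of near-integer mutual informations. The base case $m=1$ is essentially Lemma 1: if $I(P)>1-\epsilon$ and $I(P^+)-I(P)$ is forced to be small by the near-integrality hypothesis applied to $A=\vec\alpha$, then $I(P)\notin(\epsilon,1-\epsilon)$ combined with $I(P)>1-\epsilon$ is consistent, and one takes $\vec\alpha$ to be the unique nonzero coordinate. So the work is the inductive step.

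First I would unpack what the hypothesis $d(I(P[A|B]),\mathbb{Z})<\epsilon$ gives us. By Remark 1, $I(P[A|B])=I(P[[A\;B]|\Phi])-I(P[B|\Phi])$, so all the relevant quantities decompose into sums/differences of the basic invariants $I(P[\vec\alpha_k\,|\,[\,\vec\alpha_1\cdots\vec\alpha_{k-1}]])$, each of which lies in $[0,1]$ and is within $O(\epsilon)$ of $\{0,1\}$ (the implied constant depending only on $m$). In particular $I(P)=I(P[I_m|\Phi])$ is a sum of $m$ such near-$\{0,1\}$ terms; since $I(P)>1-\epsilon$, at least one term — say the first one, after reordering the standard basis so that this corresponds to a chain $\vec\alpha_1,\dots,\vec\alpha_m$ — is within $O(\epsilon)$ of $1$, i.e.\ $I(P[\vec\alpha_1|[\vec\alpha_2\cdots\vec\alpha_m]])>1-O(\epsilon)$. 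This is a "conditional" good direction; the whole point is to promote it to an \emph{unconditional} one, i.e.\ to kill the conditioning matrix $B=[\vec\alpha_2\cdots\vec\alpha_m]$.

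The mechanism for removing one conditioning vector at a time is Lemma 6. Suppose we have a nonzero $\vec\alpha$ and a conditioning set $B'$ with $I(P[\vec\alpha|[B'\;\vec\beta]])>1-\eta$ for some extra vector $\vec\beta$ (so $\vec\alpha$ is good given $B'$ and $\vec\beta$), and we want $I(P[\vec\alpha'|B'])>1-\eta'$ for some nonzero $\vec\alpha'$ in $\mathrm{span}(\vec\alpha,\vec\beta)$. Set $X=\vec\alpha^T\vec X$, $W=\vec\beta^T\vec X$ (conditioned appropriately on $B'^T\vec X$, absorbed into the "arbitrary $Y$"): near-integrality of $I(P[\cdot|\cdot])$ forces $I(X;Y)$, $I(W;Y)$, $H(X|YW)$, $H(W|YX)$ each to be within $O(\epsilon)$ of an integer, hence — using $I(P[\vec\alpha|[B'\;\vec\beta]])=I(X;YW)$ close to $1$, and bounds like $I(X;YW)=I(X;Y)+H(X|Y)-H(X|YW)$ — each of the small-side quantities is actually $<\delta$, and the mixed entropies $H(\beta X+\gamma W|Y)$ are near-integer (again by near-integrality applied to the vector $\beta\vec\alpha+\gamma\vec\beta$), so they avoid $(\delta,1-\delta)$. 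Lemma 6 then yields $\beta',\gamma'$ with $I(\beta'X+\gamma'W;Y)>1-\epsilon'$, i.e.\ $\vec\alpha'=\beta'\vec\alpha+\gamma'\vec\beta$ is good given $B'$. Iterating this $m-1$ times, peeling off $\vec\alpha_m,\vec\alpha_{m-1},\dots,\vec\alpha_2$ in turn, lands us at a nonzero $\vec\alpha\in\mathbb{F}_q^m$ with $I(P[\vec\alpha|\Phi])>1-\epsilon$, provided we start the peeling with $1-\eta$ close enough to $1$ that the finitely many ($m-1$) applications of Lemma 6, each degrading the gap by a fixed amount, still finish above $1-\epsilon$; this is exactly where $\epsilon_m$ and the smallness requirement $\epsilon<\epsilon_m$ come from.

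The main obstacle is bookkeeping the error propagation correctly: each invocation of Lemma 6 requires its \emph{input} thresholds ($\delta$) to be small, and produces an \emph{output} guarantee ($1-\epsilon'$) that is weaker, so I must run the chain of implications in the right direction — fix the final target $\epsilon$, pull back through $m-1$ applications of Lemma 6 to get a required intermediate smallness, and then choose $\epsilon_m$ (a function of $m$, $q$) small enough that the hypothesis $d(I(P[A|B]),\mathbb{Z})<\epsilon<\epsilon_m$ simultaneously makes (i) all the "small" informations/conditional-entropies genuinely below the needed $\delta$ at every stage, and (ii) the extracted conditional good direction at stage $0$ exceed the needed intermediate threshold. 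A secondary subtlety is verifying that conditioning on $B'^T\vec X$ is legitimately absorbable into the "arbitrary $Y$" of Lemma 6 — i.e.\ that $X=\vec\alpha^T\vec X$ and $W=\vec\beta^T\vec X$ remain independent and uniform on $\mathbb{F}_q$ given $B'^T\vec X$ whenever $[B'\;\vec\alpha\;\vec\beta]$ has full rank — which follows from the uniformity of $\vec X$ on $\mathbb{F}_q^m$ and a rank/linear-algebra argument identical in spirit to Remark 1.
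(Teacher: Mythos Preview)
Your overall strategy --- extract a conditionally good direction from the chain-rule decomposition of $I(P)$ and then repeatedly invoke the two-variable lemma (Lemma 4 in the paper, your ``Lemma 6'') to strip off conditioning vectors --- is the same as the paper's, which packages the iteration as an induction on $m$. The gap is in your threshold management. You propose to ``fix the final target $\epsilon$, pull back through $m-1$ applications of Lemma 6 \ldots\ and then choose $\epsilon_m$''; but the quantifiers in the statement are $\exists\,\epsilon_m\ \forall\,\epsilon<\epsilon_m$, so $\epsilon_m$ must be chosen \emph{before} $\epsilon$ and must work for every smaller $\epsilon$. Concretely, at the last peel you would be invoking the two-variable lemma with output target $\epsilon'=\epsilon$; one of its input hypotheses (say $I(X;Y)<\delta(\epsilon')$) is controlled only by the near-integrality slop, i.e.\ by $\epsilon$ itself, so you would need $\epsilon\le\delta(\epsilon)$. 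Nothing in that lemma promises $\delta(\epsilon')\ge\epsilon'$, and generically $\delta(\epsilon')\ll\epsilon'$, so this step fails. Your phrase ``each degrading the gap by a fixed amount'' is likewise unsupported: $\epsilon'\mapsto\delta(\epsilon')$ is an unspecified function, not an additive shift.

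The paper sidesteps all of this with a device you overlook: it applies the two-variable lemma with a \emph{fixed} target $\epsilon'=\tfrac{1}{3}$, so that $\delta=\delta(\tfrac{1}{3})$ is an absolute constant, obtains $I(\beta'X+\gamma'W;Y)>\tfrac{2}{3}$, and then reuses the near-integrality hypothesis on the resulting quantity $I\big(P[\beta'\vec\alpha''+\gamma'\vec e_m\,|\,\Phi]\big)$: since it lies within $\epsilon$ of an integer and exceeds $\tfrac{2}{3}>\epsilon$, it must in fact exceed $1-\epsilon$. This ``snap-back'' resets the gap to $\epsilon$ after each use of the lemma, so there is no error accumulation whatsoever, and one may take $\epsilon_m=\min\{\delta(\tfrac{1}{3}),\tfrac{1}{3},\tfrac{1}{m+1}\}$, depending only on $m$ and $q$. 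If you insert this observation at each peeling step your iterative argument goes through with the same $\epsilon_m$; without it, the bookkeeping you describe cannot close.
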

\begin{proof}

Choose $\delta$ as in the above lemma for $\epsilon'=\frac{1}{3}$, and then choose $\epsilon_m=\min\{\delta,\frac{1}{3},\frac{1}{m+1}\}$ (note that our choice of $\epsilon_m$ is non-increasing with $m$). Let $\{\vec{e}_k,\;1\leq k\leq m\}$ be the canonical basis of $\mathbb{F}_q^m$, let $0<\epsilon<\epsilon_m$ and let $P$ be an $m$-user MAC satisfying $d(I(P[A|B]),\mathbb{Z})<\epsilon$ for all matrices $A$ and $B$.\\

We will prove the lemma by induction on $m$. If $m=1$, we set $\vec{\alpha}=[1]$, so $I(P[\vec{\alpha}|\Phi])=I(P)>1-\epsilon$. For $m>1$, from \emph{remark 1} we have:

$$1-\epsilon<I(P)=I\big(P\big[[\vec{e}_1\;...\;\vec{e}_m]\big|\Phi\big]\big)=\sum_{k=1}^m I\big(P\big[\vec{e}_k\big|[\vec{e}_{k+1}\;...\;\vec{e}_m]\big]\big)$$

If $I\big(P\big[\vec{e}_k\big|[\vec{e}_{k+1}\;...\;\vec{e}_m]\big]\big)<\epsilon$ for all $k$, we get $1-\epsilon<m\epsilon \Rightarrow \epsilon>\frac{1}{m+1}$ which is a contradiction, so there exists at least one $k$ satisfying $I\big(P\big[\vec{e}_k\big|[\vec{e}_{k+1}\;...\;\vec{e}_m]\big]\big)>1-\epsilon$. If $k>1$, then $P\big[[\vec{e}_k\;...\;\vec{e}_m]\big|\Phi\big]$ has $m-k+1<m$ users, and $I\big(P\big[[\vec{e}_k\;...\;\vec{e}_m]\big|\Phi\big]\big)\geq I\big(P\big[\vec{e}_k\big|[\vec{e}_{k+1}\;...\;\vec{e}_m]\big]\big)>1-\epsilon$. By induction we get a vector $\vec{\alpha}'\in\mathbb{F}_q^{m-k+1}$ such that $I\big(P\big[[\vec{e}_k\;...\;\vec{e}_m]\big|\Phi\big][\vec{\alpha}'|\Phi]\big)>1-\epsilon$. Let $\vec{\alpha}=[\vec{e}_k\;...\;\vec{e}_m]\vec{\alpha}'$, then by \emph{lemma 2} we have $I(P[\vec{\alpha}|\Phi])=
I\big(P\big[[\vec{e}_k\;...\;\vec{e}_m]\big|\Phi\big][\vec{\alpha}'|\Phi]\big)>1-\epsilon$ and we are done.\\

If $k=1$ then $I\big(P\big[\vec{e_1}\big|[\vec{e}_2\;...\;\vec{e}_m]\big]\big)>1-\epsilon$, so we have (see \emph{remark 1}):
\begin{align*}
I\big(P\big[[\vec{e}_1\;...\;\vec{e}_{m-1}]\big|\vec{e}_m\big]\big)&=I\big(P\big[[\vec{e}_2\;...\;\vec{e}_{m-1}]\big|\vec{e}_m\big]\big)+I\big(P\big[\vec{e_1}\big|[\vec{e}_2\;...\;\vec{e}_m]\big]\big)\\
&\geq I\big(P\big[\vec{e_1}\big|[\vec{e}_2\;...\;\vec{e}_m]\big]\big)>1-\epsilon
\end{align*}
$P\big[[\vec{e}_1\;...\;\vec{e}_{m-1}]\big|\vec{e_m}\big]$ has $m-1$ users. Therefore, by induction we can get a vector $\vec{\alpha}'\in\mathbb{F}_q^{m-1}$ such that $I(P\big[[\vec{e}_1\;...\;\vec{e}_{m-1}]\big|\vec{e}_m\big][\vec{\alpha}'|\Phi]\big)>1-\epsilon$. Let $\vec{\alpha}''=[\vec{e}_1\;...\;\vec{e}_{m-1}]\vec{\alpha}'$, then we have $I(P[\vec{\alpha}''|\vec{e}_m])=I(P\big[[\vec{e}_1\;...\;\vec{e}_{m-1}]\big|\vec{e}_m\big][\vec{\alpha}'|\Phi]\big)>1-\epsilon$ (see \emph{lemma 2}). If $\vec{U}$ and $Y$ are the input and output to the channel $P$ respectively then $I(X;YW)=I(P[\vec{\alpha}''|\vec{e}_m])>1-\epsilon$, where $X=\vec{\alpha}''^T\vec{U}$ and $W=\vec{e}_m^T\vec{U}$ (see \emph{remark 1}). If $I(X;Y)>1-\epsilon$ or $I(W;Y)>1-\epsilon$, we set $\vec{\alpha}=\vec{\alpha}''$ or $\vec{\alpha}=\vec{e}_m$ respectively and we are done.\\

If $I(X;Y)<\epsilon$ and $I(W;Y)<\epsilon$, we have $I(X;YW)>1-\epsilon$, so $H(X|YW)<\epsilon$. $I(XW;Y)=I(X;Y)+I(W;YX)=I(W;X)+I(X;YW)$ which implies $I(W;YX)\geq I(X;YW) - I(X;Y) >1-2\epsilon>\frac{1}{3}>\epsilon$, thus $I(W;YX)>1-\epsilon$ and $H(W|YX)<\epsilon$. Moreover, from the hypothesis we have $I(\beta X+\gamma W;Y)=I(P[\beta\vec{\alpha}''+\gamma\vec{e_m}|\Phi])\notin(\epsilon,1-\epsilon)$ so $H(\beta X+\gamma W|Y)\notin(\epsilon,1-\epsilon)$ for all $\beta,\gamma\in \mathbb{F}_q$.\\

Notice that $\epsilon<\epsilon_m\leq\delta$. Therefore, by the above lemma there exist $\beta',\gamma'\in\mathbb{F}_q$ such that $I(\beta' X+ \gamma' W;Y)>1-\epsilon'=\frac{2}{3}>\epsilon$ which implies $I(P[\vec{\alpha}|\Phi])=I(\beta' X+ \gamma' W;Y)>1-\epsilon$ for $\vec{\alpha}=\beta' \vec{\alpha}''+ \gamma' \vec{e}_m$.\\
\end{proof}

\begin{myprop}
For any $\epsilon<\epsilon_m$, if $P:U_1...U_m\rightarrow Y$ is an $m$-user MAC that satisfies $d(I(P[A|B]),\mathbb{Z})<\epsilon$ for all matrices $A$ and $B$ ($B$ can be $\Phi$), then there exists a matrix $A_P$ of rank $r$ such that $|I({A_P}^T\vec{U};Y)-I(P)|<2\epsilon$, $|I({A_P}^T\vec{U};Y)-r|<\epsilon$ and $|I(P)-r|<\epsilon$.
\end{myprop}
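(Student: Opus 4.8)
The plan is to build $A_P$ one column at a time by repeatedly invoking \emph{lemma 5}, and then to use the hypothesis a second time so as to turn a crude estimate into the claimed sharp one. To begin, applying the hypothesis to the invertible matrix $A=[\vec{e}_1\,\cdots\,\vec{e}_m]$ (so that $P[A|\Phi]$ is equivalent to $P$) with $B=\Phi$ gives $d(I(P),\mathbb{Z})<\epsilon$; since $\epsilon<\epsilon_m\leq\frac{1}{m+1}<\frac12$, there is a unique integer $r$ with $0\leq r\leq m$ and $|I(P)-r|<\epsilon$, which is the third asserted inequality. It then remains to produce a rank-$r$ matrix $A_P$ with $|I(A_P^T\vec{U};Y)-r|<\epsilon$ — the case $r=0$ being trivial with $A_P=\Phi$ — after which the first inequality follows from the triangle inequality $|I(A_P^T\vec{U};Y)-I(P)|\leq|I(A_P^T\vec{U};Y)-r|+|r-I(P)|<2\epsilon$.

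The core step is the following claim, to be proved by induction on $m$: there exist linearly independent $\vec{\alpha}_1,\dots,\vec{\alpha}_r\in\mathbb{F}_q^m$ with $I\big(P\big[\vec{\alpha}_k\big|[\vec{\alpha}_1\,\cdots\,\vec{\alpha}_{k-1}]\big]\big)>1-\epsilon$ for every $1\leq k\leq r$. When $r\geq1$ one has $I(P)>r-\epsilon\geq1-\epsilon$, so \emph{lemma 5} supplies a nonzero $\vec{\alpha}_1$ with $I(P[\vec{\alpha}_1|\Phi])>1-\epsilon$; I would complete it to a basis $\vec{\alpha}_1,\vec{\beta}_2,\dots,\vec{\beta}_m$ of $\mathbb{F}_q^m$ and pass to the $(m-1)$-user MAC $P':=P\big[[\vec{\beta}_2\,\cdots\,\vec{\beta}_m]\,\big|\,\vec{\alpha}_1\big]$. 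By \emph{remark 1}, $I(P')=I(P)-I(P[\vec{\alpha}_1|\Phi])$, so from $I(P)\in(r-\epsilon,r+\epsilon)$ and $I(P[\vec{\alpha}_1|\Phi])\in(1-\epsilon,1]$ one gets $I(P')\in(r-1-\epsilon,\,r-1+2\epsilon)$, whence (as $2\epsilon<1$) the nearest integer to $I(P')$ is $r-1$. By \emph{lemma 2}, every channel $P'[A'|B']$ equals some $P[A''|B'']$ with $[A''\;B'']$ full rank, so $P'$ inherits the standing hypothesis with the same $\epsilon$, and $\epsilon<\epsilon_m\leq\epsilon_{m-1}$ since $\epsilon_m$ is non-increasing in $m$; hence the induction hypothesis gives linearly independent $\vec{\alpha}'_2,\dots,\vec{\alpha}'_r\in\mathbb{F}_q^{m-1}$ with $I\big(P'\big[\vec{\alpha}'_k\big|[\vec{\alpha}'_2\,\cdots\,\vec{\alpha}'_{k-1}]\big]\big)>1-\epsilon$. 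Setting $\vec{\alpha}_k:=[\vec{\beta}_2\,\cdots\,\vec{\beta}_m]\,\vec{\alpha}'_k$ and applying \emph{lemma 2} once more shows $P'\big[\vec{\alpha}'_k\big|[\vec{\alpha}'_2\,\cdots\,\vec{\alpha}'_{k-1}]\big]$ is equivalent to $P\big[\vec{\alpha}_k\big|[\vec{\alpha}_1\,\cdots\,\vec{\alpha}_{k-1}]\big]$, and $\vec{\alpha}_1,\dots,\vec{\alpha}_r$ are linearly independent because $\vec{\alpha}_2,\dots,\vec{\alpha}_r\in\mathrm{span}(\vec{\beta}_2,\dots,\vec{\beta}_m)$ while $\vec{\alpha}_1$ is not.

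Granting the claim, I would set $A_P=[\vec{\alpha}_1\,\cdots\,\vec{\alpha}_r]$, of rank $r$. By \emph{remark 1}, $I(A_P^T\vec{U};Y)=I(P[A_P|\Phi])=\sum_{k=1}^{r}I\big(P\big[\vec{\alpha}_k\big|[\vec{\alpha}_1\,\cdots\,\vec{\alpha}_{k-1}]\big]\big)$ is a sum of $r$ terms each in $(1-\epsilon,1]$, hence it lies in $(r-r\epsilon,\,r]$; since $r\epsilon\leq m\epsilon<\frac{m}{m+1}<1$, the only integer in that interval is $r$, so the hypothesis $d(I(P[A_P|\Phi]),\mathbb{Z})<\epsilon$ sharpens the estimate to $|I(A_P^T\vec{U};Y)-r|<\epsilon$, completing the argument. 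The hardest part is the inductive claim: one must verify carefully that the residual channel $P'$ still meets the hypotheses of \emph{lemma 5} (transferring the distance-to-$\mathbb{Z}$ condition via \emph{lemma 2} and using monotonicity of $\epsilon_m$ to keep $\epsilon<\epsilon_{m-1}$) and that the associated integer drops by exactly one at each reduction. The only other subtlety is the two-stage estimate above, where the loose bound $r\epsilon<1$ is used first to pin down the nearest integer and the distance-to-$\mathbb{Z}$ hypothesis is then used a second time to recover the sharp constant $\epsilon$.
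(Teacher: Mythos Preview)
Your argument is correct, but it takes a genuinely different route from the paper's. The paper does not build $A_P$ column by column; instead it defines in one stroke the set $V=\{\vec{\alpha}:\vec{\alpha}=\vec{0}\ \text{or}\ I(\vec{\alpha}^T\vec{U};Y)>1-\epsilon\}$, proves directly that $V$ is a subspace of $\mathbb{F}_q^m$ (using the hypothesis and $\epsilon<\epsilon_m\leq\tfrac13$), takes $A_P$ to be any basis of $V$, and extends it by a complementary matrix $B$. \emph{Lemma 5} is then invoked only once, on the channel $P[B|\Phi]$, to rule out $I(P[B|\Phi])>1-\epsilon$ (since that would produce a ``good'' vector outside $V$); hence $I(P[B|\Phi])<\epsilon$, and the three inequalities follow from the decomposition $I(P)=I(B^T\vec{U};Y)+I(A_P^T\vec{U};Y,B^T\vec{U})$ together with the same $r\epsilon<1-\epsilon$ sharpening you use. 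Your inductive scheme, by contrast, applies \emph{lemma 5} at every step to peel off one column and then passes to the $(m-1)$-user residual channel, relying on \emph{lemma 2} to transfer the hypothesis and on the monotonicity of $\epsilon_m$ (noted in the proof of \emph{lemma 5}) to keep $\epsilon<\epsilon_{m-1}$. The paper's approach buys a structural description of $A_P$---its column span is exactly the set of all ``good'' directions---and avoids the bookkeeping of checking that $r$ drops by exactly one at each reduction; your approach is more algorithmic and sidesteps the subspace verification entirely. Both reach the same sharpening step $|I(A_P^T\vec{U};Y)-r|<r\epsilon<1-\epsilon\Rightarrow|I(A_P^T\vec{U};Y)-r|<\epsilon$, and both are valid.
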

\begin{proof}
Let $V=\{\vec{\alpha}\in\mathbb{F}_q^m:\vec{\alpha}=\vec{0}\;or\;I(\vec{\alpha}^T\vec{U};Y)>1-\epsilon\}$, then $V$ is a subspace of $\mathbb{F}_q^m$:\\

Let $\vec{\alpha}\in V$ and $\beta\in\mathbb{F}_q$, if $\vec{\alpha}=\vec{0}$ or $\beta=0$ then $\beta\vec{\alpha}=\vec{0}\in V$. If $\vec{\alpha}\neq \vec{0}$ and $\beta\neq 0$, then $I(\beta\vec{\alpha}^T\vec{U};Y)=I(\vec{\alpha}^T\vec{U};Y)>1-\epsilon$ and $\beta\vec{\alpha}\in V$.\\

Suppose $m>1$, let $\vec{\alpha}_1,\vec{\alpha}_2\in V$ and $\beta_1,\beta_2\in\mathbb{F}_q$. We can suppose that $\vec{\alpha}_1,\vec{\alpha}_2\neq\vec{0}$ and $\beta_1,\beta_2\neq0$ because otherwise we would be in the previous case. Let $U'=\vec{\alpha}_1^T\vec{U}$, $U''=\vec{\alpha}_2^T\vec{U}$, $X'=\beta_1 U' + \beta_2 U''$ and $X''=\beta_1 U' + (\beta_2+1) U''$ then the transformation $(U',U'')\rightarrow(X',X'')$ is invertible. Thus:
\begin{align*}
I(X';Y)+I(X'';YX')&=I(X'X'';Y)=I(U'U'';Y)=I(U';Y)+I(U'';YU')\\
&\geq I(U';Y)+I(U'';Y) > 2-2\epsilon
\end{align*}

If $I(X';Y)<\epsilon$ then $I(X'';YX')>2-3\epsilon>2-\frac{3}{3}=1$ which is a contradiction (remember that $\epsilon<\epsilon_m\leq\frac{1}{3}$). So $I\big((\beta_1\vec{\alpha}_1+\beta_2\vec{\alpha}_2)^T\vec{U};Y\big)=I(X';Y)>1-\epsilon$, and $\beta_1\vec{\alpha}_1+\beta_2\vec{\alpha}_2\in V$. Therefore, $V$ is a subspace of $\mathbb{F}_q^m$.\\

Let $\vec{\alpha}_1,...,\vec{\alpha}_r$ be a basis of $V$, and let $\vec{\alpha}_{r+1},...,\vec{\alpha}_m$ be $m-r$ vectors extending $\{\vec{\alpha}_1,...,\vec{\alpha}_r\}$ to a basis of $F_q^m$. Define the two matrices $A:=[\vec{\alpha}_1\;...\;\vec{\alpha}_r]$ and $B:=[\vec{\alpha}_{r+1}\;...\;\vec{\alpha}_m]$.\\

If $I(P[B|\Phi])=I(B^T\vec{U};Y)>1-\epsilon$, then by considering the channel $P[B|\Phi]$ we get by \emph{lemma 5} a vector $\vec{\beta}\in\mathbb{F}_q^{m-r}$ such that $I\big(P\big[B\big|\Phi\big][\vec{\beta}|\Phi]\big)>1-\epsilon$, but this means that $I((B\vec{\beta})^T\vec{U};Y)=I(P[B\vec{\beta}|\Phi])=I\big(P\big[B\big|\Phi\big][\vec{\beta}|\Phi]\big)>1-\epsilon$ (see \emph{lemma 2}) and so $B\vec{\beta}\in V$, and therefore $B\vec{\beta}$ can be written as a linear combination of the vectors of $A$ which form a basis for $V$. But this is a contradiction since the vectors of $A$ and $B$ are linearly independent. Therefore, we must have $0\leq I(P[B|\Phi])<\epsilon$.\\

On the other hand, we have:
\begin{align*}
I(A^T\vec{U};Y)=I([\vec{\alpha}_1\;...\;\vec{\alpha}_r]^T\vec{U};Y)=\sum_{k=1}^r I(\vec{\alpha}_k^T\vec{U};Y,[\vec{\alpha}_1\;...\;\vec{\alpha}_{k-1}]^T\vec{U})
\geq \sum_{k=1}^r I(\vec{\alpha}_k^T\vec{U};Y) > r-r\epsilon
\end{align*}

So $r-r\epsilon<I(A^T\vec{U};Y)\leq r$, and $|I(A^T\vec{U};Y)-r|<r\epsilon\leq \frac{m}{m+1}=1-\frac{1}{m+1}<1-\epsilon$ (remember that $\epsilon<\epsilon_m\leq\frac{1}{m+1}$), but $I(A^T\vec{U};Y)=I(P[A|\Phi])$, and from the hypothesis we have $d(I(P[A|\Phi]),\mathbb{Z})<\epsilon$, so $r$ is the closest integer to $I(P[A|\Phi])$, thus $r-\epsilon<I(A^T\vec{U};Y)\leq r$ and $|I(A^T\vec{U};Y)-r|<\epsilon$. Moreover $I(A^T\vec{U};Y,B^T\vec{U})\leq r$ so we get: $$r-\epsilon< I(A^T\vec{U};Y)\leq I(A^T\vec{U};Y,B^T\vec{U}) \leq I(A^T\vec{U};Y,B^T\vec{U}) + I(B^T\vec{U};Y)<r+\epsilon$$ and since the matrix $[A\;B]^T$ is invertible, then we have:
$$I(P)=I(\vec{U};Y)=I([A\; B]^T\vec{U};Y)=I(B^T\vec{U};Y)+I(A^T\vec{U};Y,B^T\vec{U})$$
Therefore, $r-\epsilon< I(P) <r+\epsilon$ since $r-\epsilon< I(A^T\vec{U};Y,B^T\vec{U}) + I(B^T\vec{U};Y) <r+\epsilon$. We conclude that $|I(P)-r|<\epsilon$, and $|I(A^T\vec{U};Y)-I(P)|<2\epsilon$ because we already have $|I(A^T\vec{U};Y)-r|<\epsilon$.
\end{proof}

Now we are ready to prove \emph{theorem 2}:

\begin{proof} \emph{(of theorem 2)}\\

The processes $I(P_n[A|B])$ converge almost surely to integers, and therefore the maximal distance between $I(P_n[A|B])$ and the set of integers converges almost surely to zero. For large enough $n$, this maximal distance becomes less than $\epsilon_m$ and by the previous proposition we conclude that the channels $P_n$ almost surely become almost deterministic linear channels.\\

For the sake of accuracy, we provide the following rigorous proof. Let $P$ be an $m$-user MAC, and $\epsilon>0$. Let $\delta=\frac{1}{2}\min\{\epsilon,\epsilon_m\}$. For $n,l\in \mathbb{N}^\star$, define the event $\mathcal{T}_{n,k}$ as:
\begin{align*}
\mathcal{T}_{n,k}=\Big\{\forall n_1\in[1,m]\cap\mathbb{N},\;&\forall n_2\in[0,m-n_1]\cap\mathbb{N},\; \forall[A\;B]\in\mathbb{F}_q^{m\times(n_1+n_2)}:\\ &\textrm{rank}([A\;B])=n_1+n_2\Rightarrow d(I(P_n[A|B]),\mathbb{Z})<\frac{1}{k}\;\Big\}
\end{align*}
Since the processes $I(P_n[A|B])$ converge almost surely to integer values for all matrices $A$ and $B$ such that $[A\;B]$ is full rank, then the event $\displaystyle\mathcal{T}=\bigcap_{k\geq1}\bigcup_{l\geq1}\bigcap_{n\geq l}\mathcal{T}_{n,k}$ has probability 1. Let $k>\frac{1}{\delta}$, $\displaystyle \textrm{Pr}\Big(\bigcup_{l\geq1}\bigcap_{n\geq l}\mathcal{T}_{n,k}\Big)\geq \textrm{Pr}\Big(\bigcap_{k\geq1}\bigcup_{l\geq1}\bigcap_{n\geq l}\mathcal{T}_{n,k}\Big)=1$. The events $\displaystyle\bigcap_{n\geq l}\mathcal{T}_{n,k}$ are increasing with $l$, so $\displaystyle \textrm{Pr}\Big(\bigcup_{l\geq1}\bigcap_{n\geq l}\mathcal{T}_{n,k}\Big)=\lim_{l\to\infty}\textrm{Pr}\Big(\bigcap_{n\geq l}\mathcal{T}_{n,k}\Big)=1$. We conclude that:
$$\lim_{l\to\infty}\textrm{Pr}(\mathcal{T}_{l,k})\geq\lim_{l\to\infty}\textrm{Pr}\Big(\bigcap_{n\geq l}\mathcal{T}_{n,k}\Big)=1$$
The event $\mathcal{T}_{l,k}$ implies $d(I(P_l[A|B]),\mathbb{Z})<\frac{1}{k}<\delta<\epsilon_m$ for all matrices $A$ and $B$ satisfying $[A\;B]$ is full rank. Then by \emph{proposition 3}, there exists a matrix $A_{P_l}$ of rank $r$ such that $|I({A_{P_l}}^T\vec{U}_{P_l};Y_{P_l})-I({P_l})|<2\delta<\epsilon$, $|I({A_{P_l}}^T\vec{U}_{P_l};Y_{P_l})-r|<\delta<\epsilon$ and $|I({P_l})-r|<\delta<\epsilon$, where $\vec{U}_{P_l}$ and $Y_{P_l}$ are the input and output of $P_l$ respectively. We conclude that the event $\mathcal{T}_{l,k}$ implies the event $\mathcal{C}_l$ defined by:
$$\mathcal{C}_l = \Big\{\exists A_{P_l}\in\mathbb{F}_q^{m\times m},\;\textrm{rank}(A_{P_l})=r_{P_l}, |I(A_{P_l}^T\vec{U}_{P_l};Y_{P_l})-I(P_l)|<\epsilon, |r_{P_l}-I(P_l)|<\epsilon \Big\}$$
So $\displaystyle\lim_{l\to\infty}\textrm{Pr}(\mathcal{C}_l)=1$. By examining the explicit expression of $\textrm{Pr}(\mathcal{C}_l)$ we get:

\begin{align*}
\lim_{l\to\infty} \frac{1}{2^l}\Big|\big\{ & s\in\{-,+\}^l: \exists A_{s}\in\mathbb{F}_q^{m\times r_s}, \textrm{rank}(A_{s})=r_s,\\
& |I(A_{s}^T\vec{U}_{s};Y_{s})-I(P^s)|<\epsilon, |r_s-I(P^s)|<\epsilon \big\}\Big| = 1
\end{align*}
\end{proof}

\section{Rate of polarization}

Now we are interested in the rate of polarization of $P_n$ into deterministic linear channels.

\begin{mydef}
The \emph{Battacharyya parameter} of a single user channel $Q$ with input alphabet $\mathcal{X}$ and output alphabet $\mathcal{Y}$ is defined as:
$$Z(Q)=\frac{1}{|\mathcal{X}|(|\mathcal{X}|-1)}\sum_{\substack{(x,x')\in\mathcal{X}\times\mathcal{X}\\x\neq x'}}\sum_{y\in\mathcal{Y}}\sqrt{Q(y|x)Q(y|x')}$$
\end{mydef}

It's known that $P_e(Q)\leq qZ(Q)$ (see \cite{SasogluTelAri}), where $P_e(Q)$ is the probability of error of the maximum likelihood decoder of $Q$.

\begin{mylem}
Let $P$ be an $m$-user MAC. For any $\vec{\alpha}\in\mathbb{F}_q^m$ we have:
$$\lim_{l\to\infty}\frac{1}{2^l}\Big|\big\{s\in\{-,+\}^l:\;I(P^s[\vec{\alpha}|\Phi])>1-\epsilon, Z(P^s[\vec{\alpha}|\Phi])\geq2^{-{2^{\beta l}}}\big\}\Big|=0$$
for all $0<\epsilon<1$, $0<\beta<\frac{1}{2}$.
\end{mylem}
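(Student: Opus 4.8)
The plan is to reduce this to the single-user rate-of-polarization theorem of \cite{ArikanTelatar}. We may assume $\vec{\alpha}\neq\vec{0}$, since otherwise $I(P^s[\vec{\alpha}|\Phi])=0$ for every $s$ and there is nothing to prove. Write $Q_n:=P_n[\vec{\alpha}|\Phi]$; this is a single-user $\mathbb{F}_q$-channel, so $I(Q_n)\in[0,1]$ and $Z(Q_n)\in[0,1]$. Since the $B_n$ are i.i.d.\ uniform on $\{-,+\}$, the fraction in the statement is exactly $\textrm{Pr}\big[\,I(Q_l)>1-\epsilon,\ Z(Q_l)\geq 2^{-2^{\beta l}}\,\big]$, and it suffices to show this tends to $0$ as $l\to\infty$.

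First I would describe how the Bhattacharyya parameter of $Q_n$ evolves along the process. By \emph{lemma 3} applied with $A=\vec{\alpha}$, $B=\Phi$: if $B_{n+1}=-$ then $Q_{n+1}=P_n^-[\vec{\alpha}|\Phi]$ is equivalent to $P_n[\vec{\alpha}|\Phi]^-=Q_n^-$; if $B_{n+1}=+$ then $Q_n^+=P_n[\vec{\alpha}|\Phi]^+$ is degraded with respect to $P_n^+[\vec{\alpha}|\Phi]=Q_{n+1}$. Since equivalent channels share the same $Z$ and degradation can only increase $Z$, and since for prime $q$ there is a constant $K=K(q)\geq1$ with $Z(Q^-)\leq K\,Z(Q)$ and $Z(Q^+)\leq K\,Z(Q)^2$ (see \cite{SasogluTelAri}), the process $Z_n:=Z(Q_n)$ satisfies $Z_{n+1}\leq K\,Z_n$ for every realization of $B_{n+1}$, and $Z_{n+1}\leq K\,Z_n^2$ whenever $B_{n+1}=+$. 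These are precisely the hypotheses consumed by the large-deviations argument of \cite{ArikanTelatar}, which then yields: for every $\beta<\tfrac12$, almost every realization for which $Z_n\to0$ satisfies $Z_n\leq 2^{-2^{\beta n}}$ for all large $n$; equivalently, $\textrm{Pr}\big[\,Z_l\to0,\ Z_l\geq 2^{-2^{\beta l}}\,\big]\to0$.

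Next I would bring in the information quantity. From the column-vector case treated in the proof of \emph{proposition 2}, $I(Q_n)=I(P_n[\vec{\alpha}|\Phi])$ converges almost surely to a limit in $\{0,1\}$. Combining this with the elementary single-user bounds relating $I$ and $Z$ (any pair giving $Z(Q)\to0\iff I(Q)\to1$ and $I(Q)\to0\iff Z(Q)\to1$), we get that, almost surely, the event $\{Z_l\to0\}$ coincides with $\{I(Q_l)\to1\}$ and its complement with $\{I(Q_l)\to0\}$. Then
\begin{align*}
\textrm{Pr}\big[\,I(Q_l)>1-\epsilon,\ Z_l\geq 2^{-2^{\beta l}}\,\big]
&\leq \textrm{Pr}\big[\,Z_l\to0,\ Z_l\geq 2^{-2^{\beta l}}\,\big]\\
&\quad + \textrm{Pr}\big[\,I(Q_l)>1-\epsilon,\ I(Q_l)\to0\,\big].
\end{align*}
The first term vanishes by the previous step; the second vanishes by bounded convergence, since on $\{I(Q_l)\to0\}$ the indicator $\mathbf{1}[I(Q_l)>1-\epsilon]$ is eventually $0$. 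This gives the claim.

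The hard part will be the middle step: verifying that the recursion bounds $Z_{n+1}\leq K Z_n$ and $Z_{n+1}\leq K Z_n^2$ on $\{B_{n+1}=+\}$ are genuinely all that the rate-of-polarization argument of \cite{ArikanTelatar} uses — in particular that it tolerates the multiplicative constant $K$, and does not require $Z_n$ to be the Bhattacharyya process of one fixed single-user channel polarized $n$ times (here it only dominates such a process, via the degradation in \emph{lemma 3}) nor to converge everywhere. Establishing the two $q$-ary Bhattacharyya inequalities with an explicit constant $K(q)$ is routine but is the computational core underneath.
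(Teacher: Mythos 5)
Your proposal is correct and follows essentially the same route as the paper's proof: both invoke \emph{lemma 3} to compare $Z(P_n^\pm[\vec{\alpha}|\Phi])$ with $Z(P_n[\vec{\alpha}|\Phi]^\pm)$, both use the $q$-ary Bhattacharyya recursion from \cite{SasogluTelAri}, and both then apply Theorem 1 of \cite{ArikanTelatar} together with the a.s.\ convergence of $I(P_n[\vec{\alpha}|\Phi])$ to $\{0,1\}$. As to your closing concern, Theorem 1 of \cite{ArikanTelatar} is stated for abstract $[0,1]$-valued processes under exactly the hypotheses you list (tolerating the multiplicative constant and requiring only a.s.\ convergence, not that $Z_n$ be the Bhattacharyya process of a fixed polarized channel), so the application is legitimate as is.
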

\begin{proof}
$I(P_n[\vec{\alpha}|\Phi])$ converges almost surely to $0$ or $1$, and this means that $Z(P_n[\vec{\alpha}|\Phi])$ converges also almost surely to 0 or 1 due to the relations between the quantities $I(Q)$ and $Z(Q)$ (see \emph{proposition 3} of \cite{SasogluTelAri}).\\

$Z(P^+[\vec{\alpha}|\Phi])\leq Z(P[\vec{\alpha}|\Phi]^+)$ since $P[\vec{\alpha}|\Phi]^+$ is degraded with respect to $P^+[\vec{\alpha}|\Phi]$, and $Z(P^-[\vec{\alpha}|\Phi])=Z(P[\vec{\alpha}|\Phi]^-)$ since $P[\vec{\alpha}|\Phi]^-$ and $P^-[\vec{\alpha}|\Phi]$ are equivalent. From \cite{SasogluTelAri} we have:
$$Z(P[\vec{\alpha}|\Phi]^-)\leq qZ(P[\vec{\alpha}|\Phi])\;\;\;\;\textrm{and}\;\;\;\;Z(P[\vec{\alpha}|\Phi]^+)=Z(P[\vec{\alpha}|\Phi])^2$$
Now we can apply \emph{theorem 1} of \cite{ArikanTelatar} to get $$\displaystyle\lim_{n\to\infty}\textrm{Pr}\big(I(P_n[\vec{\alpha}|\Phi])>1-\epsilon, Z(P_n[\vec{\alpha}|\Phi])\geq2^{-{2^{n\beta}}}\big)=0$$ by examining the explicit expression of the last probability we get the result.
\end{proof}

\begin{mythe}
The convergence of $P_n$ into deterministic linear channels is almost surely fast:
\begin{align*}
\lim_{l\to\infty} \frac{1}{2^l}\Big|\big\{ & s\in\{-,+\}^l: \exists A_{s}=[\vec{\alpha}_1\;...\;\vec{\alpha}_{r_s}]\in\mathbb{F}_q^{m\times r_s}, \emph{\textrm{rank}}(A_{s})=r_s:\\
& |I(A_{s}^T\vec{U}_{s};Y_{s})-I(P^s)|<\epsilon, |r_s-I(P^s)|<\epsilon, \sum_{k=1}^{r_s} Z(P^s[\vec{\alpha}_k|\Phi])<2^{-2^{\beta l}} \big\}\Big| = 1
\end{align*}
for all $0<\epsilon<1$, $0<\beta<\frac{1}{2}$. $\vec{U}_s$ and $Y_s$ are the input and output of $P^s$ respectively.
\end{mythe}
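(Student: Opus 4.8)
The plan is to combine Theorem 2 (which gives the polarization of the $P_n$ into almost‑deterministic linear channels) with Lemma 6 (which gives the fast decay of the Battacharyya parameters of the one‑dimensional projections $P^s[\vec\alpha|\Phi]$). First I would fix $0<\epsilon<1$ and $0<\beta<\frac12$. By Theorem 2, the set of $s\in\{-,+\}^l$ for which there exists a rank‑$r_s$ matrix $A_s$ with $|I(A_s^T\vec U_s;Y_s)-I(P^s)|<\epsilon$ and $|r_s-I(P^s)|<\epsilon$ has density tending to $1$. The only thing missing is the extra condition $\sum_{k=1}^{r_s}Z(P^s[\vec\alpha_k|\Phi])<2^{-2^{\beta l}}$, where $A_s=[\vec\alpha_1\;\dots\;\vec\alpha_{r_s}]$. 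So the real content is: once $P^s$ is almost a deterministic linear channel of rank $r_s$, each column $\vec\alpha_k$ of the matrix $A_s$ produced by Proposition 3 satisfies $I(P^s[\vec\alpha_k|\Phi])>1-\epsilon$, and for such projections Lemma 6 forces $Z(P^s[\vec\alpha_k|\Phi])$ to be small for all but a vanishing fraction of $s$.

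The key steps, in order: (1) Recall from the proof of Proposition 3 that the matrix $A_{P}$ is built from a basis $\vec\alpha_1,\dots,\vec\alpha_r$ of the subspace $V=\{\vec\alpha:\vec\alpha=\vec 0\text{ or }I(\vec\alpha^T\vec U;Y)>1-\epsilon'\}$ for a suitable small $\epsilon'$; in particular every column $\vec\alpha_k$ satisfies $I(P[\vec\alpha_k|\Phi])=I(\vec\alpha_k^T\vec U;Y)>1-\epsilon'$. (2) Fix $\epsilon'$ small enough (e.g. $\epsilon'<\min\{\epsilon,\epsilon_m\}/2$ as in the proof of Theorem 2) so that the event "$P_l$ is almost a rank‑$r$ deterministic linear channel with some such $A_{P_l}$" has probability tending to $1$. (3) For a \emph{fixed} nonzero $\vec\alpha\in\mathbb{F}_q^m$, Lemma 6 gives that the density of $s$ with $I(P^s[\vec\alpha|\Phi])>1-\epsilon'$ but $Z(P^s[\vec\alpha|\Phi])\ge 2^{-2^{\beta l}}$ tends to $0$. (4) Take a union bound over the finitely many nonzero $\vec\alpha\in\mathbb{F}_q^m$ (there are $q^m-1$ of them, a constant independent of $l$): the density of $s$ for which \emph{some} nonzero $\vec\alpha$ has $I(P^s[\vec\alpha|\Phi])>1-\epsilon'$ and $Z(P^s[\vec\alpha|\Phi])\ge 2^{-2^{\beta l}}/m$ still tends to $0$. (5) Intersect the good event from step (2) with the complement of the bad event from step (4): on this intersection, each of the $r_s\le m$ columns $\vec\alpha_k$ has $I(P^s[\vec\alpha_k|\Phi])>1-\epsilon'$, hence $Z(P^s[\vec\alpha_k|\Phi])<2^{-2^{\beta l}}/m$, so $\sum_{k=1}^{r_s}Z(P^s[\vec\alpha_k|\Phi])<m\cdot 2^{-2^{\beta l}}/m=2^{-2^{\beta l}}$, and the $I$ and rank estimates hold with $\epsilon$. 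The density of this intersection tends to $1$, which is exactly the claim.

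A minor technical point to handle carefully in step (4) is the threshold bookkeeping: Lemma 6 is stated with threshold $2^{-2^{\beta l}}$, and I want the per‑column bound $Z<2^{-2^{\beta l}}/m$ so that the sum over at most $m$ columns stays below $2^{-2^{\beta l}}$. This costs nothing: since $m$ is a constant and $2^{-2^{\beta l}}/m \ge 2^{-2^{\beta' l}}$ for any $\beta<\beta'<\frac12$ and $l$ large, I can first apply Lemma 6 with an exponent $\beta'$ strictly between $\beta$ and $\frac12$ and threshold $2^{-2^{\beta' l}}$, then note $2^{-2^{\beta' l}}\le 2^{-2^{\beta l}}/m$ for $l$ large, so a vanishing fraction of $s$ has any column with $Z(P^s[\vec\alpha_k|\Phi])\ge 2^{-2^{\beta l}}/m$. (Alternatively one reindexes $l\mapsto l$ and absorbs the constant into $\beta$; either works.)

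The main obstacle is essentially conceptual rather than computational: one must make sure the matrix $A_s$ whose existence is asserted by Theorem 2 / Proposition 3 can be taken to be \emph{the same} matrix whose columns one then feeds into Lemma 6 — i.e. that the columns of $A_s$ are genuinely "good" one‑dimensional projections (members of the subspace $V$), not merely that $A_s$ as a whole has $I(A_s^T\vec U;Y)$ close to $r_s$. This is why I would reach back into the construction in the proof of Proposition 3 rather than treat Theorem 2 as a black box. Once that identification is in place, everything else is a union bound over the constant‑size set $\mathbb{F}_q^m\setminus\{\vec 0\}$ together with the two already‑proved limiting statements, and no new estimate is needed.
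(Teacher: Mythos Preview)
Your proposal is correct and follows essentially the same architecture as the paper's proof: intersect the good set from Theorem~2 with the complement of the union (over all nonzero $\vec\alpha\in\mathbb{F}_q^m$) of the bad sets from Lemma~6, take a finite union bound, and absorb the factor $m$ (or $r_s$) via an auxiliary exponent $\beta'\in(\beta,\tfrac12)$.

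The one place where you diverge is in arguing that each column $\vec\alpha_k$ of $A_s$ has $I(P^s[\vec\alpha_k|\Phi])$ close to $1$. You propose to open the proof of Proposition~3 and use that the columns of $A_{P_l}$ are chosen as a basis of the subspace $V$. The paper instead derives this directly from the conditions defining $E_1$, treating Theorem~2 as a black box: from $|I(A_s^T\vec U_s;Y_s)-r_s|<2\epsilon$ and the chain-rule decomposition
\[
I(P^s[A_s|\Phi])=I(P^s[\vec\alpha_k|\Phi])+I\big(P^s\big[[\vec\alpha_1\,\dots\,\vec\alpha_{k-1}\,\vec\alpha_{k+1}\,\dots\,\vec\alpha_{r_s}]\big|\vec\alpha_k\big]\big),
\]
together with the trivial bound $r_s-1$ on the second term, one gets $I(P^s[\vec\alpha_k|\Phi])>1-2\epsilon$ immediately. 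So contrary to the concern in your final paragraph, one does \emph{not} need to reach inside Proposition~3; the matrix $A_s$ delivered by Theorem~2 already has columns with the required property, purely by virtue of the two inequalities $|I(A_s^T\vec U_s;Y_s)-I(P^s)|<\epsilon$ and $|r_s-I(P^s)|<\epsilon$. Both routes are valid and of comparable length; the paper's is slightly cleaner in that it keeps Theorem~2 as a self-contained input.
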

\begin{proof}
Let $\beta<\beta'<\frac{1}{2}$, define:
$$E_{\vec{\alpha}}=\big\{s\in\{-,+\}^l:\;I(P^s[\vec{\alpha}|\Phi])>1-2\epsilon, Z(P^s[\vec{\alpha}|\Phi])\geq2^{-{2^{\beta' l}}}\big\}\;\;\textrm{for} \;\vec{\alpha}\in\mathbb{F}_q^m$$

\begin{align*}
E_1=\big\{s\in\{-,+\}^l: \exists A_{s}\in\mathbb{F}_q^{m\times r_s}, \textrm{rank}(A_{s})=r_s,|I(A_{s}^T\vec{U}_{s};Y_{s})-I(P^s)|<\epsilon, |r_s-I(P^s)|<\epsilon \big\}
\end{align*}

\begin{align*}
E_2=\big\{ & s\in\{-,+\}^l: \exists A_{s}=[\vec{\alpha}_1\;...\;\vec{\alpha}_{r_s}]\in\mathbb{F}_q^{m\times r_s}, \textrm{rank}(A_{s})=r_s,\\
& |I(A_{s}^T\vec{U}_{s};Y_{s})-I(P^s)|<\epsilon, |r_s-I(P^s)|<\epsilon, \sum_{k=1}^{r_s} Z(P^s[\vec{\alpha}_k|\Phi])<r_s2^{-2^{\beta' l}} \big\}
\end{align*}\\

If $s\in E_1/\big(\bigcup_{\vec{\alpha}\in\mathbb{F}_q^m}E_{\vec{\alpha}}\big)$ then $\exists A_{s}=[\vec{\alpha}_1\;...\;\vec{\alpha}_{r_s}]\in\mathbb{F}_q^{m\times r_s}$ such that $\textrm{rank}(A_{s})=r_s$, $|r_s-I(P^s)|<\epsilon$ and $|I(A_{s}^T\vec{U}_{s};Y_{s})-I(P^s)|<\epsilon$ (so $|I(A_{s}^T\vec{U}_{s};Y_{s})-r_s|<2\epsilon$). For $1\leq k \leq m$ we have:

\begin{align*}
I(A_{s}^T\vec{U}_{s};Y_{s})&=I(P^s[A_s|\Phi])=I(P^s[\vec{\alpha}_k|\Phi])+I\big(P^s\big[[\vec{\alpha}_1\;...\;\vec{\alpha}_{k-1} \vec{\alpha}_{k+1}\;...\;\vec{\alpha}_m]\big|\vec{\alpha}_k\big]\big)\\&>r-2\epsilon\;\forall k\in\{1,...,r_s\}
\end{align*}

and since $I\big(P^s\big[[\vec{\alpha}_1\;...\;\vec{\alpha}_{k-1} \vec{\alpha}_{k+1}\;...\;\vec{\alpha}_m]\big|\vec{\alpha}_k\big]\big)\leq r-1$, then $I(P^s[\vec{\alpha}_k|\Phi])>1-2\epsilon$ which implies $Z(P^s[\vec{\alpha}_k|\Phi])<2^{-2^{\beta' l}}$ since $s\notin E_{\vec{\alpha}_k}$ for all $k\in\{1,...,r_s\}$. So $\displaystyle \sum_{k=1}^{r_s} Z(P^s[\vec{\alpha}_k|\Phi])<r_s2^{-2^{\beta' l}}$, and therefore $s\in E_2$. Thus $\displaystyle E_1/\big(\bigcup_{\vec{\alpha}\in\mathbb{F}_q^m}E_{\vec{\alpha}}\big)\subset E_2$ and $\displaystyle|E_2|\geq|E_1|-\sum_{\vec{\alpha}\in\mathbb{F}_q^m}|E_{\vec{\alpha}}|$. By \emph{theorem 2} and \emph{lemma 6} we have:

$$1\geq\lim_{l\to\infty}\frac{1}{2^l}|E_2|\geq\lim_{l\to\infty}\frac{1}{2^l}\big(|E_1|-\sum_{\vec{\alpha}\in\mathbb{F}_q^m}|E_{\vec{\alpha}}|\big)=1-0=1$$

By noticing that $r_s2^{-2^{\beta' l}}\leq m2^{-2^{\beta' l}}<2^{-2^{\beta l}}$ for $l$ large enough, we conclude the limit in the theorem.
\end{proof}

\section{Polar codes construction}

Choose $0<\epsilon<1$ and $0<\beta<\beta'<\frac{1}{2}$, let $l$ be an integer such that $q2^l2^{-2^{\beta' l}}<2^{-2^{\beta l}}$ and $\frac{1}{2^l}|E_l|>1-\frac{\epsilon}{2m}$, where

\begin{align*}
E_l = \big\{ & s\in\{-,+\}^l: \exists r_s,\exists A_{s}=[\vec{\alpha}_1\;...\;\vec{\alpha}_{r_s}]\in\mathbb{F}_q^{m\times r_s}, \textrm{rank}(A_{s})=r_s:\\
& |I(A_{s}^T\vec{U}_{s};Y_{s})-I(P^s)|<\frac{\epsilon}{2}, |r_s-I(P^s)|<\frac{\epsilon}{2}, \sum_{k=1}^{r_s} Z(P^s[\vec{\alpha}_k|\Phi])<2^{-2^{\beta' l}} \big\}
\end{align*}

Such an integer exists due to \emph{theorem 3}.\\

For each $s\in\{-,+\}^l$, if $s\notin E_l$ set $F(s,k)=1\;\forall k\in\{1,...,k\}$, and if $s\in E_l$ choose a matrix $A_s$ of rank $r_s$ which satisfies the conditions in $E_l$, then choose a set of $r_s$ indices $S_s=\{i_1,...i_{r_s}\}$ such that the corresponding rows of $A_s$ are linearly independent then set $F(s,k)=1$ if $k\notin S_s$, and $F(s,k)=0$ if $k\in S_s$. $F(s,k)=1$ indicates that the user $k$ is frozen in the channel $P^s$, i.e. no useful information is being sent.\\

A polar code is constructed as follows: The user $k$ sends a symbol $U_{s,k}$ through a channel equivalent to $P^s$. If $F(s,k)=0$, $U_{s,k}$ is an information symbol, and if $F(s,k)=1$, $U_{s,k}$ is a certain frozen symbol. Since we are free to choose any value for the frozen symbols, we will analyze the performance of the polar code averaged on all the possible choices of the frozen symbols, so we will consider that $U_{s,k}$ are independent random variables, uniformly distributed in $\mathbb{F}_q$ $\forall s\in\{-,+\}^l,\forall k\in\{1,...,m\}$. However, the value of $U_{s,k}$ will be revealed to the receiver if $F(s,k)=1$, and if $F(s,k)=0$ the receiver has to estimate $U_{s,k}$ from the output of the channel.\\

We associate the set $\{-,+\}^l$ with the strict total order $<$ defined as $s_1...s_l<s_1'...s_l'$ if and only if $s_i=-,s_i'=+$ for some $i\in\{1,...,l\}$ and $s_h=s_h'\;\forall h>i$.

\subsection{Encoding}
Let $\{P_s\}_{s\in\{-,+\}^l}$ be a set of $2^l$ independent copies of the channel $P$. Do not confuse $P_s$ with $P^s$, $P_s$ is a copy of the channel $P$ and $P^s$ is the polarized channel obtained from $P$ as before.\\

Define $U_{s_1,s_2,k}$ for $s_1\in\{-,+\}^{l'},s_2\in\{-,+\}^{l-l'}$, $0\leq l'\leq l$ inductively as:
\begin{itemize}
  \item $U_{\Phi,s,k}=U_{s,k}$ if $l'=0$, $s\in\{-,+\}^l$.
  \item $U_{(s_1;-),s_2,k}=U_{s_1,(s_2;+),k}+U_{s_1,(s_2;-),k}$ if $l'>0$, $s_1\in\{-,+\}^{l'-1}$, $s_2\in\{-,+\}^{l-l'}$.
  \item $U_{(s_1;+),s_2,k}=U_{s_1,(s_2;+),k}$ if $l'>0$, $s_1\in\{-,+\}^{l'-1}$, $s_2\in\{-,+\}^{l-l'}$.
\end{itemize}

The user $k$ sends $U_{s,\Phi,k}$ through the channel $P_s$ for all $s\in\{-,+\}^l$. Let $Y_s$ be the output of the channel $P_s$, and let $Y=\{Y_s\}_{s\in\{-,+\}^l}$. We can prove by induction on $l'$ that the channel $\vec{U}_{s_1,s_2,k}\rightarrow \big(\{Y_s\}_{s\;has\;s_1\; as\;prefix},\{\vec{U}_{s'}\}_{s'<s_2}\big)$ is equivalent to $P^{s_2}$. In particular, the channel $\vec{U}_{s}\rightarrow \big(Y,\{\vec{U}_{s'}\}_{s'<s}\big)$ is equivalent to the channel $P^s$.

\subsection{Decoding}
If $s\notin E_l$ then $F(s,k)=1$ for all $k$, and the receiver knows all $U_{s,k}$, there is nothing to decode. Suppose that $s\in E_l$, if we know $\{\vec{U}_{s'}\}_{s'<s}$ then we can estimate $\vec{U}_{s}$ as follows:

\begin{itemize}
  \item If $F(s,k)=1$ then we know $U_{s,k}$.
  \item We have $F(s,k)=0$ for $r_s$ values of $k$ corresponding to $r_s$ linearly independent rows of $A_s$. So if we know $A_s^T\vec{U}_s$, we can recover $U_{s,k}$ for $F(s,k)=0$.
  \item If $A_s=[\vec{\alpha}_1\;...\;\vec{\alpha}_{r_s}]$, then we can estimate $A_s^T\vec{U}_s$ by estimating $\vec{\alpha}_h^T\vec{U}_s$ for $h\in\{1,...,r_s\}$.
  \item Since $\vec{\alpha}_h^T\vec{U}_s\rightarrow \big(Y,\{U_{s',k}\}_{s'<s}\big)$ is equivalent to $P^s[\vec{\alpha}_h|\Phi]$, we can estimate $\vec{\alpha}_h^T\vec{U}_s$ using the maximum likelihood decoder of $P^s[\vec{\alpha}_h|\Phi]$.
  \item Let $\mathcal{D}_s(Y,\{\vec{U}_{s'}\}_{s'<s})$ be the estimate of $\vec{U}_s$ obtained from $(Y,\{\vec{U}_{s'}\}_{s'<s})$ by the above procedure.
\end{itemize}

This motivates the following successive cancelation decoder:

\begin{itemize}
  \item $\hat{\vec{U}}_s=\vec{U}_s$ if $s\notin E_l$.
  \item $\hat{\vec{U}}_s=\mathcal{D}_s(Y,\{\hat{\vec{U}}_{s'}\}_{s'<s})$ if $s\in E_l$.
\end{itemize}

\subsection{Performance of polar codes}

If $s\in E_l$, the probability of error in estimating $\vec{\alpha}_h^T\vec{U}_s$ using the maximum likelihood decoder is upper bounded by $qZ(P^s[\vec{\alpha}_h|\Phi])$. So the probability of error in estimating $A_s^T\vec{U}_s$ is upper bounded by $\displaystyle \sum_{k=1}^{r_s} qZ(P^s[\vec{\alpha}_k|\Phi])<q2^{-2^{\beta' l}}$. Therefore, the probability of error in estimating $\vec{U}_s$ from $(Y,\{\vec{U}_{s'}\}_{s'<s})$ is upper bounded by $q2^{-2^{\beta' l}}$ when $s\in E_l$\\

Note that $\mathcal{D}_s(Y,\{\vec{U}_{s'}\}_{s'<s})=\vec{U}_s,\;(\forall s\in E_l) \Leftrightarrow \mathcal{D}_s(Y,\{\hat{\vec{U}}_{s'}\}_{s'<s})=\vec{U}_s\;(\forall s\in E_1)$, so the probability of error of the above successive cancelation decoder is upper bounded by $$\sum_{s\in E_l} \textrm{Pr}\big(\mathcal{D}_s(Y,\{\vec{U}_{s'}\}_{s'<s})\neq \vec{U}_{s}\big) < |E_l|q2^{-2^{\beta'l}}\leq q2^l2^{-2^{-\beta'l}} < 2^{-2^{\beta l}}$$ The above upper bound was calculated on average over a random choice of the frozen symbols. Therefore, there is at least one choice of the frozen symbols for which the upper bound of the probability of error still holds.\\

The last thing to discuss is the rate vector of polar codes. The rate at which the user $k$ is communicating is $\displaystyle R_k=\frac{1}{2^l}\sum_{s\in E_l}\big(1-F(s,k)\big)$, the sum rate is:
$$\displaystyle R=\sum_{1\leq k\leq m}R_k=\frac{1}{2^l}\sum_{1\leq k\leq m}\sum_{s\in E_l}\big(1-F(s,k)\big)=\frac{1}{2^l}\sum_{s\in E_l}r_s$$
We have $|I(P^s)-r_s|<\frac{\epsilon}{2}$ and $I(P^s)<r_s+\frac{\epsilon}{2}$ for all $s\in E_l$. And since we have $\displaystyle\sum_{s\in\{-,+\}^l} I(P^s)=2^lI(P)$ we conclude:
\begin{align*}
I(P)&=\frac{1}{2^l}\sum_{s\in \{-,+\}^l} I(P^s)= \frac{1}{2^l}\sum_{s\in E_l}I(P^s) + \frac{1}{2^l}\sum_{s\in E_l^c}I(P^s)<\frac{1}{2^l}\sum_{s\in E_l}(r_s+\frac{\epsilon}{2}) + \frac{1}{2^l}|E_l^c|m \\&< R + \frac{1}{2^l}|E_l|\frac{\epsilon}{2} + m\frac{\epsilon}{2m}\leq R+\frac{\epsilon}{2}+\frac{\epsilon}{2}=R+\epsilon
\end{align*}

To this end we have proved the following theorem which is the main result in this report:

\begin{mythe}
For every $0<\epsilon<1$ and for every $0<\beta<\frac{1}{2}$, there exists a polar code of length $N$ having a sum rate $R>I(P)-\epsilon$ and a probability of error $P_e<2^{-N^\beta}$.
\end{mythe}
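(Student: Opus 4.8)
This theorem records the outcome of the construction and analysis carried out in this section, so the plan is simply to assemble those pieces while keeping track of the parameters. Fix $0<\epsilon<1$ and $0<\beta<\frac{1}{2}$, pick an auxiliary exponent $\beta'$ with $\beta<\beta'<\frac{1}{2}$, and invoke \emph{Theorem 3} to choose a depth $l$ large enough that both $q2^l2^{-2^{\beta' l}}<2^{-2^{\beta l}}$ and $\frac{1}{2^l}|E_l|>1-\frac{\epsilon}{2m}$ hold, with $E_l$, the matrices $A_s=[\vec{\alpha}_1\;...\;\vec{\alpha}_{r_s}]$, the freezing pattern $F(s,k)$ and the estimator $\mathcal{D}_s$ as defined above. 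Set $N=2^l$; each user sends one symbol through each of the $2^l$ independent copies $P_s$ of $P$, these symbols being produced from the information and frozen symbols $U_{s,k}$ by the recursive transform $U_{(s_1;-),s_2,k}=U_{s_1,(s_2;+),k}+U_{s_1,(s_2;-),k}$, $U_{(s_1;+),s_2,k}=U_{s_1,(s_2;+),k}$. User $k$ is frozen in $P^s$ exactly when $F(s,k)=1$, and decoding is by successive cancellation along the total order $<$ on $\{-,+\}^l$.

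First I would confirm, by induction on the prefix length $l'$, that the synthetic channel $\vec{U}_{s_1,s_2}\to\big(\{Y_s:s_1\text{ is a prefix of }s\},\{\vec{U}_{s'}:s'<s_2\}\big)$ is equivalent to $P^{s_2}$: the base case $l'=0$ is the definition of $P$, and the inductive step is exactly the definition of $P^{-}$ and $P^{+}$ applied coordinatewise over the $m$ users. In particular $\vec{U}_s\to(Y,\{\vec{U}_{s'}:s'<s\})$ is equivalent to $P^s$, and $\vec{\alpha}_h^T\vec{U}_s\to(Y,\{\vec{U}_{s'}:s'<s\})$ is equivalent to $P^s[\vec{\alpha}_h|\Phi]$; this is what lets $\mathcal{D}_s$ recover $A_s^T\vec{U}_s$ coordinate by coordinate with the maximum likelihood decoders of the $P^s[\vec{\alpha}_h|\Phi]$, and then $\vec{U}_s$ itself from $A_s^T\vec{U}_s$ together with the revealed frozen symbols. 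For any fixed assignment of the frozen symbols and any $s\in E_l$ we get $\textrm{Pr}\big(\mathcal{D}_s(Y,\{\vec{U}_{s'}:s'<s\})\neq\vec{U}_s\big)\leq\sum_{h=1}^{r_s}qZ(P^s[\vec{\alpha}_h|\Phi])<q2^{-2^{\beta' l}}$ from $P_e(Q)\leq qZ(Q)$ and the defining inequality of $E_l$, while for $s\notin E_l$ there is nothing to decode. Since the successive cancellation decoder and $\mathcal{D}_s$ fed the genuine past coincide on the event that every earlier block decoded correctly, a union bound gives block error probability $<|E_l|q2^{-2^{\beta' l}}\leq q2^l2^{-2^{\beta' l}}<2^{-2^{\beta l}}=2^{-N^\beta}$; as this is an average over a uniform choice of the frozen symbols, some fixed choice does at least as well.

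It remains to lower bound the sum rate $R=\frac{1}{2^l}\sum_{s\in E_l}r_s$. Combining $I(P^s)<r_s+\frac{\epsilon}{2}$ for $s\in E_l$, the trivial bound $I(P^s)\leq m$ for $s\notin E_l$, the conservation of sum capacity $\sum_{s\in\{-,+\}^l}I(P^s)=2^lI(P)$ coming from the martingale part of \emph{Proposition 1}, and $\frac{1}{2^l}|E_l^c|<\frac{\epsilon}{2m}$, one obtains $I(P)<R+\frac{\epsilon}{2}+\frac{\epsilon}{2}$, i.e. $R>I(P)-\epsilon$. I do not expect a genuine obstacle here: the substantive work is already contained in \emph{Theorems 2 and 3}. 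The only delicate points are the two sketched above — the induction identifying the synthetic channels with the $P^s$, and the observation that $\mathcal{D}_s$ with the true past and the successive cancellation decoder agree on the ``all previous blocks correct'' event — which is the familiar subtlety in the error analysis of polar codes.
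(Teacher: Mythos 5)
Your proposal is correct and follows essentially the same route as the paper: fix $\beta<\beta'<\frac{1}{2}$, invoke \emph{theorem 3} to choose $l$ with $q2^l2^{-2^{\beta'l}}<2^{-2^{\beta l}}$ and $\frac{1}{2^l}|E_l|>1-\frac{\epsilon}{2m}$, identify the synthetic channels with the $P^s$, bound the block error by a union bound over $E_l$ using $P_e(Q)\leq qZ(Q)$, and obtain the sum-rate bound from the martingale identity $\sum_s I(P^s)=2^lI(P)$. The two points you flag as ``delicate'' (the induction identifying the synthetic channels, and the agreement of $\mathcal{D}_s$ fed the true past with the SC decoder on the ``all-correct'' event) are handled the same way in the paper, the latter as the displayed equivalence in section 5.3.
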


A final note to report is that by changing our choice of the indices in $S_s$, the rate vector of the polar code moves at a distance of at most $\epsilon$ along the dominant face of the capacity region achievable by polar codes. However, the dominant face of the initial capacity region can be strictly bigger than the dominant face achievable by polar codes.

\section{Case study}

In this section, we are interested in studying the problem of loss in the capacity region by polarization in the special case of channels which are combination of deterministic linear channels.

\begin{mydef}
An $m$-user MAC $P$ is said to be a combination of $n$ linear channels, if there are $n$ matrices $A_1,...,A_n$, $(A_k\in\mathbb{F}_q^{m\times m_k})$ such that $P$ is equivalent to the channel $\displaystyle P_l:\mathbb{F}_q\times...\times\mathbb{F}_q\rightarrow\bigcup_{k=1}^n \{k\}\times \mathbb{F}_q^{m_k}$ defined by:

\begin{equation*}
P_l(k,\vec{y}|\vec{x})=
\begin{cases}
p_k\;&\emph{\textrm{if}}\;A_k^T\vec{x}=\vec{y}\\
0\;&\emph{\textrm{otherwise}}
\end{cases}
\;\;\forall k\in\{1,...,n\},
\forall \vec{y}\in\mathbb{F}_q^{m_k},\forall\vec{x}\in\mathbb{F}_q^m
\end{equation*}

where $\displaystyle \sum_{k=1}^n p_k=1$ and $p_k\neq0\;\forall k$. The channel $P_l$ is denoted by $P_l=\displaystyle\sum_{k=1}^n p_k \mathcal{C}_{A_k}$.
\end{mydef}

The channel $P_l$ can be seen as a box where we have a collection of matrices. At each channel use, a matrix $A_k$ from the box is chosen randomly according to the probabilities $p_k$. The output of the channel is $A_k^T\vec{x}$, together with the index $k$ (so the receiver knows which matrix has been used).

\subsection{Characterizing non-losing channels}

In the case of channels that are combination of linear channels, we are interested in finding in the channels whose capacity region is preserved upon the polarization process.

\begin{myprop}
If $\{A_k,A_k':1\leq k\leq n\}$ is a set of matrices such that $\textrm{\emph{span}}(A_k)=\textrm{\emph{span}}(A_k')\;\forall k\in\{1,...m\}$, then the two channels $\displaystyle P=\sum_{k=1}^n p_k \mathcal{C}_{A_k}$ and $\displaystyle P'=\sum_{k=1}^n p_k \mathcal{C}_{A_k'}$ are equivalent.
\end{myprop}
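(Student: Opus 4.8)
The plan is to show that the two channels $P = \sum_k p_k \mathcal{C}_{A_k}$ and $P' = \sum_k p_k \mathcal{C}_{A_k'}$ are equivalent by exhibiting, for each $k$, an invertible map between the output spaces $\mathbb{F}_q^{m_k}$ and $\mathbb{F}_q^{m_k'}$ of the individual linear channels $\mathcal{C}_{A_k}$ and $\mathcal{C}_{A_k'}$ that is compatible with the channel law. Two channels are equivalent when each is degraded with respect to the other; since the index $k$ is passed transparently to the receiver in both channels and the mixing probabilities $p_k$ coincide, it suffices to treat each $k$ separately and to produce a bijective (hence information-lossless in both directions) relabeling of the outputs of $\mathcal{C}_{A_k}$ onto those of $\mathcal{C}_{A_k'}$.

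First I would record the elementary linear-algebra fact underlying everything: if $\mathrm{span}(A_k) = \mathrm{span}(A_k')$, then the column spaces agree, so $r_k := \mathrm{rank}(A_k) = \mathrm{rank}(A_k')$ and there is an invertible matrix $M_k \in \mathbb{F}_q^{m_k' \times m_k}$ on the relevant coordinate subspaces with $A_k' = A_k M_k$ — more carefully, since the row spaces of $A_k^T$ and $A_k'^T$ (equivalently the column spaces of $A_k, A_k'$) coincide, each column of $A_k'$ is a linear combination of columns of $A_k$ and vice versa, giving matrices $M_k, N_k$ with $A_k' = A_k M_k$, $A_k = A_k' N_k$. (If the $A_k$ are not assumed full column rank one passes to a full-rank representative spanning the same space; the proposition's statement only involves the span, and by the normalization preceding this section we may take the generating matrices to be full column rank $r_k$, so $M_k$ is genuinely invertible $r_k \times r_k$.) Then $A_k^T \vec{x} = \vec{y}$ if and only if $A_k'^T \vec{x} = M_k^T \vec{y}$, so the linear map $\vec{y} \mapsto M_k^T \vec{y}$ is a bijection carrying the output of $\mathcal{C}_{A_k}$ to the output of $\mathcal{C}_{A_k'}$ for every input $\vec{x}$.

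Next I would assemble these per-$k$ bijections into a single bijection $\Psi$ of the output alphabet $\bigcup_k \{k\} \times \mathbb{F}_q^{m_k}$ of $P$ onto that of $P'$, acting as $(k, \vec{y}) \mapsto (k, M_k^T \vec{y})$. A direct substitution then shows $P'(\Psi(k,\vec{y}) \mid \vec{x}) = P'(k, M_k^T\vec{y} \mid \vec{x}) = p_k\,\mathbf{1}[A_k'^T\vec{x} = M_k^T\vec{y}] = p_k\,\mathbf{1}[A_k^T\vec{x} = \vec{y}] = P(k,\vec{y} \mid \vec{x})$, using the equivalence $A_k^T\vec{x} = \vec{y} \Leftrightarrow A_k'^T\vec{x} = M_k^T\vec{y}$ established above. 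Since $\Psi$ is invertible (with inverse built from the $N_k$, or simply $(M_k^T)^{-1}$), the channels are related by a one-to-one output relabeling, hence mutually degraded, hence equivalent. I would conclude by noting that equivalence of channels is preserved under the $P \mapsto P^{\pm}$ operations and under the $P \mapsto P[A|B]$ operations (this is immediate from the definitions, and in any case is what Lemma 3-type arguments use), so the whole polarization process is unaffected; but that last remark is not needed for the proposition itself.

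The only genuine subtlety — and the step I would be most careful about — is the normalization of the generating matrices: the proposition is phrased purely in terms of spans, so one must first reduce to the case where $A_k$ and $A_k'$ are full column rank of the common rank $r_k$ (replacing a non-full-rank generator by any full-rank matrix with the same column space yields an equivalent channel, since the redundant output coordinates are deterministic functions of the others). Once that reduction is in place, $M_k$ is square and invertible and the rest is a one-line computation. I do not expect any real obstacle beyond bookkeeping.
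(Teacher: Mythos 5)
Your proof is correct and is essentially the paper's argument made explicit: the paper's proof is the one-line observation that since $\mathrm{span}(A_k)=\mathrm{span}(A_k')$, the output $A_k^T\vec{x}$ determines $A_k'^T\vec{x}$ and vice versa, so each channel is a deterministic relabeling of the other; you simply exhibit the matrices $M_k$, $N_k$ realizing that relabeling. One small point: there is no "normalization preceding this section" in the paper fixing the generators to be full column rank, but the reduction you supply (replacing $A_k$ by a full-rank generator of the same span, which is itself an equivalence) is sound and closes that gap.
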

\begin{proof}
If $\textrm{span}(A_k)=\textrm{span}(A_k')$, we can determine $A_k^T \vec{x}$ from ${A'_k}^T \vec{x}$ and vice versa. Therefore, from the output of $P$, we can deterministically obtain the output of $P'$ and vice versa. In this sense, $P$ and $P'$ are equivalent, and have the same capacity region.
\end{proof}

\begin{mynot}
Motivated by the above proposition, we will write $\displaystyle P\equiv\sum_{k=1}^n p_k\mathcal{C}_{V_k}$ where $\{V_k:1\leq k\leq n\}$ is a set of $n$ subspaces of $\mathbb{F}_q^m$, whenever $P$ is equivalent to $\displaystyle \sum_{k=1}^n p_k\mathcal{C}_{A_k}$ and $\textrm{\emph{span}}(A_k)=V_k$.
\end{mynot}

\begin{myprop}
If $\displaystyle P\equiv\sum_{k=1}^np_k\mathcal{C}_{V_k}$, then $\displaystyle I[S](P)=\sum_{k=1}^n p_k\emph{\textrm{dim}}\big(\emph{\textrm{proj}}_S(V_k)\big)$ for all $S\subset\{1,...,m\}$. Where $\emph{\textrm{proj}}_S$ denotes the canonical projection on $\mathbb{F}_q^S$ defined by $\emph{\textrm{proj}}_S(\vec{x})=\emph{\textrm{proj}}_S(x_1,...,x_m)=(x_{i_1},...,x_{i_{|S|}})$ for $\vec{x}=(x_1,...,x_m)\in \mathbb{F}_q^m$ and $S=\{i_1,...,i_{|S|}\}$.
\end{myprop}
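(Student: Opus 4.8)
The plan is to compute $I[S](P)=I(X(S);YX(S^c))$ directly for independently, uniformly distributed inputs. Fix matrices $A_k$ with $\textrm{span}(A_k)=V_k$ (legitimate by \emph{Proposition 4}); then the output is $Y=(K,A_K^T\vec X)$, where $K$ is independent of $\vec X$ with $\textrm{Pr}[K=k]=p_k$. First I would peel off the index $K$. By the chain rule, $I[S](P)=I(X(S);K)+I(X(S);A_K^T\vec X,X(S^c)\mid K)$; since $X(S)$ and $K$ are independent the first term vanishes, and conditioned on $K=k$ the channel is simply $\mathcal{C}_{A_k}$ with uniform input, so
$$I[S](P)=\sum_{k=1}^n p_k\,I\big(X(S);A_k^T\vec X,X(S^c)\big).$$

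Next I would evaluate $I(X(S);A_k^T\vec X,X(S^c))$ for a fixed $k$. Reorder the coordinates of $\mathbb{F}_q^m$ so that $\vec X=(X(S),X(S^c))$ and write $A_k$ in block form with $B$ the block consisting of the rows of $A_k$ indexed by $S$ (an $|S|\times m_k$ matrix) and $C$ the block indexed by $S^c$; then $A_k^T\vec X=B^TX(S)+C^TX(S^c)$. Since $X(S)$ and $X(S^c)$ are independent and $A_k^T\vec X$ is a deterministic function of $(X(S),X(S^c))$, we get $I(X(S);A_k^T\vec X,X(S^c))=I(X(S);A_k^T\vec X\mid X(S^c))=H(A_k^T\vec X\mid X(S^c))$. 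Conditioned on $X(S^c)=\vec s$, the vector $A_k^T\vec X$ equals $C^T\vec s+B^TX(S)$ with $X(S)$ uniform; as $X(S)\mapsto B^TX(S)$ is an $\mathbb{F}_q$-linear map, all of its fibres have the same cardinality $q^{|S|-\textrm{rank}(B)}$, so $A_k^T\vec X$ is uniform on the coset $C^T\vec s+\textrm{span}(B^T)$ and $H(A_k^T\vec X\mid X(S^c)=\vec s)=\dim\textrm{span}(B^T)=\textrm{rank}(B)$, independently of $\vec s$ (all entropies in base $q$, as $\alpha=q$). Hence $I(X(S);A_k^T\vec X,X(S^c))=\textrm{rank}(B)$.

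Finally I would identify $\textrm{rank}(B)$ with $\dim(\textrm{proj}_S(V_k))$: for $\vec v=A_k\vec w\in V_k$ the components of $\vec v$ indexed by $S$ are exactly $B\vec w$, so $\textrm{proj}_S(V_k)=\{B\vec w:\vec w\in\mathbb{F}_q^{m_k}\}=\textrm{span}(B)$ and $\dim(\textrm{proj}_S(V_k))=\textrm{rank}(B)$. Plugging this into the sum from the first step yields $I[S](P)=\sum_{k=1}^n p_k\dim(\textrm{proj}_S(V_k))$. I do not expect a genuine obstacle here: the argument is a short chain-rule computation followed by elementary linear algebra. The only point needing a little care is the uniformity claim in the second step, namely that $H(A_k^T\vec X\mid X(S^c)=\vec s)$ does not depend on $\vec s$; this is precisely the fact that an $\mathbb{F}_q$-linear map between finite-dimensional vector spaces has equinumerous fibres, applied to $X(S)\mapsto B^TX(S)$.
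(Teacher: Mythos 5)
Your proposal is correct and is essentially the paper's argument. Both proofs condition on the channel index $K$ to reduce $I[S](P)$ to a sum $\sum_k p_k(\cdot)$, then evaluate the per-$k$ term by the same piece of linear algebra; the only cosmetic difference is that you compute the term as the entropy of the observed linear form $A_k(S)^T X(S)$, i.e.\ its image dimension $\textrm{rank}(A_k(S))$, whereas the paper computes the conditional entropy $H\big(X(S)\,\big|\,A_k(S)^T X(S)\big)=d_k$ (the nullity) and subtracts from $|S|$ — rank--nullity duals of one another — before both identify $\textrm{rank}(A_k(S))=\dim\textrm{proj}_S(V_k)$.
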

\begin{proof}
Let $X_1,...,X_m$ be the input to the channel $\sum_{k=1}^n p_k\mathcal{C}_{A_k}$ (where $A_k$ spans $V_k$), and let $K,Y$ be the output of it. We have:\\

$H\big(X(S)|K,Y,X(S^c)\big)$
\begin{align*}
&=\sum_{k,\vec{y}} \textrm{Pr}(k,\vec{y})H\big(X(S)|k,\vec{y},X(S^c)\big)=\sum_{k,\vec{y}}\sum_{\vec{x}}\textrm{Pr}(k,\vec{y}|\vec{x})
\textrm{Pr}(\vec{x}) H\big(X(S)|A_k^T\vec{X},X(S^c)\big)\\
&=\sum_{k,\vec{y}}\sum_{\substack{\vec{x},\\A_k^T\vec{x}=\vec{y}}}p_k\textrm{Pr}(\vec{x}) H\big(X(S)|k,\vec{y},X(S^c)\big)=\sum_{k} p_k H\big(X(S)|A_k^T\vec{X},X(S^c)\big)\\
&=\sum_{k} p_k H\big(X(S)|A_k(S)^T\vec{X}(S),X(S^c)\big) = \sum_{k} p_k H\big(X(S)|A_k(S)^T\vec{X}(S)\big)
\end{align*}
Where $A_k(S)$ is obtained from $A_k$ by taking the rows corresponding to $S$. For a given value of $A_k(S)^T\vec{X}(S)$, we have $q^{d_k}$ possible values of $\vec{X}(S)$ with equal probabilities, where $d_k$ is the dimension of the null space of the mapping $\vec{X}(S)\rightarrow A_k(S)^T\vec{X}(S)$, so we have $H\big(X(S)|A_k(S)^T\vec{X}(S)\big)=d_k$.\\

On the other hand, $|S|-H\big(X(S)|A_k(S)^T\vec{X}(S)\big)=|S|-d_k$ is the dimension of the range space of the the mapping $\vec{X}(S)\rightarrow A_k(S)^T\vec{X}(S)$, which is also equal to the rank of $A_k(S)^T$. Therefore, we have:
\begin{align*}
|S|-H\big(X(S)|A_k(S)^T\vec{X}(S)\big)&=
\textrm{rank}(A_k(S)^T)=\textrm{rank}\big(A_k(S)\big)=\textrm{dim}\Big(\textrm{span}\big(A_k(S)\big)\Big)\\
&=\textrm{dim}\Big(\textrm{span}\big
(\textrm{proj}_S(A_k)\big)\Big)=\textrm{dim}\Big(\textrm{proj}_S\big(\textrm{span}(A_k)\big)\Big)\\&=\textrm{dim}\big(\textrm{proj}_S(V_k)\big)
\end{align*}
We conclude:
\begin{align*}
I(X(S);K,Y,X(S^c))&=H(X(S))-H(X(S)|K,Y,X(S^c))\\
&=|S|-\sum_{k} p_k H(X(S)|A_k(S)^T\vec{X}(S))\\
&=\sum_{k} p_k \big(|S|-H(X(S)|A_k(S)^T\vec{X}(S))\big)\\
&=\sum_{k} p_k (|S|-d_k)=\sum_{k} p_k \textrm{dim}(\textrm{proj}_S(V_k))
\end{align*}
\end{proof}

\begin{myprop}
If $\displaystyle P\equiv\sum_{k=1}^np_k\mathcal{C}_{V_k}$ then:
\begin{itemize}
  \item $\displaystyle P^-\equiv\sum_{k_1=1}^n\sum_{k_2=1}^np_{k_1}p_{k_2}\mathcal{C}_{V_{k_1}\cap V_{k_2}}$
  \item $\displaystyle P^+\equiv\sum_{k_1=1}^n\sum_{k_2=1}^np_{k_1}p_{k_2}\mathcal{C}_{V_{k_1} + V_{k_2}}$
\end{itemize}
\end{myprop}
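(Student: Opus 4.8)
The plan is to peel off the mixing and reduce to a statement about two fixed deterministic linear channels, which is then pure $\mathbb{F}_q$-linear algebra. By Notation 2 we may assume $P=\sum_{k=1}^{n}p_k\,\mathcal{C}_{A_k}$ with $\mathrm{span}(A_k)=V_k$; since applying a degrading map to each of the two copies commutes with the $\pm$-combining, the operations $Q\mapsto Q^{\pm}$ respect channel equivalence, so it is enough to establish the claim for this representative. Writing $P(y\mid\vec x)=\sum_{k}p_k\,\mathcal{C}_{A_k}(y\mid\vec x)$ and substituting into the formulas of Definition 5, the products $P(y_1\mid\cdot)P(y_2\mid\cdot)$ expand bilinearly in $(k_1,k_2)$; since the output letters of distinct summands lie in disjoint sets indexed by $(k_1,k_2)$, this exhibits $P^{\pm}$ as
$$P^-=\sum_{k_1,k_2}p_{k_1}p_{k_2}\,M^-(\mathcal{C}_{A_{k_1}},\mathcal{C}_{A_{k_2}}),\qquad P^+=\sum_{k_1,k_2}p_{k_1}p_{k_2}\,M^+(\mathcal{C}_{A_{k_1}},\mathcal{C}_{A_{k_2}}),$$
where, for channels $Q_1,Q_2$ with input alphabet $\mathbb{F}_q^m$, $M^-(Q_1,Q_2)$ has input $\vec U^1$ and output $(Z_1,Z_2)$ with $\vec U^2$ uniform, $Z_1$ the output of $Q_1$ on $\vec U^1+\vec U^2$ and $Z_2$ the output of $Q_2$ on $\vec U^2$, while $M^+(Q_1,Q_2)$ has input $\vec U^2$ and output $(Z_1,Z_2,\vec U^1)$ with $\vec U^1$ uniform. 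It therefore suffices to prove $M^-(\mathcal{C}_A,\mathcal{C}_B)\equiv\mathcal{C}_{V_A\cap V_B}$ and $M^+(\mathcal{C}_A,\mathcal{C}_B)\equiv\mathcal{C}_{V_A+V_B}$, where $V_A=\mathrm{span}(A)$ and $V_B=\mathrm{span}(B)$.

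The ``$+$'' identity is immediate: in $M^+(\mathcal{C}_A,\mathcal{C}_B)$ the vector $\vec U^1$ is part of the output, so $A^T\vec U^2$ is recovered from $A^T(\vec U^1+\vec U^2)$ and $\vec U^1$, and conversely $A^T(\vec U^1+\vec U^2)$ is recovered from $A^T\vec U^2$ and $\vec U^1$; hence $M^+(\mathcal{C}_A,\mathcal{C}_B)$ is equivalent to the channel $\vec U^2\to\big([A\;B]^T\vec U^2,\vec U^1\big)$. Since here $\vec U^1$ is independent of the input, it may be dropped (or resampled), leaving $\vec U^2\to[A\;B]^T\vec U^2$, that is $\mathcal{C}_{[A\;B]}$, which by Proposition 4 equals $\mathcal{C}_{V_A+V_B}$ up to equivalence.

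For the ``$-$'' identity, observe that on input $\vec u^1$ the output of $M^-(\mathcal{C}_A,\mathcal{C}_B)$ is $(Y_1,Y_2)=(A^T\vec u^1,\vec 0)+[A\;B]^T\vec U^2$ with $\vec U^2$ uniform, hence uniform over the coset $\Gamma(\vec u^1):=(A^T\vec u^1,\vec 0)+\mathrm{Im}\big([A\;B]^T\big)$. Pick any $C$ with $\mathrm{span}(C)=V_A\cap V_B$. The crucial linear-algebra step is
$$(A^T\vec v,\vec 0)\in\mathrm{Im}\big([A\;B]^T\big)\iff\vec v\in V_A^{\perp}+V_B^{\perp}=(V_A\cap V_B)^{\perp},$$
obtained by writing a general image element as $(A^T\vec w,B^T\vec w)$ with $\vec w$ arbitrary, using $\ker A^T=V_A^{\perp}$ and $\ker B^T=V_B^{\perp}$, and then the duality identity $(U\cap W)^{\perp}=U^{\perp}+W^{\perp}$. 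It follows that $\Gamma(\vec u^1)=\Gamma(\vec u'^1)$ iff $C^T\vec u^1=C^T\vec u'^1$, so $\vec u^1\mapsto\Gamma(\vec u^1)$ both factors through and is determined by $\vec u^1\mapsto C^T\vec u^1$. This delivers both degradations: from $(Y_1,Y_2)$ one reads off $\Gamma(\vec u^1)$ (the unique coset of $\mathrm{Im}([A\;B]^T)$ through $(Y_1,Y_2)$) and hence $C^T\vec u^1$, so $\mathcal{C}_{V_A\cap V_B}$ is degraded with respect to $M^-(\mathcal{C}_A,\mathcal{C}_B)$; and from $C^T\vec u^1$ one computes $\Gamma(\vec u^1)$ and returns a uniformly random point of it, which is a legitimate output of $M^-(\mathcal{C}_A,\mathcal{C}_B)$, so the reverse degradation holds as well. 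Substituting these equivalences back into the bilinear decomposition proves the proposition.

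The step I expect to require the most care is the second degradation in the ``$-$'' case: certifying that, once $C^T\vec u^1$ is known, the remaining coordinates of $(Y_1,Y_2)$ are genuinely input-independent uniform noise that the decoder may regenerate. This is exactly where the residual randomness of $[A\;B]^T\vec U^2$ has to be pinned down and the $\mathbb{F}_q$-duality identity above invoked; the bookkeeping behind the bilinear decomposition and the whole ``$+$'' computation are routine by comparison. One should also note that ``equivalent'' is understood here as mutual degradation, since in general the output alphabet of $M^-(\mathcal{C}_A,\mathcal{C}_B)$ is strictly larger than that of $\mathcal{C}_{V_A\cap V_B}$.
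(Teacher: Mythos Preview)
Your argument is correct. The bilinear decomposition into summands indexed by $(k_1,k_2)$ is exactly what the paper's conditioning on $(K_1,K_2)$ amounts to, and your treatment of the ``$+$'' case coincides with the paper's.

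For the ``$-$'' case you take a genuinely different route. The paper fixes $(k_1,k_2)$, chooses compatible bases $A_{k_1\wedge k_2}$ spanning $V_{k_1}\cap V_{k_2}$ together with complements $B_{k_1},B_{k_2}$ so that $[A_{k_1\wedge k_2}\;B_{k_i}]$ spans $V_{k_i}$, and then performs an explicit invertible change of output coordinates $(A_{k_1}^T\vec X_1,A_{k_2}^T\vec X_2)\mapsto(T^1,T^2,T^3)$ with $T^1=A_{k_1\wedge k_2}^T\vec U_1$; it then checks by hand that $(T^2,T^3)$ is conditionally independent of $\vec U_1$ given $(K_1,K_2,T^1)$, i.e.\ a sufficient-statistics argument. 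You instead observe that the output is uniform on a coset of $\mathrm{Im}([A\;B]^T)$ and identify which coset via the duality identity $(V_A\cap V_B)^{\perp}=V_A^{\perp}+V_B^{\perp}$ together with $\ker A^T=V_A^{\perp}$, obtaining equivalence by mutual degradation. Your argument is basis-free and slightly cleaner, since it isolates a single algebraic fact as the hinge; the paper's argument has the advantage of being entirely concrete and not invoking duality over $\mathbb{F}_q$ (which is standard but worth stating, as the form is non-degenerate rather than positive definite). Your closing remark that equivalence here means mutual degradation is apt: the paper's sufficient-statistics step is precisely the same content phrased differently.
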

\begin{proof}
Suppose without lost of generality that $\displaystyle P=\sum_{k=1}^n p_k\mathcal{C}_{A_k}$ where $A_k$ spans $V_k$. Let $\vec{U}_1$ be an arbitrarily distributed random vector in $\mathbb{F}_q^m$ (not necessarily uniform), let $\vec{U}_2$ be a uniformly distributed random vector in $\mathbb{F}_q^m$ independent from $U_1$. Let $\vec{X}_1=\vec{U}_1+\vec{U}_2$ and $\vec{X}_2=\vec{U}_2$. Let $(K_1,A_{K_1}^T\vec{X}_1)$ and $(K_2,A_{K_2}^T\vec{X}_2)$ be the output of $P$ when the input is $X_1$ and $X_2$ respectively. Then the channel $\vec{U_1}\rightarrow(K_1,A_{K_1}^T\vec{X}_1,K_2,A_{K_2}^T\vec{X}_2)$ is equivalent to $P^-$ with $\vec{U_1}$ as input. We did not put any constraint on the distribution of $\vec{U}_1$ (such as saying that $\vec{U}_1$ is uniform) because in general, the model of a channel is characterized by its conditional probabilities and no assumption is made on the input probabilities.\\

Fix $K_1=k_1$ and $K_2=k_2$, let $A_{k_1\wedge k_2},B_{k_1}$ and $B_{k_2}$ be three matrices chosen such that $A_{k_1\wedge k_2}$ spans $V_{k_1}\cap V_{k_2}$, $A_{k_1}=[A_{k_1\wedge k_2}\;B_{k_1}]$ spans $V_{k_1}$, $A_{k_2}=[A_{k_1\wedge k_2}\;B_{k_2}]$ spans $V_{k_2}$, and the columns of $[A_{k_1\wedge k_2}\;B_{k_1}\;B_{k_2}]$ are linearly independent. Then knowing $A_{k_1}^T\vec{X}_1$ and $A_{k_2}^T\vec{X}_2$ is equivalent to knowing $A_{k_1\wedge k_2}^T(\vec{U}_1+\vec{U}_2)$, $B_{k_1}^T(\vec{U}_1+\vec{U}_2)$, $A_{k_1\wedge k_2}^T\vec{U}_2$ and $B_{k_2}^T\vec{U}_2$, which is equivalent to knowing $\vec{T}_{k_1,k_2}^1=A_{k_1\wedge k_2}^T\vec{U}_1$, $\vec{T}_{k_1,k_2}^2=B_{k_1}^T(\vec{U}_1+\vec{U}_2)$ and $\vec{T}_{k_1,k_2}^3=[A_{k_1\wedge k_2}\;B_{k_2}]^T\vec{U}_2$. We conclude that $P^-$ is equivalent to the channel:
$$\vec{U}_1\rightarrow \big(K_1,K_2,\vec{T}_{K_1,K_2}^1,\vec{T}_{K_1,K_2}^2,\vec{T}_{K_1,K_2}^3\big)$$
Conditioned on $(K_1,K_2,\vec{T}_{K_1,K_2}^1)$ we have $[B_{K_1}\;A_{K_1\wedge K_2}\;B_{K_2}]^T\vec{U}_2$ is uniform (since the matrix $[B_{K_1}\;A_{K_1\wedge K_2}\;B_{K_2}]$ is full rank) and independent from $\vec{U}_1$, so $[A_{K_1\wedge K_2}\;B_{K_2}]^T\vec{U}_2$ is independent from $(B_{K_1}^T\vec{U}_2,\vec{U}_1)$, which implies that $[A_{K_1\wedge K_2}\;B_{K_2}]^T\vec{U}_2$ is independent from $\big(B_{K_1}^T(\vec{U}_1+\vec{U}_2),\vec{U}_1\big)$. Also conditioned on $(K_1,K_2,\vec{T}_{K_1,K_2}^1)$, $B_{K_1}^T\vec{U}_2$ is uniform and independent from $\vec{U}_1$, which implies that $\vec{U}_1$ is independent from $B_{K_1}^T(\vec{U}_1+\vec{U}_2)$, and this is because the columns of $B_{K_1}$ and $A_{K_1\wedge K_2}$ are linearly independent. We conclude that conditioned on $(K_1,K_2,\vec{T}_{K_1,K_2}^1)$, $\vec{U}_1$ is independent from $\big(\vec{T}_{K_1,K_2}^2,\vec{T}_{K_1,K_2}^3\big)$. Therefore, $\big(K_1,K_2,\vec{T}_{K_1,K_2}^1\big) =\big(K_1,K_2,A_{k_1\wedge k_2}^T\vec{U}_1\big)$ form sufficient statistics. We conclude that $P^-$ is equivalent to the channel:

$$\vec{U}_1\rightarrow =\big(K_1,K_2,A_{k_1\wedge k_2}^T\vec{U}_1\big)$$

And since $\textrm{Pr}(K_1=k_1,K_2=k_2)=p_{k_1}p_{k_2}$, and $A_{k_1\wedge k_2}$ spans $V_{k_1}\cap V_{k_2}$ we conclude that $\displaystyle P^-\equiv\sum_{k_1=1}^n\sum_{k_2=1}^np_{k_1}p_{k_2}\mathcal{C}_{V_{k_1}\cap V_{k_2}}$.\\

Now let $\vec{U}_2$ be arbitrarily distributed in $\mathbb{F}_q^m$ (not necessarily uniform) and $\vec{U}_1$ be a uniformly distributed random vector in $\mathbb{F}_q^m$ independent from $\vec{U}_2$. Let $\vec{X}_1=\vec{U}_1+\vec{U}_2$ and $\vec{X}_2=\vec{U}_2$. Let $(K_1,A_{K_1}^T\vec{X}_1)$ and $(K_2,A_{K_2}^T\vec{X}_2)$ be the output of $P$ when the input is $X_1$ and $X_2$ respectively. Then the channel $\vec{U_2}\rightarrow(K_1,A_{K_1}^T\vec{X}_1,K_2,A_{K_2}^T\vec{X}_2,\vec{U_1})$ is equivalent to $P^+$ with $\vec{U_2}$ as input. Note that the uniform distribution constraint is now on $\vec{U}_1$ and no constraint is put on the distribution of $\vec{U}_2$, since now $\vec{U}_2$ is the input to the channel $P^+$.\\

Knowing $A_{K_1}^T\vec{X}_1$, $A_{K_2}^T\vec{X}_2$ and $\vec{U_1}$ is equivalent to knowing $A_{K_1}^T(\vec{U}_1+\vec{U}_2)$, $A_{K_2}^T\vec{U}_2$ and $\vec{U_1}$, which is equivalent to knowing $A_{K_1}^T\vec{U}_2$, $A_{K_2}^T\vec{U}_2$ and $\vec{U_1}$. So $P^+$ is equivalent to the channel:

$$\vec{U_2}\rightarrow \big(K_1,K_2,[A_{k_1}\;A_{k_2}]^T\vec{U}_2,\vec{U}_1\big)$$

And since $\vec{U}_1$ is independent from $\vec{U}_2$, the above channel (and hence $P+$) is equivalent to the channel:

$$\vec{U_2}\rightarrow \big(K_1,K_2,[A_{k_1}\;A_{k_2}]^T\vec{U}_2\big)$$

We also have $\textrm{Pr}(K_1=k_1,K_2=k_2)=p_{k_1}p_{k_2}$, and $[A_{k_1}\;A_{k_2}]$ spans $V_{k_1}+V_{k_2}$. We conclude that $\displaystyle P^+\equiv\sum_{k_1=1}^n\sum_{k_2=1}^np_{k_1}p_{k_2}\mathcal{C}_{V_{k_1} + V_{k_2}}$.
\end{proof}

\begin{mylem}
Let $\displaystyle P\equiv\sum_{k=1}^np_k\mathcal{C}_{V_k}$ and $S\subset\{1,...,m\}$, then
\begin{align*}
\frac{1}{2}\big(I[S](P^-)+I[S](P^+)\big)&=I[S](P) \Leftrightarrow \\
&\Big(\forall (k_1,k_2);\;\emph{\textrm{proj}}_S(V_{k_1}\cap V_{k_2})=\emph{\textrm{proj}}_S(V_{k_1})\cap\emph{\textrm{proj}}_S(V_{k_2})\Big)
\end{align*}
\end{mylem}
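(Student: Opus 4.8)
The plan is to turn the stated identity into a pair-by-pair comparison of dimensions of projected subspaces, using the structural results already proved. First I would combine \emph{proposition 6} (the descriptions $P^-\equiv\sum_{k_1,k_2}p_{k_1}p_{k_2}\mathcal{C}_{V_{k_1}\cap V_{k_2}}$ and $P^+\equiv\sum_{k_1,k_2}p_{k_1}p_{k_2}\mathcal{C}_{V_{k_1}+V_{k_2}}$) with \emph{proposition 5} (the formula $I[S](Q)=\sum_k q_k\,\textrm{dim}(\textrm{proj}_S(W_k))$ for $Q\equiv\sum_k q_k\mathcal{C}_{W_k}$) to obtain
\begin{align*}
I[S](P^-)&=\sum_{k_1,k_2}p_{k_1}p_{k_2}\,\textrm{dim}\big(\textrm{proj}_S(V_{k_1}\cap V_{k_2})\big),\\
I[S](P^+)&=\sum_{k_1,k_2}p_{k_1}p_{k_2}\,\textrm{dim}\big(\textrm{proj}_S(V_{k_1}+V_{k_2})\big).
\end{align*}
On the other hand, using $\sum_k p_k=1$ twice, $I[S](P)=\sum_{k_1,k_2}p_{k_1}p_{k_2}\,\textrm{dim}(\textrm{proj}_S(V_{k_1}))=\sum_{k_1,k_2}p_{k_1}p_{k_2}\,\textrm{dim}(\textrm{proj}_S(V_{k_2}))$, so that $2I[S](P)=\sum_{k_1,k_2}p_{k_1}p_{k_2}\big(\textrm{dim}(\textrm{proj}_S(V_{k_1}))+\textrm{dim}(\textrm{proj}_S(V_{k_2}))\big)$. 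Hence $\tfrac12(I[S](P^-)+I[S](P^+))=I[S](P)$ is equivalent to the single scalar equation $\sum_{k_1,k_2}p_{k_1}p_{k_2}\,\Delta_{k_1,k_2}=0$, where $\Delta_{k_1,k_2}:=\textrm{dim}(\textrm{proj}_S(V_{k_1}\cap V_{k_2}))+\textrm{dim}(\textrm{proj}_S(V_{k_1}+V_{k_2}))-\textrm{dim}(\textrm{proj}_S(V_{k_1}))-\textrm{dim}(\textrm{proj}_S(V_{k_2}))$.

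The core step is the pointwise bound $\Delta_{k_1,k_2}\leq 0$, with equality characterized as in the statement. Write $\pi=\textrm{proj}_S$, a linear map on $\mathbb{F}_q^m$. For any subspaces $U,W$ one has $\pi(U+W)=\pi(U)+\pi(W)$ by linearity, and $\pi(U\cap W)\subseteq\pi(U)\cap\pi(W)$ always. Applying the dimension formula $\textrm{dim}(V'+V'')+\textrm{dim}(V'\cap V'')=\textrm{dim}(V')+\textrm{dim}(V'')$ to $V'=\pi(U)$, $V''=\pi(W)$ gives $\textrm{dim}(\pi(U+W))=\textrm{dim}(\pi(U))+\textrm{dim}(\pi(W))-\textrm{dim}(\pi(U)\cap\pi(W))$, hence
$$\textrm{dim}(\pi(U\cap W))+\textrm{dim}(\pi(U+W))=\textrm{dim}(\pi(U))+\textrm{dim}(\pi(W))-\big(\textrm{dim}(\pi(U)\cap\pi(W))-\textrm{dim}(\pi(U\cap W))\big).$$
The parenthesized quantity is $\geq0$ because $\pi(U\cap W)\subseteq\pi(U)\cap\pi(W)$, so $\Delta_{k_1,k_2}\leq0$; and it equals $0$ exactly when $\pi(U\cap W)=\pi(U)\cap\pi(W)$ (an inclusion of finite-dimensional spaces is an equality iff the dimensions agree). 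Taking $U=V_{k_1}$, $W=V_{k_2}$ this is precisely the condition $\textrm{proj}_S(V_{k_1}\cap V_{k_2})=\textrm{proj}_S(V_{k_1})\cap\textrm{proj}_S(V_{k_2})$.

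Finally, since every weight $p_{k_1}p_{k_2}$ is strictly positive (recall $p_k\neq0$ for all $k$) and every $\Delta_{k_1,k_2}\leq0$, the sum $\sum_{k_1,k_2}p_{k_1}p_{k_2}\Delta_{k_1,k_2}$ vanishes if and only if $\Delta_{k_1,k_2}=0$ for every pair $(k_1,k_2)$, which is the right-hand side of the equivalence. I do not expect a real obstacle here: the only point requiring a moment's care is the equality case of the pointwise bound, i.e.\ ruling out a strict containment $\pi(U\cap W)\subsetneq\pi(U)\cap\pi(W)$ with matching dimensions, which is automatic in finite dimension. The whole argument rests on the two elementary facts $\pi(U+W)=\pi(U)+\pi(W)$ and $\pi(U\cap W)\subseteq\pi(U)\cap\pi(W)$ together with the dimension formula for subspaces.
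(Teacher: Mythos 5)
Your proposal is correct and follows essentially the same route as the paper: both use Propositions 5 and 6 to reduce to a sum over pairs $(k_1,k_2)$, and both rest on the two facts $\textrm{proj}_S(U+W)=\textrm{proj}_S(U)+\textrm{proj}_S(W)$ and $\textrm{proj}_S(U\cap W)\subseteq\textrm{proj}_S(U)\cap\textrm{proj}_S(W)$ combined with the dimension formula to get the pointwise bound, then conclude from strict positivity of the weights that the sum vanishes exactly when every term does. Your packaging via the per-pair defect $\Delta_{k_1,k_2}$ and the explicit appeal to $p_k\neq 0$ is slightly more explicit than the paper's presentation, but mathematically it is the same argument.
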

\begin{proof}
We know that if $V$ and $V'$ are two subspaces of $\mathbb{F}_q^m$, then $\textrm{proj}_S(V\cap V')\subset\textrm{proj}_S(V)\cap\textrm{proj}_S(V')$ and $\textrm{proj}_S(V+V')=\textrm{proj}_S(V)+\textrm{proj}_S(V')$, which implies that:
\begin{align*}
\textrm{dim}\big(\textrm{proj}_S(V\cap V')\big)&\leq\textrm{dim}\big(\textrm{proj}_S(V)\cap\textrm{proj}_S(V')\big)\\
\textrm{dim}\big(\textrm{proj}_S(V+ V')\big)&=\textrm{dim}\big(\textrm{proj}_S(V)+\textrm{proj}_S(V')\big)
\end{align*}

We conclude:\\

$\textrm{dim}\big(\textrm{proj}_S(V\cap V')\big) + \textrm{dim}\big(\textrm{proj}_S(V+ V')\big)$
\vspace*{-2mm}
\begin{align*}
&\leq\textrm{dim}\big(\textrm{proj}_S(V)\cap\textrm{proj}_S(V')\big)+\textrm{dim}\big(\textrm{proj}_S(V)+\textrm{proj}_S(V')\big)\\
&=\textrm{dim}\big(\textrm{proj}_S(V)\big) + \textrm{dim}\big(\textrm{proj}_S(V')\big)
\end{align*}

Therefore:\\

$\frac{1}{2}\big(I[S](P^-)+I[S](P^+)\big)$
\begin{align*}
&=\frac{1}{2}\Big(\sum_{k_1=1}^n\sum_{k_2=1}^n p_{k_1}p_{k_2}\textrm{dim}\big(\textrm{proj}_S(V_{k_1}\cap V_{k_2})\big) + \sum_{k_1=1}^n\sum_{k_2=1}^n p_{k_1}p_{k_2}\textrm{dim}\big(\textrm{proj}_S(V_{k_1}+ V_{k_2})\big)\Big)\\
&=\frac{1}{2}\Big(\sum_{k_1=1}^n\sum_{k_2=1}^n p_{k_1}p_{k_2}\big(\textrm{dim}\big(\textrm{proj}_S(V_{k_1}\cap V_{k_2})\big)+\textrm{dim}\big(\textrm{proj}_S(V_{k_1}+ V_{k_2})\big)\big)\Big)\\
&\leq \frac{1}{2}\Big(\sum_{k_1=1}^n\sum_{k_2=1}^n p_{k_1}p_{k_2}\big(\textrm{dim}\big(\textrm{proj}_S( V_{k_1})\big)+\textrm{dim}\big(\textrm{proj}_S(V_{k_2})\big)\big)\Big)\\
&= \frac{1}{2}\Big(\sum_{k_1=1}^n p_{k_1}\textrm{dim}\big(\textrm{proj}_S( V_{k_1})\big)+ \sum_{k_2=1}^n p_{k_2}\textrm{dim}\big(\textrm{proj}_S(V_{k_2})\big)\Big)\\
&=\frac{1}{2}(I[S](P)+I[S](P))=I[S](P)
\end{align*}

So if we have $\textrm{proj}_S(V_{k_1}\cap V_{k_2})\subsetneq\textrm{proj}_S(V_{k_1})\cap\textrm{proj}_S(V_{k_2})$ for some $k_1,k_2$, then we have $\textrm{dim}\big(\textrm{proj}_S(V_{k_1}\cap V_{k_2})\big)<\textrm{dim}\big(\textrm{proj}_S(V_{k_1})\cap\textrm{proj}_S(V_{k_2})\big)$, and the above inequality of mutual information will be strict. We conclude that:

\begin{align*}
\frac{1}{2}\big(I[S](P^-)+I[S](P^+)\big)&=I[S](P) \Leftrightarrow \\
&\Big(\forall (k_1,k_2);\;\textrm{proj}_S(V_{k_1}\cap V_{k_2})=\textrm{proj}_S(V_{k_1})\cap\textrm{proj}_S(V_{k_2})\Big)
\end{align*}
\end{proof}

\begin{mydef}
Let $\mathcal{V}$ be a set of subspaces of $\mathbb{F}_q^m$, we define the closure of $\mathcal{V}$, $cl(\mathcal{V})$, as being the minimal set of subspaces of $\mathbb{F}_q^m$ closed under the two operations $\cap$ and $+$, and including $\mathcal{V}$. We say that the set $\mathcal{V}$ is consistent with respect to $S\subset\{1,...,m\}$ if and only if it satisfies the following property:
$$\Big(\forall (V_1,V_2)\in cl(\mathcal{V});\;\textrm{proj}_S(V_{k_1}\cap V_{k_2})=\textrm{proj}_S(V_{k_1})\cap\textrm{proj}_S(V_{k_2})\Big)$$
\end{mydef}

\begin{mycor}
If $\mathcal{V}=\{V_k:1\leq k\leq n\}$. $I[S](P)$ is preserved upon the polarization process if and only if $\mathcal{V}$ is consistent with respect to $S$.
\end{mycor}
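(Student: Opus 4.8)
The plan is to reduce both sides of the equivalence to statements about the finite family of subspaces carried by the polarized channels, and then to connect them through \emph{lemma 7} and \emph{proposition 6}. First I would fix the meaning of ``$I[S](P)$ is preserved upon the polarization process'': by \emph{proposition 1} the process $\{I[S](P_n)\}_{n\geq 0}$ is a bounded supermartingale, so it is preserved (equivalently, $E(I[S](P_n))=I[S](P)$ for all $n$) if and only if it is in fact a martingale, i.e. $\frac{1}{2}\big(I[S]((P^s)^-)+I[S]((P^s)^+)\big)=I[S](P^s)$ for every $s\in\{-,+\}^\star$ --- each such $s$ being realised with positive probability $2^{-|s|}$. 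So the corollary amounts to showing that this family of one-step equalities holds if and only if $\mathcal{V}$ is consistent with respect to $S$.

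Next I would track the subspaces. By \emph{proposition 6} and induction, $P^s\equiv\sum p'_k\,\mathcal{C}_{V'_k}$ for a family $\mathcal{V}^s$ of subspaces of $\mathbb{F}_q^m$ with $\mathcal{V}^\Phi=\mathcal{V}$, $\mathcal{V}^{s-}=\{V\cap V':V,V'\in\mathcal{V}^s\}$ and $\mathcal{V}^{s+}=\{V+V':V,V'\in\mathcal{V}^s\}$. The key elementary observation is that the diagonal pairs $k_1=k_2$ in \emph{proposition 6} give $\mathcal{V}^s\subseteq\mathcal{V}^{s-}$ and $\mathcal{V}^s\subseteq\mathcal{V}^{s+}$ (since $V\cap V=V+V=V$); hence $\mathcal{V}^{s-+}$ contains $\mathcal{V}^s$ together with all pairwise intersections and all pairwise sums of elements of $\mathcal{V}^s$, i.e. one full application of the closure operation. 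Iterating along the alternating string $t_j=(-+)^j$ therefore produces an increasing chain $\mathcal{V}^{t_0}\subseteq\mathcal{V}^{t_1}\subseteq\cdots$ with $\bigcup_j\mathcal{V}^{t_j}=cl(\mathcal{V})$; in particular any two elements $V_1,V_2\in cl(\mathcal{V})$ already lie together in a single $\mathcal{V}^{t_j}$. Also, a trivial induction (using that $cl(\mathcal{V})$ is closed under $\cap$ and $+$) gives $\mathcal{V}^s\subseteq cl(\mathcal{V})$ for every $s$.

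With these preparations the two implications are short. For ``if'': assume $\mathcal{V}$ is consistent with respect to $S$. For any $s$, every pair $V,V'\in\mathcal{V}^s\subseteq cl(\mathcal{V})$ satisfies $\textrm{proj}_S(V\cap V')=\textrm{proj}_S(V)\cap\textrm{proj}_S(V')$, so \emph{lemma 7} applied to $P^s$ yields $\frac{1}{2}\big(I[S]((P^s)^-)+I[S]((P^s)^+)\big)=I[S](P^s)$; as $s$ was arbitrary, $I[S]$ is preserved. For ``only if'': assume $I[S]$ is preserved and pick any $V_1,V_2\in cl(\mathcal{V})$; choose $j$ with $V_1,V_2\in\mathcal{V}^{t_j}$, so $V_1$ and $V_2$ appear among the subspaces of a representation of $P^{t_j}$. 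Preservation at step $t_j$ together with the forward direction of \emph{lemma 7} forces $\textrm{proj}_S(V_1\cap V_2)=\textrm{proj}_S(V_1)\cap\textrm{proj}_S(V_2)$; since $V_1,V_2$ were arbitrary in $cl(\mathcal{V})$, $\mathcal{V}$ is consistent with respect to $S$.

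The reformulation of preservation as a family of one-step equalities and the inductive description of $\mathcal{V}^s$ are routine. The real content --- and the step I expect to be the main obstacle --- is the combinatorial claim that the families $\mathcal{V}^{t_j}$ along the alternating path exhaust $cl(\mathcal{V})$ and, more importantly, that arbitrary \emph{pairs} drawn from $cl(\mathcal{V})$ co-occur in one such family; this is exactly what upgrades ``consistency'', a condition quantified over all of $cl(\mathcal{V})$, to an equivalent of preservation at \emph{every} stage of polarization rather than only the first. The lever that makes this work is the self-reproducing inclusion $\mathcal{V}^s\subseteq\mathcal{V}^{s\pm}$ coming from the diagonal terms $k_1=k_2$ in \emph{proposition 6}.
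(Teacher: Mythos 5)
Your proof is correct and follows the same route as the paper: reduce preservation of $I[S]$ to one-step equalities via the supermartingale structure, track the family of subspaces carried by $P^s$ through polarization to its closure, and invoke \emph{lemma 7} at each intermediate channel. The paper's own proof is a two-sentence sketch (asserting only that successive $\cap$ and $+$ operations ``reach'' $cl(\mathcal{V})$ in finitely many steps), so your explicit observation that the diagonal terms $k_1=k_2$ in \emph{proposition 6} make the chain $\mathcal{V}^{t_j}$ along the alternating string increasing---which is what guarantees arbitrary pairs of $cl(\mathcal{V})$ co-occur in a single $\mathcal{V}^{t_j}$---supplies the detail the paper leaves implicit.
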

\begin{proof}
Upon the polarization process, we are performing successively the $\cap$ and $+$ operators, which means that we'll reach the closure of $\mathcal{V}$ after a finite number of steps. So $I[S](P)$ is preserved if and only if the above lemma applies to $cl(\mathcal{V})$.
\end{proof}

The above corollary gives a characterization for a combination of linear channels to preserve $I[S](P)$. However, this characterization involves using the closure operator. The next proposition gives a sufficient condition that uses only the initial configuration of subspaces $\mathcal{V}$. This proposition gives some ``geometric'' view of what the subspaces should look like if we don't want to lose.\\

\begin{myprop}
If there exists a subspace $V_S$ of dimension $|S|$ whose projection on $S$ is $\mathbb{F}_q^S$ (i.e. $\emph{proj}_S(V_S)=\mathbb{F}_q^S$), such that for every $V\in\mathcal{V}$ we have $\emph{proj}_S(V_S\cap V)=\emph{proj}_S(V)$, then $I[S](P)$ is preserved upon the polarization process. In other words, if every subspace in $\mathcal{V}$ passes through $V_S$ ``orthogonally'' to $S$, then $I[S](P)$ is preserved upon the polarization process.
\end{myprop}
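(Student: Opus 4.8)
The plan is to reduce the statement to the consistency criterion of the corollary above: by that corollary it suffices to prove that $\mathcal{V}$ is consistent with respect to $S$, that is, that $\textrm{proj}_S(V_1\cap V_2)=\textrm{proj}_S(V_1)\cap\textrm{proj}_S(V_2)$ for all $V_1,V_2\in cl(\mathcal{V})$. I would do this by producing a family $\mathcal{W}$ of subspaces of $\mathbb{F}_q^m$ that contains $\mathcal{V}$, is closed under $\cap$ and $+$, and on which the consistency identity holds automatically; then $cl(\mathcal{V})\subseteq\mathcal{W}$ and the proposition follows. Write $K=\ker(\textrm{proj}_S)=\{\vec{x}\in\mathbb{F}_q^m:x_i=0\;\forall i\in S\}$. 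The first step is the reformulation that, for any subspace $V$, $\textrm{proj}_S(V_S\cap V)=\textrm{proj}_S(V)$ holds if and only if $V=(V_S\cap V)+(V\cap K)$ (forward: for $\vec{v}\in V$ choose $\vec{w}\in V_S\cap V$ with $\textrm{proj}_S(\vec{w})=\textrm{proj}_S(\vec{v})$, so $\vec{v}-\vec{w}\in V\cap K$; converse: apply $\textrm{proj}_S$ and use $\textrm{proj}_S(V\cap K)=\{\vec{0}\}$). Let $\mathcal{W}$ be the set of subspaces satisfying this condition; the hypothesis of the proposition says precisely that $\mathcal{V}\subseteq\mathcal{W}$.

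The only place the hypothesis $\dim V_S=|S|$ is used is to guarantee that $\textrm{proj}_S$ restricts to an isomorphism of $V_S$ onto $\mathbb{F}_q^S$ (a surjective linear map between spaces of equal dimension), so that $V_S\cap K=\{\vec{0}\}$. Granting this, closure of $\mathcal{W}$ under $+$ is routine: for $V,V'\in\mathcal{W}$ one has $V+V'=(V_S\cap V)+(V_S\cap V')+(V\cap K)+(V'\cap K)\subseteq\big(V_S\cap(V+V')\big)+\big((V+V')\cap K\big)\subseteq V+V'$, hence equality and $V+V'\in\mathcal{W}$. Closure under $\cap$ is the one step that needs a short argument: for $V,V'\in\mathcal{W}$ and $\vec{z}\in V\cap V'$, write $\vec{z}=\vec{a}+\vec{b}=\vec{a}'+\vec{b}'$ with $\vec{a}\in V_S\cap V$, $\vec{b}\in V\cap K$, $\vec{a}'\in V_S\cap V'$, $\vec{b}'\in V'\cap K$; then $\vec{a}-\vec{a}'\in V_S\cap K=\{\vec{0}\}$, so $\vec{a}=\vec{a}'\in V_S\cap V\cap V'$ with $\textrm{proj}_S(\vec{a})=\textrm{proj}_S(\vec{z})$, which gives $\textrm{proj}_S(V_S\cap V\cap V')=\textrm{proj}_S(V\cap V')$, i.e.\ $V\cap V'\in\mathcal{W}$. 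Therefore $cl(\mathcal{V})\subseteq\mathcal{W}$.

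The last step is to verify the consistency identity on $\mathcal{W}$: for $V_1,V_2\in\mathcal{W}$ the inclusion $\textrm{proj}_S(V_1\cap V_2)\subseteq\textrm{proj}_S(V_1)\cap\textrm{proj}_S(V_2)$ is automatic, and given $\bar{v}\in\textrm{proj}_S(V_1)\cap\textrm{proj}_S(V_2)$ pick $\vec{w}_1\in V_S\cap V_1$ and $\vec{w}_2\in V_S\cap V_2$ with $\textrm{proj}_S(\vec{w}_1)=\textrm{proj}_S(\vec{w}_2)=\bar{v}$; then $\vec{w}_1-\vec{w}_2\in V_S\cap K=\{\vec{0}\}$, so $\vec{w}_1=\vec{w}_2\in V_1\cap V_2$ and $\bar{v}\in\textrm{proj}_S(V_1\cap V_2)$. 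Applying this to all $V_1,V_2\in cl(\mathcal{V})\subseteq\mathcal{W}$ shows $\mathcal{V}$ is consistent with respect to $S$, and the corollary above then yields that $I[S](P)$ is preserved upon the polarization process. I expect the main obstacle to be the conceptual first step — isolating the right invariant family $\mathcal{W}$ and realizing that membership in it is cleanly captured by the decomposition $V=(V_S\cap V)+(V\cap K)$; once that viewpoint is adopted, every remaining step collapses to the single observation $V_S\cap K=\{\vec{0}\}$, with closure under intersection being the only part that is not completely immediate.
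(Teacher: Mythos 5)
Your proof is correct and follows essentially the same approach as the paper's: both arguments show that the family of subspaces $V$ satisfying $\textrm{proj}_S(V_S\cap V)=\textrm{proj}_S(V)$ is closed under $\cap$ and $+$ (hence contains $cl(\mathcal{V})$) and then verify the consistency identity on that family, with the invertibility of $\textrm{proj}_S$ restricted to $V_S$ (equivalently $V_S\cap K=\{\vec{0}\}$) doing all the work. Your reformulation via the decomposition $V=(V_S\cap V)+(V\cap K)$ is a clean and slightly more systematic packaging of the same idea, but it is not a different route.
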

\begin{proof}
Let $V_S$ be a subspace satisfying the hypothesis, then it satisfies also the hypothesis if we replace $\mathcal{V}$ by it's closure: If $V_1$ and $V_2$ are two arbitrary subspaces satisfying $$\textrm{proj}_S(V_S\cap V_1)=\textrm{proj}_S(V_1)\;\textrm{and}\;\textrm{proj}_S(V_S\cap V_2)=\textrm{proj}_S(V_2)$$ then $\textrm{proj}_S(V_1)\subset \textrm{proj}_S\big(V_S\cap(V_1+V_2)\big)$ and $\textrm{proj}_S(V_2)\subset \textrm{proj}_S\big(V_S\cap(V_1+V_2)\big)$, which implies $\textrm{proj}_S(V_1+V_2)=\textrm{proj}_S(V_1)+\textrm{proj}_S(V_2)\subset \textrm{proj}_S\big(V_S\cap(V_1+V_2)\big)$. Therefore, $\textrm{proj}_S\big(V_S\cap(V_1+V_2)\big)=\textrm{proj}_S(V_1+V_2)$ since the inverse inclusion is trivial.\\

Now let $\vec{x}\in\textrm{proj}_S(V_1) \cap \textrm{proj}_S(V_2)$, then $\vec{x}\in\textrm{proj}_S(V_1)=\textrm{proj}_S(V_1\cap V_S)$ and similarly $\vec{x}\in\textrm{proj}_S(V_2\cap V_S)$ which implies that there are two vectors $\vec{x_1}\in V_1\cap V_S$ and $\vec{x_2}\in V_2\cap V_S$ such that $\vec{x}=\textrm{proj}_S(\vec{x}_1)=\textrm{proj}_S(\vec{x_2})$. And since $\textrm{proj}_S(V_S)=\mathbb{F}_q^S$ and $\textrm{dim}(V_S)=|S|$, then the mapping $\textrm{proj}_S:V_S\rightarrow\mathbb{F}_q^S$ is invertible and so $\vec{x}_1=\vec{x}_2$ which implies that $\vec{x}\in\textrm{proj}_S(V_1\cap V_2\cap V_S)$. Thus $\textrm{proj}_S(V_1) \cap \textrm{proj}_S(V_2)\subset\textrm{proj}_S(V_1\cap V_2)\subset\textrm{proj}_S(V_1\cap V_2\cap V_S)$. We conclude that $\textrm{proj}_S(V_1) \cap \textrm{proj}_S(V_2)=\textrm{proj}_S(V_1\cap V_2)=\textrm{proj}_S(V_1\cap V_2\cap V_S)$ since the inverse inclusions are trivial.\\

We conclude that the set of subspaces $V$ satisfying $\textrm{proj}_S(V\cap V_S)=\textrm{proj}_S(V)$ is closed under the two operators $\cap$ and $+$. And since $\mathcal{V}$ is a subset of this set, $cl(\mathcal{V})$ is a subset as well. Now let $V_1,V_2\in cl(\mathcal{V})$, then $\textrm{proj}_S(V_S\cap V_1)=\textrm{proj}_S(V_1)$ and $\textrm{proj}_S(V_S\cap V_2)=\textrm{proj}_S(V_2)$. Then $\textrm{proj}_S(V_1) \cap \textrm{proj}_S(V_2)=\textrm{proj}_S(V_1\cap V_2)$ as we have seen in the previous paragraph. We conclude that $\mathcal{V}$ is consistent with respect to $S$ and so $I[S](P)$ is preserved.
\end{proof}

\begin{myconj}
The condition in proposition 7 is necessary.
\end{myconj}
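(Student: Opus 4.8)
The plan is to recast the statement in purely linear--algebraic terms. Write $\Omega:=\ker(\mathrm{proj}_S)=\mathrm{span}\{\vec{e}_k:k\notin S\}\subseteq\mathbb F_q^m$. A subspace $V_S$ of dimension $|S|$ with $\mathrm{proj}_S(V_S)=\mathbb F_q^S$ is exactly a complement of $\Omega$ in $\mathbb F_q^m$, and for such a $V_S$ the requirement $\mathrm{proj}_S(V_S\cap V)=\mathrm{proj}_S(V)$ is equivalent to $V=(V\cap V_S)\oplus(V\cap\Omega)$, i.e.\ to $\mathrm{proj}_S$ mapping $V_S\cap V$ isomorphically onto $\mathrm{proj}_S(V)$; call this ``$V_S$ splits $V$''. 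By Corollary~2 the hypothesis --- that $I[S](P)$ is preserved by polarization --- is the same as: $\mathcal V$ is consistent with respect to $S$. Since it is enough to produce one $V_S$ that splits every member of $cl(\mathcal V)$, and $cl(\mathcal V)$ is again consistent, the problem becomes: \emph{given a sublattice $\mathcal L$ of the subspace lattice of $\mathbb F_q^m$ (closed under $\cap$ and $+$) that is consistent with respect to $S$, find a complement $V_S$ of $\Omega$ that splits every $V\in\mathcal L$.}

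First I would make two easy reductions. Let $V^\ast=\sum_{V\in\mathcal L}V$. If $\mathrm{proj}_S(V^\ast)\subsetneq\mathbb F_q^S$ then all of $\mathcal L$ lies in $V^\ast+\Omega\subsetneq\mathbb F_q^m$, and it suffices to split everything by a complement of $\Omega$ inside $V^\ast+\Omega$ and then extend that subspace to a complement of $\Omega$ in $\mathbb F_q^m$; so one may assume $V^\ast+\Omega=\mathbb F_q^m$. If moreover $V^\ast\cap\Omega=0$ then $\mathrm{proj}_S$ is injective on $V^\ast$, so $V_S:=V^\ast$ already splits every $V\subseteq V^\ast$ and we are done; so one may assume $V^\ast\cap\Omega\neq0$.

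The engine --- and the only place consistency is used --- is a two--subspace \emph{merge lemma}. Call $N\subseteq V$ a \emph{section} of $V$ if $N\cap\Omega=0$ and $\mathrm{proj}_S(N)=\mathrm{proj}_S(V)$ (so $V=N\oplus(V\cap\Omega)$). If $V_1,V_2$ satisfy the consistency identity $\mathrm{proj}_S(V_1\cap V_2)=\mathrm{proj}_S(V_1)\cap\mathrm{proj}_S(V_2)$ and $N_i$ is a section of $V_i$ with $N_1\cap(V_1\cap V_2)=N_2\cap(V_1\cap V_2)=:N_0$ a section of $V_1\cap V_2$, then $N_1+N_2$ is a section of $V_1+V_2$: the only nontrivial point is $(N_1+N_2)\cap\Omega=0$, and if $z+t\in\Omega$ with $z\in N_1$, $t\in N_2$, then $\mathrm{proj}_S(z)=-\mathrm{proj}_S(t)$ lies in $\mathrm{proj}_S(V_1)\cap\mathrm{proj}_S(V_2)=\mathrm{proj}_S(V_1\cap V_2)=\mathrm{proj}_S(N_0)$, so injectivity of $\mathrm{proj}_S$ on $N_1$ and on $N_2$ forces $z,t\in N_0$ and $z+t=0$. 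With this lemma I would build $V_S$ incrementally: list $\mathcal L=\{0=U_0,U_1,\dots,U_N\}$ along a linear extension of the inclusion order ($U_i\subsetneq U_j\Rightarrow i<j$) and keep a chain $0=Z_0\subseteq Z_1\subseteq\cdots$ with $Z_k\cap\Omega=0$ and $Z_k$ splitting $U_0,\dots,U_k$; to pass to $Z_k$, set $\widehat U=\sum\{U_i:i<k,\ U_i\subseteq U_k\}$, and if $\widehat U=U_k$ or $U_k\subseteq Z_{k-1}$ keep $Z_k=Z_{k-1}$ (then $Z_{k-1}$ already splits $U_k$), otherwise extend the section $Z_{k-1}\cap\widehat U$ of $\widehat U$ to a section $T'$ of $U_k$, put $Z_k=Z_{k-1}+T'$, and verify $Z_k\cap\Omega=0$ through the merge lemma applied to $\big(\sum_{i<k}U_i,\ U_k\big)$; finally extend $Z_N$ to a complement of $\Omega$. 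An alternative is an induction on $\dim\Omega$: pick a line $\ell\subseteq V^\ast\cap\Omega$, recurse on $\mathbb F_q^m/\ell$, and lift back, choosing any complement of $\ell$ inside the pulled--back subspace.

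The hard part --- and the reason this is stated only as a conjecture --- is the gluing step in either route. In the incremental construction the delicate case is a $U_k$ with $U_k\subseteq\sum_{i<k}U_i$ that is not, inside $\mathcal L$, the join of the members genuinely below it: one must show $Z_{k-1}$ already splits $U_k$, i.e.\ that $Z_{k-1}\cap U_k$ projects onto all of $\mathrm{proj}_S(U_k)$, and this seems to require a strengthened induction hypothesis (for instance that $Z_{k-1}$ is itself a section of $\sum_{i<k}U_i$) together with consistency of the \emph{whole} closed lattice --- pairwise consistency on $\mathcal V$ alone is not enough. In the quotient route the obstacle is that $\{V+\ell:V\in\mathcal L\}$ need not be closed under $\cap$ when some $V\in\mathcal L$ avoids $\ell$, so the recursion does not directly reach those $V$. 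Both reflect one underlying question: that the obstruction to assembling the local sections of the maps $\mathrm{proj}_S|_V$, $V\in\mathcal L$, into a single global linear map vanishes whenever $\mathcal L$ is consistent. I expect that proving this vanishing, by exploiting that $\mathrm{proj}_S$ restricts to a surjective lattice homomorphism from $\mathcal L$ onto a sublattice of the subspace lattice of $\mathbb F_q^S$ with interval fibres, is the main obstacle.
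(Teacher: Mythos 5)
The paper presents this statement only as a conjecture, with no proof, so there is no argument in the paper to compare yours against. Judged on its own terms, your reformulation is sound: by Corollary~2 the hypothesis becomes consistency of $cl(\mathcal V)$ with respect to $S$, and the condition in Proposition~7 is indeed equivalent to the existence of a complement $V_S$ of $\Omega=\ker(\mathrm{proj}_S)$ that splits every member of $cl(\mathcal V)$ --- the first paragraph of the proof of Proposition~7 shows the set of subspaces split by a fixed $V_S$ is closed under $\cap$ and $+$, so splitting $\mathcal V$ and splitting $cl(\mathcal V)$ are the same demand. The two preliminary reductions (to $V^\ast+\Omega=\mathbb F_q^m$, and away from $V^\ast\cap\Omega=0$) are correct, and the merge lemma is a genuine and correctly proved consequence of consistency: injectivity of $\mathrm{proj}_S$ on each section together with $\mathrm{proj}_S(V_1)\cap\mathrm{proj}_S(V_2)=\mathrm{proj}_S(V_1\cap V_2)=\mathrm{proj}_S(N_0)$ does force $z=n_0=-t$.

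That said, this is not a proof, and you are candid about why. The gap is exactly where you locate it. When processing a $U_k$ with $U_k\subseteq\sum_{i<k}U_i$ but $\widehat U=\sum\{U_i:i<k,\ U_i\subseteq U_k\}\subsetneq U_k$, applying the merge lemma to $V_1=\sum_{i<k}U_i$, $V_2=U_k$ gives $V_1\cap V_2=U_k$, so the lemma's hypothesis --- that $Z_{k-1}\cap U_k$ is a section of $U_k$ --- is exactly the conclusion you need; nothing has been gained. The strengthened invariant you float (that $Z_{k-1}$ is a section of $\sum_{i<k}U_i$) is maintained and does dispatch the case $U_k\not\subseteq\sum_{i<k}U_i$, where $V_1\cap V_2$ equals $\widehat U$ and is already split, but it does not reach the remaining case. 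The quotient route stalls for the closure reason you give. So the sketch neither establishes the conjecture nor rules out a counterexample among non-modular consistent sublattices; an additional idea is still needed, which is consistent with the paper leaving this open.
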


\subsection{Total loss in the dominant face}

After characterizing the non-losing channels, we are now interested in studying the amount of loss in the capacity region. In order to simplify the problem, we only study it in the case of binary input 2-user MAC since we can easily generalize for the general case.\\

Since we only have 5 subspaces of $\mathbb{F}_2^2$, we write $\displaystyle P\equiv \sum_{k=0}^4 p_k\mathcal{C}_{V_k}$, where $V_0$, ..., $V_4$ are the 5 possible subspaces of $\mathbb{F}_2^2$:
\begin{align*}
V_0&=\{(0,0)\}\\
V_1&=\{(0,0),(1,0)\}\\
V_2&=\{(0,0),(0,1)\}\\
V_3&=\{(0,0),(1,1)\}\\
V_4&=\{(0,0),(1,0),(0,1),(1,1)\}
\end{align*}
We have $I[{1}](P)=p_1+p_3+p_4$, $I[{2}](P)=p_2+p_3+p_4$ and $I(P)=I[{1,2}](P)=p_1+p_2+p_3+2p_4$.

\begin{mydef}
Let $\displaystyle P\equiv\sum_{k=0}^4 p_k\mathcal{C}_{V_k}$ and $s\in\{-,+\}^l$, we write $p_k^s$ to denote the component of $V_k$ in $P^s$, i.e. we have $\displaystyle P^s\equiv\sum_{k=0}^4 p_k^s\mathcal{C}_{V_k}$.\\

We denote the average of $p_k^s$ on all possible $s\in\{-,+\}^l$ by $p_k^{(l)}$. i.e. $\displaystyle p_k^{(l)}=\frac{1}{2^l}\sum_{s\in\{-,+\}^l}p_k^s$. $p_k^{(\infty)}$ is the limit of $p_k^{(l)}$ as $l$ tends to infinity.\\

We denote the average of $I[{1}](P^s)$ (resp. $I[{2}](P^s)$ and $I(P^s)$) on all possible $s\in\{-,+\}^l$ by $I_1^{(l)}$ (resp. $I_2^{(l)}$ and $I^{(l)}$). We have $I_1^{(l)}=p_1^{(l)}+p_3^{(l)}+p_4^{(l)}$, $I_2^{(l)}=p_2^{(l)}+p_3^{(l)}+p_4^{(l)}$ and $I^{(l)}=p_1^{(l)}+p_2^{(l)}+p_3^{(l)}+2p_4^{(l)}$. If $l$ tends to infinity we get $I_1^{(\infty)}=p_1^{(\infty)}+p_3^{(\infty)}+p_4^{(\infty)}$, $I_2^{(\infty)}=p_2^{(\infty)}+p_3^{(\infty)}+p_4^{(\infty)}$ and $I^{(\infty)}=p_1^{(\infty)}+p_2^{(\infty)}+p_3^{(\infty)}+2p_4^{(\infty)}$.
\end{mydef}

\begin{mydef}
We say that we have total loss in the dominant face in the polarization process, if the dominant face of the capacity region converges to a single point.
\end{mydef}

\begin{myrem}
The symmetric capacity region after $l$ polarization steps is the average of the symmetric capacity regions of all the channels $P^s$ obtained after $l$ polarization steps ($s\in\{-,+\}^l$). Therefore, this capacity region is given by:

$$\mathcal{J}(P^{(l)})=\{(R_1,R_2):\;0\leq R_1\leq I_1^{(l)},\;0\leq R_2\leq I_2^{(l)},\;0\leq R_1+R_2\leq I^{(l)}\}$$

The above capacity region converges to the ``final capacity region'':

$$\mathcal{J}(P^{(\infty)})=\{(R_1,R_2):\;0\leq R_1\leq I_1^{(\infty)},\;0\leq R_2\leq I_2^{(\infty)},\;0\leq R_1+R_2\leq I^{(\infty)}\}$$

The dominant face converges to a single point if and only if $I^{(\infty)}=I_1^{(\infty)}+I_2^{(\infty)}$, which is equivalent to $p_1^{(\infty)}+p_2^{(\infty)}+ p_3^{(\infty)}+2p_4^{(\infty)}=p_1^{(\infty)}+p_2^{(\infty)}+2p_3^{(\infty)}+2p_4^{(\infty)}$. We conclude that we have total loss in the dominant face if and only if $p_3^{(\infty)}=0$.
\end{myrem}

\begin{mylem}
The order of $p_1,p_2$ and $p_3$ remains the same upon the polarization process. e.g. if $p_1<p_3<p_2$ then $p_1^s<p_3^s<p_2^s$, and if $p_2=p_3<p_1$ then $p_2^s=p_3^s<p_1^s$ for all $s\in\{-,+\}^l$.
\end{mylem}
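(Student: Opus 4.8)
The plan is to track how the probabilities $p_1, p_2, p_3$ transform under a single polarization step and show that the relative order is preserved; the full claim then follows by induction on the length of $s$. By \emph{proposition 6}, the components of $P^-$ and $P^+$ are given by sums over pairs $(k_1,k_2)$ of products $p_{k_1}p_{k_2}$, distributed among the subspaces $V_{k_1}\cap V_{k_2}$ (for $P^-$) and $V_{k_1}+V_{k_2}$ (for $P^+$) respectively. So the first step is to compute, for each ordered pair $(k_1,k_2)\in\{0,\dots,4\}^2$, the intersection $V_{k_1}\cap V_{k_2}$ and the sum $V_{k_1}+V_{k_2}$ using the explicit list of the five subspaces of $\mathbb{F}_2^2$. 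This is a finite $25$-entry table in each case.

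From that table one reads off explicit formulas for $p_1^-, p_2^-, p_3^-$ and $p_1^+, p_2^+, p_3^+$ as quadratic polynomials in $p_0,\dots,p_4$. The key structural observation I expect to see is a symmetry: the roles of indices $1, 2, 3$ (i.e.\ the three one-dimensional subspaces $V_1, V_2, V_3$) are interchangeable in the transformation rules, because $V_1, V_2, V_3$ are geometrically symmetric in $\mathbb{F}_2^2$ (any permutation of these three lines is induced by an invertible linear map fixing $V_0$ and $V_4$). Concretely, I would try to show that there is a fixed bilinear form $B$ and fixed linear forms such that $p_i^{-} = B(p)+ (\text{something symmetric}) + c\, p_i$ and $p_i^{+}$ has the analogous shape for $i\in\{1,2,3\}$, with the \emph{same} coefficient $c$ for all three indices and all the remaining terms independent of which of $1,2,3$ we picked. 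If that holds, then $p_i^{-} - p_j^{-} = c\,(p_i - p_j)$ with $c\ge 0$, and similarly for $+$, so the sign of $p_i - p_j$ is preserved (and equalities are preserved) under one step; induction on $|s|$ finishes the proof.

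The main obstacle will be verifying that the ``non-$p_i$'' part of each formula really is symmetric in $\{1,2,3\}$ and that the diagonal coefficient $c$ is genuinely the same (and nonnegative) in both the $-$ and $+$ cases — this requires carefully bookkeeping the pair table, in particular handling the pairs $(i,i)$, $(i,j)$ with $i\ne j$ in $\{1,2,3\}$, and the pairs involving $0$ or $4$. For instance $V_1 \cap V_1 = V_1$ contributes $p_1^2$ to $p_1^-$, while $V_i\cap V_j = V_0$ for distinct $i,j\in\{1,2,3\}$, and $V_0\cap V_i = V_0$, $V_4\cap V_i = V_i$; on the $+$ side $V_i + V_j = V_4$ for distinct $i,j$, $V_i+V_i=V_i$, $V_0+V_i=V_i$, $V_4+V_i=V_4$. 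Collecting these, one should find $p_i^- = p_i^2 + 2p_4 p_i$ (contributions from $(i,i)$, $(i,4)$, $(4,i)$) and $p_i^+ = p_i^2 + 2 p_0 p_i$ (from $(i,i)$, $(i,0)$, $(0,i)$), so that $p_i^- - p_j^- = (p_i - p_j)(p_i + p_j + 2p_4)$ and $p_i^+ - p_j^+ = (p_i - p_j)(p_i + p_j + 2p_0)$; since the second factor is always nonnegative, the order and all equalities among $p_1, p_2, p_3$ are preserved, and the lemma follows by induction on the length of $s$.
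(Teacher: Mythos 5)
Your proposal is correct and follows essentially the same route as the paper: you derive the same explicit one-step update formulas $p_i^- = p_i^2 + 2p_ip_4$ and $p_i^+ = p_i^2 + 2p_ip_0$ for $i\in\{1,2,3\}$ from the pair table, observe the order is preserved in one step, and conclude by induction on $|s|$. The factorization $p_i^{\pm}-p_j^{\pm}=(p_i-p_j)(p_i+p_j+2p_{4\text{ or }0})$ is a slightly more explicit justification of the ``easily seen'' order preservation, but the argument is identical in substance.
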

\begin{proof}
We have $\displaystyle P^-=\sum_{k=0}^4\sum_{k'=0}^4 p_kp_{k'}\mathcal{C}_{V_k\cap V_{k'}}$ and $\displaystyle P^+=\sum_{k=0}^4\sum_{k'=0}^4 p_kp_{k'}\mathcal{C}_{V_k+V_{k'}}$. Therefore, we have:
\begin{align*}
p_0^-&=p_0^2+2p_0(p_1+p_2+p_3+p_4)+2(p_1p_2+p_2p_3+p_1p_3)\\
p_1^-&=p_1^2+2p_1p_4\\
p_2^-&=p_2^2+2p_2p_4\\
p_3^-&=p_3^2+2p_3p_4\\
p_4^-&=p_4^2\\
\\
p_0^+&=p_0^2\\
p_1^+&=p_1^2+2p_1p_0\\
p_2^+&=p_2^2+2p_2p_0\\
p_3^+&=p_3^2+2p_3p_0\\
p_4^+&=p_4^2+2p_4(p_1+p_2+p_3+p_4)+2(p_1p_2+p_2p_3+p_1p_3)
\end{align*}

We can easily see that the order of $p_1^-$,$p_2^-$ and $p_3^-$ is the same as that of $p_1,p_2$ and $p_3$. This is also true for $p_1^+$,$p_2^+$ and $p_3^+$. By using a simple induction on $l$, we conclude that the order of $p_1^s,p_2^s$ and $p_3^s$ is the same as that of $p_1,p_2$ and $p_3$ for all $s\in\{-,+\}^l$.
\end{proof}

\begin{mylem}
For $k\in\{1,2,3\}$, if $\exists k'\in\{1,2,3\}\setminus\{k\}$ such that $p_k\leq p_{k'}$ then $$p_k^{(\infty)}=\lim_{l\rightarrow\infty}\frac{1}{2^l}\sum_{s\in\{-,+\}^l}p_k^s=0$$ In other words, the component of $V_k$ is killed by that of $V_{k'}$.
\end{mylem}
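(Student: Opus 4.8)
The plan is to track the averaged weight $p_k^{(l)}$ directly and to show that it obeys a self-contracting recursion of the form $x_{l+1}\le x_l-x_l^2$, which forces $x_l\to 0$; the hypothesis $p_k\le p_{k'}$ will enter only once, through \emph{lemma 10}.

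First I would write down the averaged recursion. From the explicit weight transforms computed in the proof of \emph{lemma 10}, for $k\in\{1,2,3\}$ one has $p_k^-=p_k^2+2p_kp_4$ and $p_k^+=p_k^2+2p_kp_0$ at every polarization step (these hold verbatim at each level because $P^t$ keeps the form $\sum_j p_j^t\mathcal C_{V_j}$), so $\tfrac12(p_k^-+p_k^+)=p_k(p_k+p_0+p_4)$. Summing $p_k^s$ over all $s\in\{-,+\}^{l+1}$ and grouping each such $s$ by its length-$l$ prefix gives
$$p_k^{(l+1)}=\frac{1}{2^l}\sum_{t\in\{-,+\}^l}p_k^t\big(p_k^t+p_0^t+p_4^t\big).$$
Since the five weights of $P^t$ sum to $1$, I would rewrite $p_k^t+p_0^t+p_4^t=1-p_{k'}^t-p_{k''}^t$, where $k''$ is the third index of $\{1,2,3\}$, obtaining
$$p_k^{(l+1)}=p_k^{(l)}-\frac{1}{2^l}\sum_{t\in\{-,+\}^l}p_k^t\big(p_{k'}^t+p_{k''}^t\big).$$

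Next, I would invoke \emph{lemma 10}: since $p_k\le p_{k'}$, the order of $p_1,p_2,p_3$ is preserved, so $p_k^t\le p_{k'}^t$ for every $t$, whence $p_k^t(p_{k'}^t+p_{k''}^t)\ge p_k^tp_{k'}^t\ge (p_k^t)^2$. Combining this with the Cauchy--Schwarz (Jensen) bound $\frac{1}{2^l}\sum_t(p_k^t)^2\ge\big(\frac{1}{2^l}\sum_t p_k^t\big)^2=(p_k^{(l)})^2$ yields $p_k^{(l+1)}\le p_k^{(l)}-(p_k^{(l)})^2$. Then $(p_k^{(l)})_{l\ge0}$ is non-negative and non-increasing, hence converges to some $L=p_k^{(\infty)}\ge0$; passing to the limit in the last inequality gives $L\le L-L^2$, so $L^2\le0$, i.e.\ $L=0$, which is the claim.

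I do not expect a serious obstacle: the computation is essentially bookkeeping once the weight recursions from \emph{lemma 10} are available. The one delicate point is the inequality $p_k^t(p_{k'}^t+p_{k''}^t)\ge(p_k^t)^2$, which is precisely where the hypothesis is used, since it is what allows discarding the $p_{k''}^t$ term while still dominating $(p_k^t)^2$; the statement only needs \emph{some} index of $\{1,2,3\}\setminus\{k\}$ to weakly dominate $p_k$, so if that index is the other one a harmless relabeling suffices.
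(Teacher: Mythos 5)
Your proof is correct, and it takes a genuinely different route from the paper. The paper's proof invokes \emph{theorem 2} (the main polarization result) to conclude that $(p_0^s,\dots,p_4^s)$ converges almost surely to one of the five standard unit vectors, then uses the order preservation of \emph{lemma 10} to rule out the vector with a $1$ in position $k$; almost sure convergence of $p_k^s$ to $0$ then yields $p_k^{(\infty)}=0$ by bounded convergence. Your argument replaces this heavy machinery with a self-contained contraction: from the weight recursions you derive $p_k^{(l+1)}=p_k^{(l)}-\frac{1}{2^l}\sum_t p_k^t(p_{k'}^t+p_{k''}^t)$, and the order preservation of \emph{lemma 10} (your only place where the hypothesis $p_k\le p_{k'}$ is used) together with Jensen gives $p_k^{(l+1)}\le p_k^{(l)}-(p_k^{(l)})^2$, forcing the limit to be $0$. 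The paper's proof is shorter given that \emph{theorem 2} is already established; yours avoids that theorem entirely and, as a bonus, the recursion $x_{l+1}\le x_l-x_l^2$ gives a quantitative decay of order $1/l$, which the paper's argument does not provide. All steps check out: the averaged recursion is obtained exactly by grouping $s\in\{-,+\}^{l+1}$ by prefix, the rewriting $p_k^t+p_0^t+p_4^t=1-p_{k'}^t-p_{k''}^t$ is correct since the five weights sum to $1$, and the passage to the limit $L\le L-L^2$ is justified because the sequence is non-negative and non-increasing.
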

\begin{proof}
We know from \emph{theorem 2} that the channel $P^s$ converges almost surely to a deterministic linear channel as $l$ tends to infinity (we treat $s$ as being a uniform random variable in $\{-,+\}^l$). Therefore, the vector $(p_0^s,p_1^s,p_2^s,p_3^s,p_4^s)$ converges almost surely to one of the following vectors: $(1,0,0,0,0)$, $(0,1,0,0,0)$, $(0,0,1,0,0)$, $(0,0,0,1,0)$ or $(0,0,0,0,1)$. In particular, $p_k^s$ converges almost surely to 0 or 1.\\

Since $p_k\leq p_{k'}$ then $p_k^s\leq p_{k'}^s$ for any $s$, and so $p_k^s$ cannot converge to 1 because otherwise the limit of $p_{k'}^s$ would also be equal to 1, which is not possible since none of the 5 possible vectors contain two ones. We conclude that $p_k^s$ converges almost surely to 0, which means that $p_k^{(l)}$ (the average of $p_k^s$ on all possible $s\in\{-,+\}^l$) converges to 0. Therefore, $p_k^{(\infty)}=0$.
\end{proof}

\begin{myprop}
If $p_3\leq\max\{p_1,p_2\}$, then we have total loss in the dominant face.
\end{myprop}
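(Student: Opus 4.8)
The statement is essentially a repackaging of \emph{Lemma 11} together with the characterization of total loss established in \emph{Remark 5}, so the plan is short. First I would recall from \emph{Remark 5} that, in the binary-input $2$-user case, we have total loss in the dominant face if and only if $p_3^{(\infty)}=0$; thus it suffices to show that the hypothesis $p_3\leq\max\{p_1,p_2\}$ forces $p_3^{(\infty)}=0$.

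Next I would observe that the inequality $p_3\leq\max\{p_1,p_2\}$ means precisely that either $p_3\leq p_1$ or $p_3\leq p_2$. In either case there exists an index $k'\in\{1,2\}=\{1,2,3\}\setminus\{3\}$ with $p_3\leq p_{k'}$. This is exactly the hypothesis of \emph{Lemma 11} applied with $k=3$, so that lemma yields $p_3^{(\infty)}=\lim_{l\to\infty}\frac{1}{2^l}\sum_{s\in\{-,+\}^l}p_3^s=0$: the component of $V_3$ is killed by that of $V_{k'}$.

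Combining the two observations: $p_3^{(\infty)}=0$, hence by \emph{Remark 5} the dominant face converges to a single point, i.e. we have total loss in the dominant face.

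There is no real obstacle here — the substantive content (the almost-sure polarization of the combination of linear channels into one of the five extremal configurations, and the monotonicity of the components) has already been absorbed into \emph{Lemma 11} and \emph{Theorem 2}. The only points to be careful about are (i) checking that ``$p_3\leq\max\{p_1,p_2\}$'' genuinely produces a valid competitor index $k'$ in $\{1,2,3\}\setminus\{3\}$ rather than, say, $k'=4$ or $k'=0$ (which it does, since $\max$ is attained at index $1$ or $2$), and (ii) quoting \emph{Remark 5} in the correct direction, namely that $p_3^{(\infty)}=0$ is both necessary and sufficient for total loss, so that establishing the vanishing of $p_3^{(\infty)}$ closes the argument.
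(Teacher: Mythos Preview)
Your proposal is correct and mirrors the paper's own proof exactly: apply the ``component-killing'' lemma (the paper's \emph{Lemma 9}, your \emph{Lemma 11}) with $k=3$ to get $p_3^{(\infty)}=0$, then invoke the characterization from the remark (the paper's \emph{Remark 2}, your \emph{Remark 5}) to conclude total loss. Aside from the label numbering, there is no difference in strategy or detail.
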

\begin{proof}
If $p_3\leq\max\{p_1,p_2\}$, then by the previous lemma we have $p_3^{(\infty)}=0$. Therefore, we have total loss in the dominant face (see \emph{remark 2}).
\end{proof}

\begin{mycor}
If we do not have total loss in the dominant face then the final capacity region (to which the capacity region is converging) must be symmetric.
\end{mycor}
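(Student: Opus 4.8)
The plan is to run the contrapositive of the previous proposition together with the ``killing'' lemma so as to force the limiting single-user capacities to coincide, whence the limiting region is manifestly symmetric under exchange of the two users.

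First I would translate the hypothesis by means of \emph{Remark 2}: not having total loss in the dominant face means exactly that $p_3^{(\infty)}\neq 0$. Next I would apply the contrapositive of the previous proposition: since $p_3\leq\max\{p_1,p_2\}$ would imply total loss, we must have $p_3>\max\{p_1,p_2\}$, and in particular $p_1\leq p_3$ and $p_2\leq p_3$ for the original channel parameters. Then I would invoke the lemma asserting that $p_k^{(\infty)}=0$ as soon as $p_k\leq p_{k'}$ for some $k'\in\{1,2,3\}\setminus\{k\}$: taking $k=1,\ k'=3$ gives $p_1^{(\infty)}=0$, and taking $k=2,\ k'=3$ gives $p_2^{(\infty)}=0$.

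Finally I would substitute into the identities $I_1^{(\infty)}=p_1^{(\infty)}+p_3^{(\infty)}+p_4^{(\infty)}$ and $I_2^{(\infty)}=p_2^{(\infty)}+p_3^{(\infty)}+p_4^{(\infty)}$ to obtain $I_1^{(\infty)}=I_2^{(\infty)}=p_3^{(\infty)}+p_4^{(\infty)}$. By \emph{Remark 2} the limiting region is $\mathcal{J}(P^{(\infty)})=\{(R_1,R_2):0\leq R_1\leq I_1^{(\infty)},\ 0\leq R_2\leq I_2^{(\infty)},\ 0\leq R_1+R_2\leq I^{(\infty)}\}$, and since its two single-user bounds are now equal the region is invariant under the swap $(R_1,R_2)\mapsto(R_2,R_1)$, i.e. it is symmetric. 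The argument is essentially immediate; the only point that needs a little care — and the closest thing to an obstacle — is that the previous proposition and the killing lemma are stated in terms of the original probabilities $p_1,p_2,p_3$ rather than the averaged quantities $p_k^{(l)}$, so one must be careful to read the strict ordering $p_3>\max\{p_1,p_2\}$ off the initial channel (which is exactly what the contrapositive of the previous proposition provides) before feeding it into the killing lemma.
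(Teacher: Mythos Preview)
Your proof is correct and follows essentially the same approach as the paper: take the contrapositive of the previous proposition to get $p_3>\max\{p_1,p_2\}$, apply the killing lemma (Lemma~9) to conclude $p_1^{(\infty)}=p_2^{(\infty)}=0$, and read off $I_1^{(\infty)}=I_2^{(\infty)}=p_3^{(\infty)}+p_4^{(\infty)}$. Your write-up is slightly more explicit (invoking Remark~2 at both ends and spelling out the swap-invariance), but the argument is the same.
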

\begin{proof}
From the above proposition we conclude that $p_3>\max\{p_1,p_2\}$ and from lemma 9 we conclude that $p_1^{(\infty)}=p_2^{(\infty)}=0$. Thus, $I_1^{(\infty)}=I_2^{(\infty)}=p_3^{(\infty)}+p_4^{(\infty)}$ and the final capacity region is symmetric.  In particular, it contains the ``equal-rates'' rate vector.
\end{proof}

\begin{myconj}
The condition in \emph{proposition 9} is necessary for having total loss in the dominant face. i.e. if $p_3>\max\{p_1,p_2\}$, then we do not have total loss in the dominant face.
\end{myconj}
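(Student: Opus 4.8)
The plan is to reduce the statement, via \emph{Remark 2}, to showing that $p_3^{(\infty)}>0$ whenever $p_3>\max\{p_1,p_2\}$. By \emph{Lemma 8} we may relabel so that $p_1\le p_2<p_3$, and then $p_1^s\le p_2^s<p_3^s$ holds for every $s\in\{-,+\}^l$ and every $l$. The first step would be to turn the explicit recursions for $(p_k^-)$ and $(p_k^+)$ into exact telescoping identities for the averages. Writing $\alpha_l=\frac1{2^l}\sum_{s\in\{-,+\}^l}p_1^sp_2^s$, $\beta_l=\frac1{2^l}\sum_s p_1^sp_3^s$ and $\gamma_l=\frac1{2^l}\sum_s p_2^sp_3^s$, the recursions give $\tfrac12(p_1^-+p_1^+)=p_1(1-p_2-p_3)$, $\tfrac12(p_2^-+p_2^+)=p_2(1-p_1-p_3)$ and $\tfrac12(p_3^-+p_3^+)=p_3(1-p_1-p_2)$, hence $p_1^{(l+1)}=p_1^{(l)}-\alpha_l-\beta_l$, $p_2^{(l+1)}=p_2^{(l)}-\alpha_l-\gamma_l$ and $p_3^{(l+1)}=p_3^{(l)}-\beta_l-\gamma_l$.

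Summing these and using \emph{Lemma 9} (which gives $p_1^{(\infty)}=p_2^{(\infty)}=0$, so that $\alpha:=\sum_l\alpha_l$, $\beta:=\sum_l\beta_l$, $\gamma:=\sum_l\gamma_l$ are finite with $\alpha+\beta=p_1$ and $\alpha+\gamma=p_2$) yields the clean formula $p_3^{(\infty)}=p_3-\beta-\gamma=p_3-p_1-p_2+2\alpha$. Thus the statement becomes equivalent to the sharp inequality $\alpha>\tfrac12(p_1+p_2-p_3)$, where $\alpha$ is the total ``cross-mass'' accumulated between the $V_1$ and $V_2$ components along the polarization process. Since $\alpha\ge\alpha_0=p_1p_2$, one immediately gets $p_3^{(\infty)}\ge p_3-(p_1+p_2-2p_1p_2)$, which settles the claim in the regime $p_3\ge p_1+p_2-2p_1p_2$ — in particular whenever $\min\{p_1,p_2\}=0$ or $\max\{p_1,p_2\}\ge\tfrac12$. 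So the real content is to lower-bound $\alpha$ in the remaining regime $0<p_1\le p_2<\tfrac12$ with $\max\{p_1,p_2\}<p_3<p_1+p_2-2p_1p_2$.

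For that regime I would look for a bounded function $\Phi$ on the simplex that is a submartingale under one polarization step on the invariant region $\{p_1\le p_2<p_3\}$, i.e. $\tfrac12\big(\Phi(p^-)+\Phi(p^+)\big)\ge\Phi(p)$, with $\Phi(\mathcal C_{V_0})=\Phi(\mathcal C_{V_4})=0$, $\Phi(\mathcal C_{V_3})=1$, and $\Phi>0$ on $\{p_3>\max\{p_1,p_2\}\}$. Given such a $\Phi$, the almost sure convergence of $P_n$ to one of $\mathcal C_{V_0},\mathcal C_{V_3},\mathcal C_{V_4}$ (the vertices $\mathcal C_{V_1},\mathcal C_{V_2}$ being excluded by the preserved strict inequality $p_2^s<p_3^s$), together with the bounded convergence theorem, gives $p_3^{(\infty)}=\lim_l E[\Phi(P_l)]\ge\Phi(P)>0$.

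The hard part is constructing such a $\Phi$. The most natural algebraic guess, $\Phi(p)=p_3-(\sqrt{p_1}-\sqrt{p_2})^2$, has exactly the required boundary values and is positive on the target region, but it fails the submartingale inequality already on the face $p_0=p_4=0$, where $\tfrac12(\Phi(p^-)+\Phi(p^+))=p_3-p_1-p_2+2p_1p_2$ can be strictly below $\Phi(p)$. A correct $\Phi$ must therefore use $p_0$ and $p_4$ and the preserved ordering in an essential way; alternatively, one could try to bound $\alpha$ directly by a two-scale estimate, showing that the first few polarization steps contribute a definite amount beyond $p_1p_2$ to $\alpha$ while the tail $\sum_{l\ge L}\alpha_l$ is controlled by the fast-polarization bounds of \emph{Theorem 3}. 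Pinning down this quantitative lower bound on $\alpha$ (equivalently, exhibiting the submartingale $\Phi$) is precisely the obstacle, which is why the statement is left here as a conjecture.
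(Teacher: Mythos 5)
The statement you are examining is \emph{Conjecture 2} in the paper: the paper itself offers no proof, so there is nothing to compare your attempt against, and any completed argument here would be new. Your proposal explicitly declines to claim a proof, and with that caveat your reduction is correct and genuinely sharpens the problem. The one-step averages $\tfrac12(p_k^-+p_k^+)=p_k\big(1-\sum_{j\in\{1,2,3\}\setminus\{k\}}p_j\big)$ give exactly your telescoping recursions, so with Lemma 9 (which kills $p_1^{(\infty)}$ and $p_2^{(\infty)}$ under the preserved ordering $p_1\le p_2<p_3$) you correctly obtain $p_3^{(\infty)}=p_3-p_1-p_2+2\alpha$ and the equivalence of the conjecture with $\alpha>\tfrac12(p_1+p_2-p_3)$. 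The crude bound $\alpha\ge\alpha_0=p_1p_2$ does dispatch the stated regime, and in fact slightly more: if $p_1,p_2>0$ then $p_1^{\pm},p_2^{\pm}>0$, so $\alpha_1>0$ and $\alpha>p_1p_2$ strictly, which also handles the borderline $p_3=p_1+p_2-2p_1p_2$.

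Two remarks. First, a small computational slip: on the face $p_0=p_4=0$ one has $p_k^-=p_k^+=p_k^2$ for $k\in\{1,2,3\}$, so $\tfrac12\big(\Phi(p^-)+\Phi(p^+)\big)=p_3^2-(p_1-p_2)^2$, not $p_3-p_1-p_2+2p_1p_2$; your conclusion that the candidate $\Phi(p)=p_3-(\sqrt{p_1}-\sqrt{p_2})^2$ fails the submartingale inequality there is nonetheless right, since $\Phi(p)-\tfrac12\big(\Phi(p^-)+\Phi(p^+)\big)=2\sqrt{p_1p_2}\big(1-2\sqrt{p_1p_2}\big)>0$ whenever $0<p_1p_2<\tfrac14$. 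Second, and more importantly, the genuine gap is precisely where you locate it: one still needs the lower bound $\alpha>\tfrac12(p_1+p_2-p_3)$ in the regime $0<p_1\le p_2<p_3<p_1+p_2-2p_1p_2$, or equivalently a bounded one-step submartingale $\Phi$ on the invariant region $\{p_1\le p_2<p_3\}$ with the stated boundary values. Neither your proposal nor the paper supplies this, so the statement remains open; your analysis is a useful, correct reformulation rather than a proof, and you are right to present it as such.
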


\section{Conclusion}

We have seen in this report how we can construct reliable polar codes for any $m$-user $MAC$ with inputs in $\mathbb{F}_q$. We have seen that for $0<\epsilon$ and $\beta<\frac{1}{2}$, a polar code of length $N$ can be constructed such that its sum rate is within $\epsilon$ from the sum capacity of the channel, and the probability of error is less than $2^{-N^\beta}$.\\

We have seen also that although the sum capacity is achievable with polar codes, we may lose some rate vectors from the capacity region upon polarization. We have studied this loss in the case where the channel is a combination of linear channels, and we derived a characterization of non-losing channels in this special case. We have also derived a sufficient condition for having total loss in the dominant face in the capacity region (i.e. the dominant face converges to a single point) in the case of binary input 2-user MAC.\\

Several questions are still open, the most important one is whether we can find a coding scheme, based on polar codes, in which all the symmetric capacity region is achievable.

\newpage

\bibliographystyle{abbrv}
\bibliography{PolarCodesForMAC}

\begin{thebibliography}{1}

\bibitem{AbbeTelatar}
E.~Abbe and E.~Telatar.
\newblock Mac polar codes and matroids.
\newblock In {\em Information Theory and Applications Workshop (ITA), 2010}, 31
  2010.

\bibitem{Arikan}
E.~Ar{\i}kan.
\newblock Channel polarization: A method for constructing capacity-achieving
  codes for symmetric binary-input memoryless channels.
\newblock {\em Information Theory, IEEE Transactions on}, 55(7):3051 --3073,
  2009.

\bibitem{ArikanTelatar}
E.~Ar{\i}kan and E.~Telatar.
\newblock On the rate of channel polarization.
\newblock In {\em Information Theory, 2009. ISIT 2009. IEEE International
  Symposium on}, 28 2009.

\bibitem{Cover}
T.~Cover and J.~Thomas.
\newblock {\em Elements of Information Theory}.
\newblock Hoboken, NJ: John Wiley \& Sons, second edition, 2006.

\bibitem{Sasoglu}
E.~\c{S}a\c{s}o\u{g}lu.
\newblock An entropy inequality for q-ary random variables and its application
  to channel polarization.
\newblock In {\em Information Theory Proceedings (ISIT), 2010 IEEE
  International Symposium on}, pages 1360 --1363, 2010.

\bibitem{SasogluTelAri}
E.~\c{S}a\c{s}o\u{g}lu, E.~Telatar, and E.~Ar{\i}kan.
\newblock Polarization for arbitrary discrete memoryless channels.
\newblock In {\em Information Theory Workshop, 2009. ITW 2009. IEEE}, pages 144
  --148, 2009.

\bibitem{SasogluTelYeh}
E.~\c{S}a\c{s}o\u{g}lu, E.~Telatar, and E.~Yeh.
\newblock Polar codes for the two-user multiple-access channel.
\newblock {\em CoRR}, abs/1006.4255, 2010.

\end{thebibliography}

\end{document}